\newtheorem{Assumption}{Assumption}
\newtheorem{Algorithm}{Algorithm}
\newtheorem{Definition}{Definition}
\newtheorem{Lemma}{Lemma}
\newtheorem{Problem}{Problem}
\newtheorem{Remark}{Remark}
\newtheorem{Theorem}{Theorem}
\newtheorem{Corollary}{Corollary}
\newcommand{\bs}{\begin{small}}
\newcommand{\bsc}{\end{small}}
\newcommand{\txtblue}{\textcolor{black}}
\begin{document}
\title{Cross-Layer MIMO Transceiver Optimization  for Multimedia Streaming in  Interference Networks}

\author
{\IEEEauthorblockN{Fan Zhang, \emph{StMIEEE}, Vincent K. N. Lau, \emph{FIEEE}} \thanks{Fan Zhang and Vincent K. N. Lau are with Department of Electronic and Computer Engineering, Hong Kong University of Science and Technology, Hong Kong.}
} 
\maketitle

\begin{abstract}
In this paper, we consider dynamic precoder/decorrelator optimization for multimedia streaming in MIMO interference networks.  We propose a truly cross-layer framework in the sense that the optimization objective is the application level  performance metrics for multimedia streaming, namely the {\em playback interruption} and {\em buffer overflow} probabilities. The optimization variables are the MIMO precoders/decorrelators at the transmitters and the receivers, which are adaptive to both the instantaneous channel condition and the playback queue length. The problem is a challenging multi-dimensional stochastic optimization problem and brute-force solution has exponential complexity. By exploiting the underlying timescale separation and special structure in the problem, we derive a closed-form approximation of the value function based on continuous time perturbation.  Using this approximation, we propose a low complexity dynamic MIMO precoder/decorrelator control algorithm by solving an equivalent weighted MMSE problem. We also establish the technical conditions for asymptotic optimality of the low complexity control algorithm.  Finally, the proposed scheme is compared  with various baselines through simulations and it is shown that significant  performance gain can be achieved.
\end{abstract}

\section{introduction}
\subsubsection{Background}
There is a surge of interest in multimedia streaming in wireless systems and high quality real-time multimedia streaming applications pose great challenges to the design of the future wireless systems. In this paper, we consider multimedia streaming in MIMO interference networks where multiple BSs simultaneously deliver multimedia data to their associated mobile users over a shared wireless link. The performance of the  interference network is fundamentally limited by the inter-cell interference from the cross links. There are many existing works on the interference mitigation for MIMO interference networks. In \cite{IA1}, \cite{IA2}, the authors show that interference alignment  can achieve optimal degrees of freedom  of a $K$-user interference network using infinite dimension time or frequency symbol extension.  In \cite{optMISO1}, \cite{optMISO2}, the authors consider joint beamforming to minimize the sum  mean squared error (MSE) or the transmit power of a multi-user MIMO system using optimization approaches. In \cite{decenMISO1}, \cite{decenMISO2}, the authors analyze the achievable rate region of a multi-antenna interference channel from a game-theoretic perspective and  consider a distributed  beamforming design  using non-cooperative game. However, these solution frameworks are not {\em truly cross-layer design}  \cite{crosslayer1}, \cite{crosslayer2}, in the sense that the optimization objectives are the physical layer metrics (e.g., throughput, SNR), which may not be directly related to the application level performance metrics in multimedia streaming. Furthermore, the resulting control policy is adaptive to the channel state information (CSI) only, which exploits good transmission opportunities from the time-varying physical channels. However, for real-time multimedia streaming, dynamic control policy adaptive to the  instantaneous queue length (QSI) is also very important because they give information about the {\em urgency} of the data flows. 

Control policy adaptive to both the CSI and the QSI is very challenging because the associated optimization problem belongs to an infinite dimension stochastic optimization problem. A systematic approach is to formulate the problem into a Markov Decision Process (MDP) \cite{Cao}, \cite{DP_Bertsekas}. In \cite{mdpapp1}, \cite{mdpapp2}, delay minimization using the MDP approach is considered. There are also a number of works \cite{lya1}, \cite{lya2} that adopt the stochastic Lyapunov optimization technique for average delay minimization of wireless networks. However, these techniques cannot be easily used for multimedia streaming applications because average delay is not the end--to--end performance metric for multimedia streaming. For multimedia streaming applications, there is a playback buffer at the each mobile user and \txtblue{the \emph{playback interruption probability} and the \emph{buffer overflow probability} are the two important end--to--end performance metrics\footnote{\txtblue{Note that if we want to have good end-to-end performance for an application, we need to take the end-to-end performance metrics into the design considerations directly, instead of optimizing some intermediate performance metrics (such as weighted MMSE or sum rate).}}}. Playback interruption occurs when the playback buffer underflows and this is highly undesirable for the end user experience. On the other hand, due to the finite buffer size nature in  practical systems,  new packet arrivals will be dropped when playback buffer is full. This is  also undesirable due to the wastage of wireless resource used to transmit these dropped packets. In \cite{mdpsett2}, \cite{mdpsett1}, the authors consider a fully dynamic power control and rate adaptation for video streaming over a wireless link using MDP. The optimality condition, namely the {\em Bellman equation}, is obtained and solved using conventional value iteration algorithm \cite{Cao}, \cite{DP_Bertsekas}. However, the solution cannot be extended to deal with the multi-flow stochastic problem due to the curse of dimensionality. In our problem, there are $K$ multimedia streaming flows in the system, and the queue dynamics of the $K$ flows are complex-coupled together. This is because the \emph{data rate} of each playback queue at the mobiles depends on the beamforming control actions of the other flows due to the mutual interference. As a result, brute-force value iteration or policy iteration \cite{Cao}, \cite{DP_Bertsekas} will result in solutions with exponential complexity and they will not be viable  in practice. 

\subsubsection{Our Contribution}
In this paper, we consider a truly cross-layer optimization framework for real-time multimedia streaming applications in MIMO interference networks. Unlike many existing works on MIMO precoder/decorrelator optimization, the optimization objectives we consider, namely the {\em playback interruption} and {\em buffer overflow} probabilities, are directly related to the application level performance metrics. Furthermore, the optimization variables are the MIMO precoders/decorrelators which are adaptive to the instantaneous CSI and the instantaneous QSI at the playback buffers. The associated problem belongs to a $K$-dimensional stochastic optimization and brute-force solution \cite{Cao}, \cite{DP_Bertsekas} has exponential complexity. By exploiting the special structure in the problem as well as the timescale separation between the slot duration and the interruption/overflow events, we obtained an {\em equivalent optimality condition} for the MDP in terms of a $K$-dimensional partial differential equation (PDE). The solution of the PDE is called the {\em value functions} and they capture the {\em dynamic urgency} of the $K$ data flows. To deal with the challenge due to the queue coupling and the curse of dimensionality, we derive a closed-form approximate solution for the PDE using perturbation theory. Based on the derived approximate value function, the MIMO precoders/decorrelators are optimized by solving a per-stage weighted MMSE problem \cite{WMMSE}, where the instantaneous QSI affects the weights via the value function. While the per-stage problem is non-convex, we establish technical conditions for the asymptotic optimality of the proposed low complexity solution. Finally, we compare the proposed algorithm with various conventional  beamforming schemes through simulations and show that significant  performance gain can be achieved.

\section{system model}
In this section, we introduce the architecture of the multimedia streaming system  in MIMO interference networks, the physical layer model as well as the playback queue model.

\subsection{Architecture of the Multimedia Streaming System in  MIMO Interference Networks}

Fig.~\ref{K_pair} shows a typical architecture of the multimedia streaming system in MIMO interference networks. The raw multimedia files are pre-compressed and saved in the storage devices in the multimedia streaming server (MSS).  There are $K$ mobile users streaming multimedia files from the MSS via a radio access network (RAN).  Specifically, upon the request from the users, the MSS retrieves the pre-stored multimedia file and transmits it to the users over the RAN.   Each mobile user $k$ \emph{consumes}  the received multimedia packets at a constant playback rate $\mu_k$. Furthermore, the RAN consists of $K$ BSs, which are connected to the MSS via  a high speed backhaul  links. BS $k$ sends information to user $k$. Each BS is equipped with $N_t \geq K$ antennas and each user is equipped with $N_r$   antennas.  All the $K$ BSs share a common spectrum with bandwidth $W$Hz and hence, they potentially interfere with each other. In this paper, the time dimension is partitioned into decision slots indexed by $t$ with  slot duration $\tau$. For example, in LTE \cite{lte}, the physical layer is organized into  radio frames  (corresponding to slot in our problem), and the generic radio frame has a time duration of 10 ms.

\begin{figure}[t]
  \centering
  \includegraphics[width=3.6in]{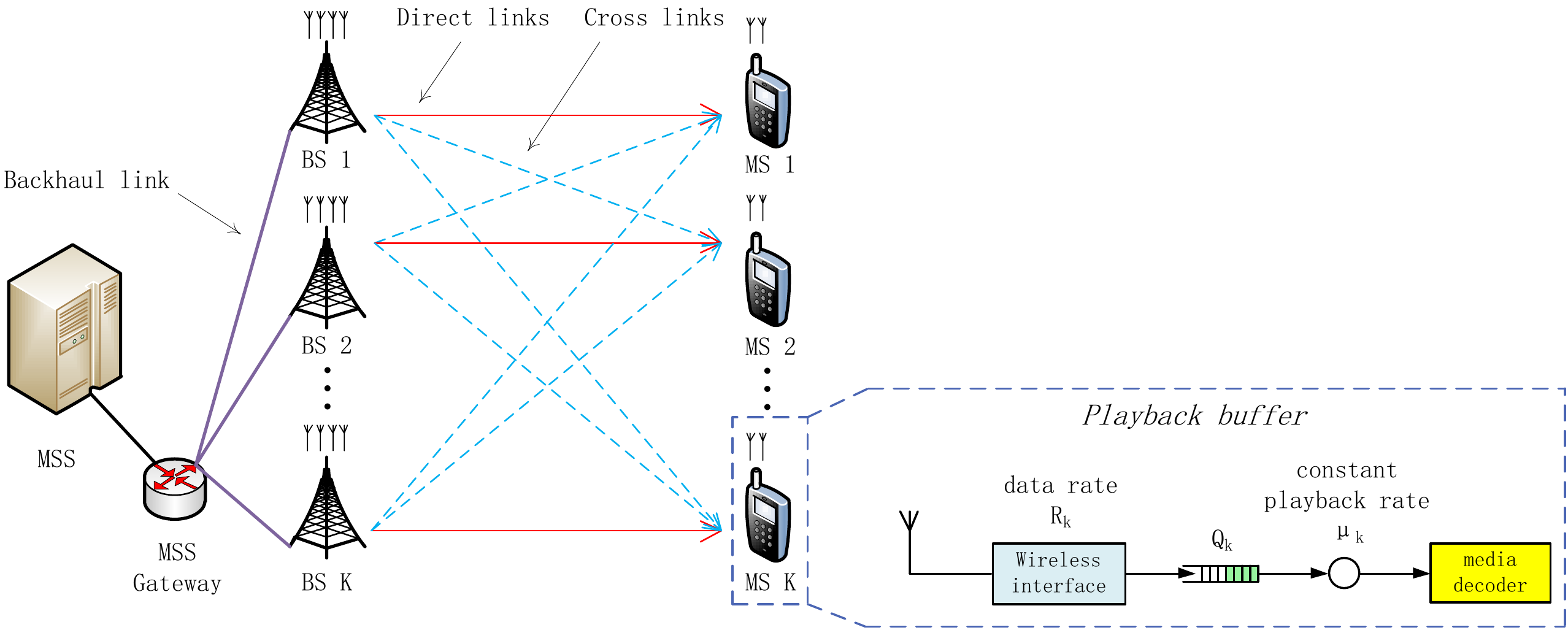}
  \caption{Architecture of a multimedia system in  MIMO interference network.}
  \label{K_pair}
\end{figure}

\subsection{Physical Layer Model}
The RAN and  the $K$ mobile users forms a MIMO interference network and the performance is limited by the inter-cell (cross channel) interference between the BSs. To deal with the interference issue,  joint precoder/decorrelator optimization  \cite{comp2}  is adopted at the BSs. Let $\mathbf{F}_k \in \mathbb{C}^{N_t \times d}$ be the transmit precoding matrix of BS $k$, where $d=\min\left\{N_t, N_r\right\}$ the number of data streams transmitted by each Tx-Rx pair\footnote{In this paper, we shall refer to BS as transmitter (Tx) and mobile users as receivers (Rx),  and each BS and the associated mobile user pair as a Tx-Rx pair.}.  Let $\mathbf{U}_k\in \mathbb{C}^{N_r \times d}$ be the decoding matrix of mobile user $k$. The received signal $\mathbf{y}_k \in \mathbb{C}^{d \times 1}$ at user $k$ is given by
\begin{equation}	\label{recvsig}
	\mathbf{y}_k =  \mathbf{U}_k^\dagger \big( \sqrt{L_{kk}} \mathbf{H}_{kk} \mathbf{F}_k \mathbf{s}_k+ \sum_{j \neq k}   \sqrt{L_{kj}} \mathbf{H}_{kj} \mathbf{F}_j \mathbf{s}_j + \mathbf{n}_k\big)
\end{equation}
where $L_{kj}\in \mathbb{R}^+$ and $\mathbf{H}_{kj}\in \mathbb{C}^{N_r \times N_t}$ are the long-term channel path gain and short-term channel fading matrix from BS $j$ to user $k$, respectively. \txtblue{$\mathbf{s}_k \in \mathbb{C}^{d \times 1}$ is the information symbol for BS $k$ and we assume $\mathbb{E}\big[\mathbf{s}_k \mathbf{s}_k^\dagger\big]=\mathbf{I}$ \cite{WMMSE}.  In practical multimedia streaming applications, the information symbols are drawn from a finite   alphabet constellation set of size $|\mathcal{S}|$, i.e., $\mathbf{s}_k \in \left\{ \boldsymbol{\xi}_k^i\right\}_{i=1}^{|\mathcal{S}|}$  \cite{finitealp} for all $k$.}  $\mathbf{n}_k \sim \mathcal{CN}(0, \mathbf{I})$ is the i.i.d. complex AWGN noise vector. $(\cdot)^\dagger$ represents the conjugate transpose of a matrix. Denote the global CSI as $\mathbf{H}=\left\{\mathbf{H}_{kj}: \forall k, j\right\}$. We have the following assumption on $\mathbf{H}$:
\begin{Assumption}	[Channel Fading Model]	\label{CSIassum}	
$\mathbf{H}_{kj}\left(t\right)$ remains constant within each decision slot  and is  i.i.d. over  slots for all $k,j$. Specifically, each element of $\mathbf{H}_{kj}\left(t\right)$ follows a complex Gaussian distribution with zero mean and unit variance. Furthermore, $\mathbf{H}_{kj}\left(t\right)$ is  independent w.r.t. $k$, $j$. The path gain  $L_{kj}$ remains constant for the duration of the communication session.	~\hfill\IEEEQED
\end{Assumption}

For given CSI $\mathbf{H}$, precoding matrices $\mathbf{F}=\left\{\mathbf{F}_k: \forall k\right\}$ and decoding matrices $\mathbf{U}_k$, the achievable data rate   for the $k$-th Tx-Rx pair (by treating interference as noise) is given by \cite{constellation} 
\begin{align}		
	& R_k\left(\mathbf{H}, \mathbf{F}, \mathbf{U}_k \right) = W \log_2 \det \Big( \mathbf{I}+\txtblue{\zeta} L_{kk}\mathbf{U}_k^\dagger  \mathbf{H}_{kk}\mathbf{F}_k \mathbf{F}_k^\dagger \mathbf{H}_{kk}^\dagger \mathbf{U}_k   \notag \\
	 &\hspace{2.2cm}\Big(\sum_{j \neq k} L_{kj} \mathbf{U}_k^\dagger  \mathbf{H}_{kj}\mathbf{F}_j \mathbf{F}_j^\dagger \mathbf{H}_{kj}^\dagger\mathbf{U}_k  +\mathbf{I} \Big)^{-1}\Big)\label{rate1}
\end{align}\txtblue{where $\zeta \in (0, 1]$ is a constant that is determined by the modulation and coding scheme (MCS) used in the system. For example, $\zeta=0.5$ for QAM constellation at BER= 1\% \cite{constellation} and $\zeta = 1$ for capacity-achieving coding (in which, (\ref{rate1}) corresponds to the instantaneous mutual information). In this paper, our derived results are based on $\zeta=1$ for  simplicity, which can be easily extended to other MCS cases.}

Furthermore, the transmit power for BS $k$ is given by \txtblue{\cite{WMMSE}}
\begin{align}	\label{defpow}
	P_k\left( \mathbf{F}_k\right)=\mathrm{Tr}\left(  \mathbf{F}_k  \mathbf{F}_k^\dagger\right)
\end{align}
where $\mathrm{Tr}\left(  \cdot\right)$ represents the trace operator.

\begin{figure}\centering
  \includegraphics[width=3.5in]{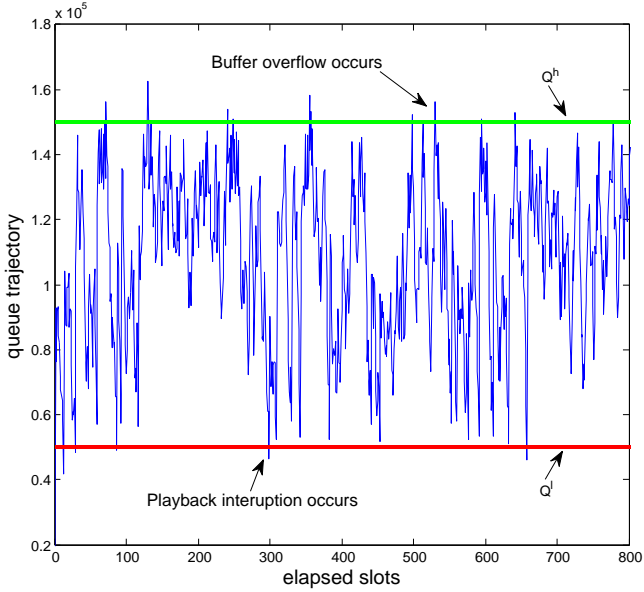}
  \caption{Queue trajectory of the playback queue $Q_k(t)$ at the $k$-th mobile user. The system parameters are configured as in  the simulations in Section  \ref{xxsim}.}
  \label{fig1adsa}
\end{figure}

\subsection{Playback Queue Dynamics at the Mobile Users}	\label{frisss}
As shown in Fig.~\ref{K_pair}, each mobile user maintains a data queue for multimedia playback. Let $Q_k\left( t\right) \in \mathcal{Q}$  denote the QSI (number of bits) at the playback buffer of user $k$ at the beginning of the $t$-th slot, where $\mathcal{Q}=[0, \infty)$ is the   QSI state space. Let $\mathbf{Q} \left( t\right) = \left(Q_1\left(t \right),\dots,Q_K \left( t\right)  \right) \in \boldsymbol{\mathcal{Q}} \triangleq \mathcal{Q}^K$ denote the global QSI. The instantaneous arrivals of the $k$-th playback queue at slot $t$ is given by $R_k(t)\tau$, which is controlled by the precoders $\mathbf{F}(t)$ and decorrelator $\mathbf{U}_k(t)$. The instantaneous departure of the $k$-th playback queue is given by $\mu_k\tau$, which is a constant and depends on the multimedia decoder at the end user.  Hence, the queue dynamics for user $k$ is given by
\begin{align}		
	Q_k(t+1) &= \left[Q_k\left(t\right) - \mu_k  \tau \right]^+  + R_k\left(\mathbf{H}\left( t\right), \mathbf{F}\left( t\right), \mathbf{U}_k (t) \right)  \tau       	\label{Qdyn}
\end{align}	
where $[x]^+=\max\{0,x\}$.  It can be observed that the queue dynamics in the playback buffer is a Markovian queue with \emph{controlled arrivals}.

Fig.~\ref{fig1adsa} illustrates a trajectory of the playback queue $Q_k(t)$ for the $k$-th mobile user. The multimedia files are consumed by the user at  a constant rate $\mu_k\tau$ during each slot. The system has to control the precoders $\mathbf{F}$ and decorrelators  $\mathbf{U}$, so that $Q_k(t)$ will seldom  go beyond certain level (i.e., the green line, which results in buffer overflow) or go below certain level (i.e., the red line, which results in playback interruption).

\begin{Remark} [Coupling Property of  Queue Dynamics] 	\label{coup_rem}
	 The $K$ queue dynamics in the MIMO interference network are coupled together due to the  interference  in (\ref{rate1}). Specifically,  the data rate $R_k$ of each Tx-Rx pair $k$  depends on the precoding matrices  $\left\{\mathbf{F}_j: \forall j \neq k\right\}$ of all the other Tx-Rx pairs. Furthermore, the cross channel path gain $\left\{L_{kj}: \forall k, j, j\neq k \right\}$ measures the coupling intensity  in the interference network.~\hfill\IEEEQED
\end{Remark}

We have the following assumption on the  interference network:
\begin{Assumption}	[Weak Interference Network]
	For each Tx-Rx pair $k$, we assume the long-term cross channel path gains are much  smaller than the direct channel path gain, i.e., $L_{kj} \ll L_{kk}$, $\forall j \neq k$. Furthermore, denote $L=\max\left\{L_{kj}: \forall k, j, k \neq j \right\}$ to be the largest (worst-case) cross channel path gain in the interference network.~\hfill\IEEEQED
\end{Assumption}

The assumption on the weak interference network can be justified in many applications. For example, due to the MAC filtering effect in some protocols, such as CSMA/CA \cite{csmamac}, the interference in the cross channels cannot be too strong. \txtblue{The basic principle of the CSMA/CA is listen-before-talk \cite{csmamac}, which is used to avoid collisions between simultaneous transmissions of the BSs in the neighboring cells. As a result, the MAC protocol determines the subset of the BSs in which the BSs can transmit data simultaneously without causing excessive interference.}  Suppose each BS uses a CSMA/CA MAC protocol with carrier sensing distance $\delta$, then the worst-case path gain between two interfering BSs is given by\footnote{Here we adopt the Friis  path loss model with exponent of 4 \cite{kangshin}, which corresponds to the environment with obstructings in buildings. Note that the results of this paper can be extended easily for other path loss models.} \cite{kangshin}: $L= G^r G^t \left(\frac{ \lambda}{4 \pi  }\right)^2\frac{1}{\delta^4}$, where  $G^r$ and $G^t$ are the receive and transmit antenna gains respectively, and $\lambda$ is the carrier wavelength. \txtblue{For instance, in  IEEE 802.11g \cite{csmajustfy}, the CSMA/CA sensing threshold is around -95 dBm, which corresponds to a sensing distance (i.e., $\delta$) of around 188 m for the indoor environment. To support a 54 Mbps data rate, the receive sensitivity is around -75 dBm. Therefore, such a choice of carrier sensing distance corresponds to a worst-case cross channel path gain of at least 20 dB less than the direct channel path gain.} We shall exploit this weak interference coupling property in Section \ref{weakintder} to derive a closed-form approximate solution to the multi-dimensional MDP problem.

\section{stochastic precoder and decorrelator  control problem formulation}
In this section, we  define the precoder and decorrelator  control policy and formulate the stochastic control problem for multimedia streaming in the MIMO interference network.

\subsection{MIMO Precoder and Decorrelator  Control Policy}	
For notation convenience, we  denote $\boldsymbol{\chi}=\big(\mathbf{H}, \mathbf{Q} \big)$ as the global system state.  At the beginning of  each decision slot, the controller determines the precoders $\mathbf{F}= \left\{\mathbf{F}_k:\forall k\right\}$  and decorrelators $\mathbf{U}= \left\{\mathbf{U}_k: \forall k\right\}$ to minimize the playback interruption and buffer overflow probabilities  of the multimedia streaming applications based on the global  system state $\boldsymbol{\chi}$ according to the following  stationary control policy:

\begin{Definition} \label{deff1} \emph{(Stationary  Precoder and Decorrelator   Control Policy)}	
	A stationary  precoder and decorrelator    control policy $\Omega_k$ for Tx-Rx pair  $k$  is a mapping from the global  system state $\boldsymbol{\chi}$ to the  precoding matrix of BS $k$ and decoding matrices of  user $k$. Specifically, we have $\Omega_k\big(\boldsymbol{\chi}\big)=\left\{ \mathbf{F}_k \in \mathbb{C}^{N_t \times d}, \mathbf{U}_k \in \mathbb{C}^{N_r \times d}\right\}$. Furthermore, let $\Omega= \{ \Omega_k:\forall  k \}$ denote the aggregation of the control policies for all the $K$ BSs.~\hfill\IEEEQED
\end{Definition}

Given a  control policy $\Omega$, the induced random process $\left\{\boldsymbol{\chi}\left(t \right)\right\}$ is a controlled Markov chain with the following transition probability:
\txtblue{\begin{align}	
	  &\Pr\big[ \boldsymbol{\chi}\left(t+1 \right) \big| \boldsymbol{\chi}\left(t \right),  \Omega\big(\boldsymbol{\chi}\left(t \right) \big)\big] \label{trankernel} \\
	 =& \Pr \big[\mathbf{H}\left(t+1 \right) \big] \Pr \big[ \mathbf{Q} \left(t+1\right) \big| \boldsymbol{\chi}\left(t \right), \Omega\big(\boldsymbol{\chi}\left(t \right) \big) \big]	\notag \\
	 =& \Pr \big[\mathbf{H}\left(t+1 \right) \big] \prod_{k=1}^K \Pr \big[ Q_k \left(t+1\right) \big|Q_k(t), \mathbf{H}(t), \Omega\big(\boldsymbol{\chi}\left(t \right) \big) \big]	\notag
\end{align}
where $\Pr \big[ Q_k \left(t+1\right) \big|Q_k(t), \mathbf{H}(t), \Omega\big(\boldsymbol{\chi}\left(t \right) \big) \big]$ is the queue transition probability for the $k$-th Tx-Rx pair and is given by
\begin{align}	
&\Pr \big[ Q_k \left(t+1\right) \big|Q_k(t), \mathbf{H}(t), \Omega\big(\boldsymbol{\chi}\left(t \right) \big) \big] \notag 	\\
=&
 \left\{
	\begin{aligned}	 \label{trankern}
		& 1 \quad \text{if } Q_k(t+1) = \left[Q_k\left(t\right) - \mu_k  \tau \right]^+ \\
		&\hspace{2.8cm}+ R_k\left(\mathbf{H}\left( t\right), \mathbf{F}\left( t\right), \mathbf{U}_k (t) \right)  \tau   	\\
		& 0  \quad \text{otherwise}
	   \end{aligned}
   \right.
  \end{align}
Note that the last equality in (\ref{trankernel}) is due to the queue evolution equation in (\ref{Qdyn}). Hence, $\left\{Q_k(t) \right\}$ is a controlled Markov chain and the next transition $Q_k(t+1)$ only depends on $Q_k(t)$, $\mathbf{H}(t)$, and $\left(\mathbf{F},\mathbf{U}_k \right)$.}

Furthermore, we have the following definition on the admissible  control policy:
\begin{Definition}	[{Admissible Control Policy}]	\label{adddtdomain}
	{A policy $\Omega$ is  admissible if the following requirements are satisfied:}
	\begin{itemize}
		\item $\Omega$ is a unichain policy, i.e., the controlled Markov chain $\left\{\boldsymbol{\chi}\left(t \right)\right\}$ under $\Omega$ has a single recurrent class (and possibly some transient states) \cite{DP_Bertsekas}.
		\item {The queueing system under $\Omega$ is stable in the sense that   $\lim_{t \rightarrow \infty} \mathbb{E}^{\Omega} \big[ \sum_{k=1}^K Q_k(t)  \big] < \infty$, where $\mathbb{E}^{\Omega} $ means taking expectation w.r.t. the probability measure induced by the control policy $\Omega$. }~\hfill\IEEEQED
	\end{itemize}
\end{Definition}

\subsection{Multimedia Streaming Performance  and Cross-Layer  Problem Formulation}
The system performance of the multimedia system is characterized by the average transmit power of the BSs, playback interruption probability and buffer overflow probability of the mobile users.

Under an admissible control policy $\Omega$, the average power cost of BS $k$ starting from a given initial  state $\boldsymbol{\chi}\left(0\right)$ is given by
\begin{align}	\label{delay_cost}
	\overline{P}_k^\Omega\left(\left(\boldsymbol{\chi}\left(0 \right) \right)\right)  = \limsup_{T \rightarrow \infty} \frac{1}{T} \sum_{t=0}^{T-1} \mathbb{E}^{\Omega} \left[P_k\left( \mathbf{F}_k(t)\right) \right] 	
\end{align}	
where $P_k\left( \mathbf{F}_k\right)$ is defined in (\ref{defpow}).  Similarly, under an admissible control policy $\Omega$,  the playback interruption probability and buffer overflow probability of user $k$ are given by
\begin{align}
	  \overline{I}_k^{\Omega}\left(\left(\boldsymbol{\chi}\left(0 \right) \right)\right) =&  \limsup_{T \rightarrow \infty} \frac{1}{T} \sum_{t=0}^{T-1} \mathbb{E}^{\Omega} \left[1 \left(Q_k(t) < Q^l \right) \right] \notag \\
	\approx&   \limsup_{T \rightarrow \infty} \frac{1}{T} \sum_{t=0}^{T-1} \mathbb{E}^{\Omega} \left[e^{-\eta \left[Q_k(t) - Q^l \right]^+} \right]	\label{ind1}\\
	\overline{B}_k^{\Omega}\left(\left(\boldsymbol{\chi}\left(0 \right) \right)\right) =&  \limsup_{T \rightarrow \infty} \frac{1}{T} \sum_{t=0}^{T-1} \mathbb{E}^{\Omega} \left[1 \left(Q_k(t) > Q^h \right) \right] \notag \\
		\approx& \limsup_{T \rightarrow \infty} \frac{1}{T} \sum_{t=0}^{T-1} \mathbb{E}^{\Omega} \left[e^{-\eta \left[Q^h - Q_k(t) \right]^+}\right] 	\label{ind2}
\end{align}where $Q^l>0$ and $Q^h>0$ are  the target minimum  and maximum playback buffer size at the mobile users, respectively, and we require $Q^h>Q^l$. Fig.~\ref{fig1adsa} illustrates an example of the queue trajectory and the playback interruption/overflow events. During  the playback interruption, the multimedia playback is frozen, which  affects the end user experience. During the overflow event, the arrival packets are dropped and this causes wastage of the radio resource used to transmit the dropped packets.   For technicality, we use $e^{-\eta \left[Q_k - Q^l \right]^+}$ and $e^{-\eta \left[Q^h - Q_k \right]^+}$ as a smooth approximation for the indicator functions in (\ref{ind1}) and (\ref{ind2}), where $\eta >0$ is a parameter\footnote{The approximation is asymptotically accurate as $\eta \rightarrow \infty$.} of the smooth approximation.  \txtblue{Fig.~\ref{compareees1} and Fig.~\ref{compareees2} illustrate the comparison of the actual and approximate playback interruption and buffer overflow per-stage costs. It can be observed that the approximate per-stage playback interruption (or buffer overflow) cost    $e^{-\eta \left[Q_k - Q^l \right]^+}$ (or $e^{-\eta \left[Q^h - Q_k\right]^+}$) is very close to the  actual per-stage cost  $1 \left(Q_k < Q^l \right)$ (or $1 \left(Q_k> Q^h \right)$) for large values of $\eta$ (e.g., $\eta \geq 10$).}

\begin{figure}
\begin{minipage}[t]{0.45\textwidth}\centering
  \includegraphics[width=3.5in]{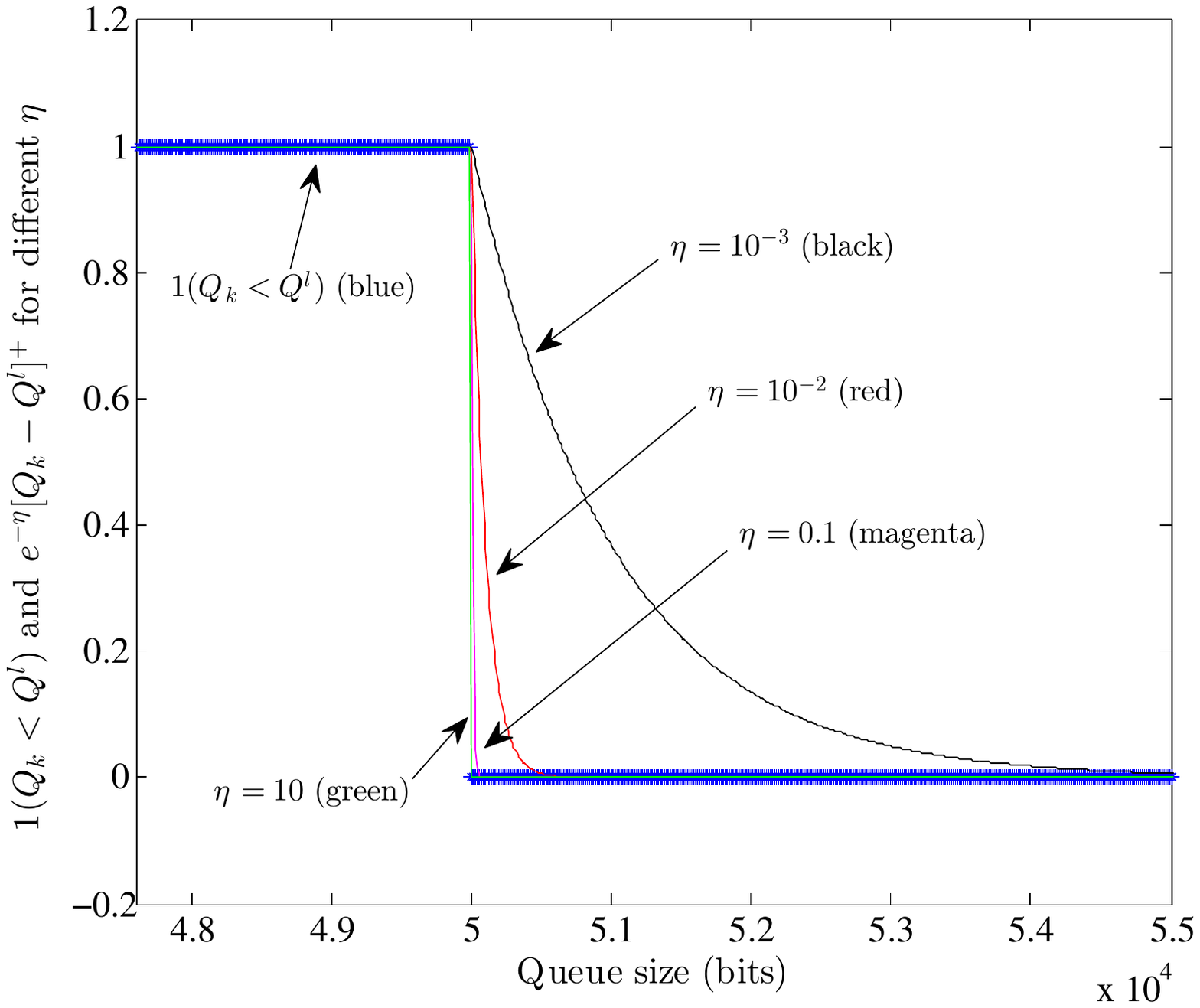}
  \caption{\txtblue{Actual per-stage playback interruption cost $1 (Q_k < Q^l )$ and the associated approximation $e^{-\eta \left[Q_k- Q^l \right]^+}$ for different values of $\eta$.}}
  \label{compareees1}
\end{minipage}
\hspace{0.05\textwidth}
\begin{minipage}[t]{0.45\textwidth}
  \centering	 \hspace{-0.4cm}
   \includegraphics[width=3.43in]{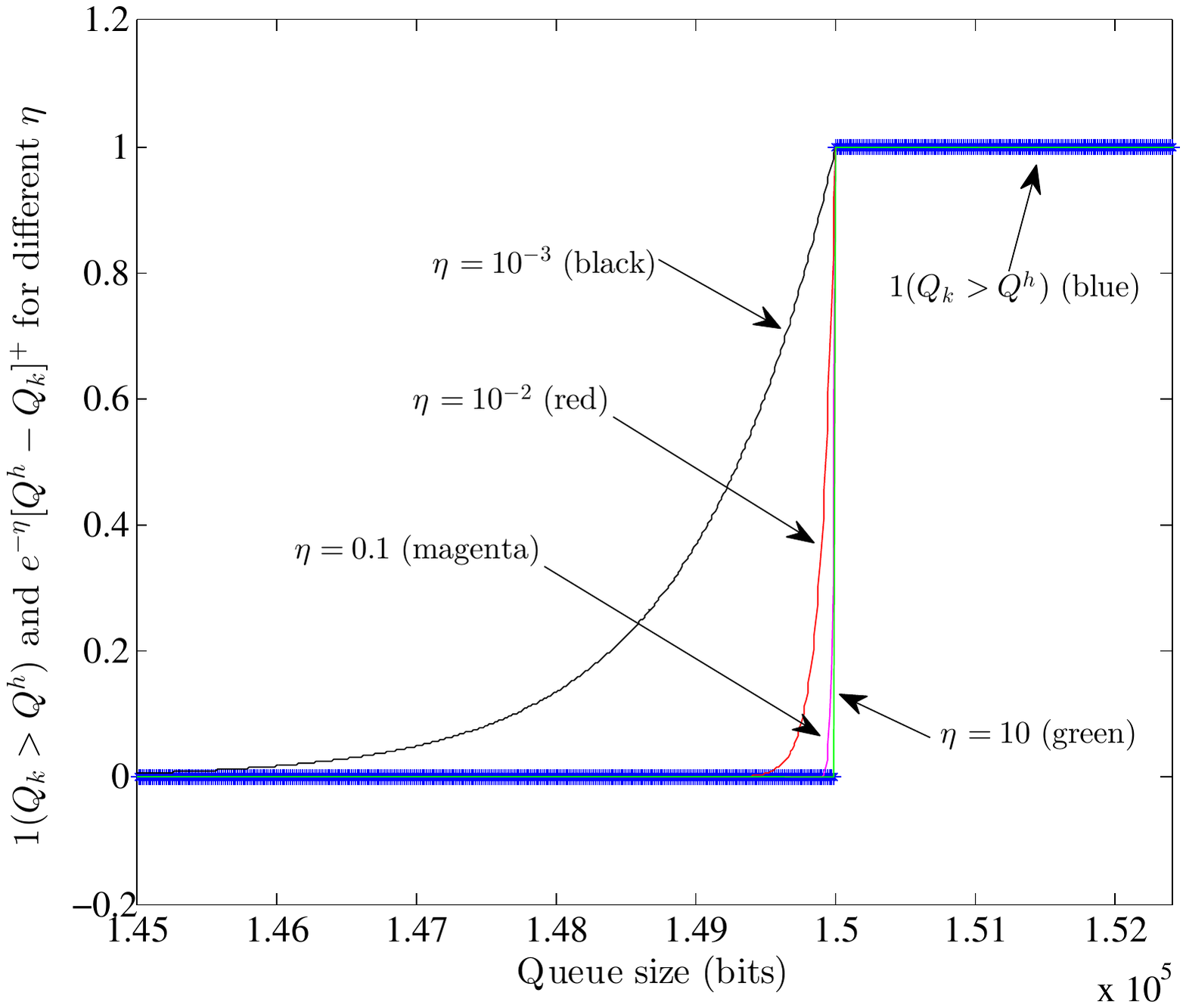}
 \caption{\txtblue{Actual  per-stage  buffer overflow cost $1 \left(Q_k > Q^h \right)$ and the associated approximation $e^{-\eta \left[Q^h - Q_k \right]^+}$ for different values of $\eta$.}}
  \label{compareees2}\end{minipage}	
\end{figure}

We consider a truly cross-layer framework for the MIMO precoder/decorrelator optimization with the optimization objective to be the weighted sum of the average transmit power, the average playback interruption and the average buffer overflow probabilities of the multimedia streaming applications. This is formally stated below. 
\begin{Problem}	 \label{IHAC_MDP}	\emph{(Stochastic Precoder and Decorrelator   Control Problem)}
	For some positive constants  $\boldsymbol{\gamma}=\left\{ \gamma_k>0: \forall k\right\}$ and $\boldsymbol{\beta}=\left\{ \beta_k>0: \forall k\right\}$, the stochastic precoder and decorrelator  control  problem is formulated as
\begin{align} 
	\min_{\Omega}	\quad  & L_{\boldsymbol{\gamma},\boldsymbol{\beta}}^{\Omega}\left( \boldsymbol{\chi}\left(0 \right)\right)		\label{perstagecost} \\
	 =&\sum_{k=1}^K \left( \overline{P}_k^\Omega\left(\left(\boldsymbol{\chi}\left(0 \right) \right)\right)   +  \gamma_k \overline{I}_k^{\Omega}\left(\left(\boldsymbol{\chi}\left(0 \right) \right)\right) +\beta_k  \overline{B}_k^{\Omega}\left(\left(\boldsymbol{\chi}\left(0 \right) \right)\right)  \right) 	\notag \\
	 =&  \limsup_{T \rightarrow \infty} \frac{1}{T} \sum_{t=0}^{T-1} \mathbb{E}^{\Omega}  \left[c\left(\mathbf{Q} \left(t\right), \Omega\left(\boldsymbol{\chi}\left(t\right) \right)\right)    \right]\notag 	
\end{align}
where $\boldsymbol{\gamma}$ and $\boldsymbol{\beta}$ measures the relative importances\footnote{$\boldsymbol{\gamma}$ and $\boldsymbol{\beta}$  can also be interpreted as the corresponding Lagrange Multipliers  associated with the playback interruption probabilities and buffer overflow probabilities of the $K$ users \cite{mdpapp2}.} and tradeoffs of the playback interruption probability and buffer overflow probability. $c\left(\mathbf{Q}, \mathbf{F}\right)=\sum_{k=1}^K c_k\left(Q_k,\mathbf{F}_k\right)$ is the per-stage cost function with $c_k\left(Q_k,\mathbf{F}_k\right)=\mathrm{Tr}\left(  \mathbf{F}_k  \mathbf{F}_k^\dagger\right)+\gamma_k e^{-\eta \left[Q_k - Q^l \right]^+}+\beta_k e^{-\eta \left[Q^h - Q_k \right]^+} $.~\hfill\IEEEQED
\end{Problem}

Note that the two technical conditions in Definition \ref{adddtdomain} on the admissible policy  ensure that there is a unique solution to Problem \ref{IHAC_MDP}. Furthermore,  Problem \ref{IHAC_MDP} is an infinite horizon average cost  MDP,  which is well-known to  be a very difficult problem \cite{surveydelay}. In the next  subsection, by exploiting the special structure in our problem, we derive an equivalent optimality equation to simplify the   MDP problem.

\subsection{Optimality Conditions and Approximate Optimality Equation}
While the MDP in Problem \ref{IHAC_MDP} is  difficult in general, we utilize  the i.i.d. assumption of the CSI to derive an \emph{equivalent optimality equation} as summarized below.

\begin{Theorem} [Sufficient Conditions for Optimality]	\label{LemBel}
	For any given  $\boldsymbol{\gamma}$ and $\boldsymbol{\beta}$, assume there exists a ($\theta^\ast, \{ V^\ast\left(\mathbf{Q}  \right) \}$) that solves the following \emph{equivalent optimality equation}:
	\begin{align}	
		 &\theta^\ast \tau + V^\ast \left(\mathbf{Q}  \right), \hspace{2cm} \forall \mathbf{Q}  \in \boldsymbol{\mathcal{Q}} \label{OrgBel}	\\
		 = & \mathbb{E} \bigg[\min_{\mathbf{F}, \mathbf{U}}\Big[ c\left(\mathbf{Q}, \mathbf{F}\right) \tau +  \sum_{\mathbf{Q} '}\Pr \big[ \mathbf{Q} '\big| \boldsymbol{\chi}, \mathbf{F}, \mathbf{U} \big]V^\ast \left(\mathbf{Q}  '\right) \Big]   \bigg| \mathbf{Q}  \bigg]\notag
	\end{align} Furthermore,  for all  admissible control policy $\Omega$ and initial queue state $\mathbf{Q} \left(0 \right)$, $V^\ast$ satisfies the following \emph{transversality condition}:
	\begin{align}	\label{transodts}
	\lim_{T \rightarrow \infty} \frac{1}{T}\mathbb{E}^{\Omega}\left[ V^\ast\left(\mathbf{Q} \left(T \right) \right) |\mathbf{Q} \left(0 \right)\right]=0
\end{align}
	Then, $\theta^\ast=\underset{\Omega}{\min} L_{\boldsymbol{\gamma},\boldsymbol{\beta}}^{\Omega}\left( \boldsymbol{\chi}\left(0 \right)\right) $ is the optimal average cost for any initial state  $\boldsymbol{\chi}\left(0 \right) $ and $V^\ast\left(\mathbf{Q} \right)$ is called the \emph{value function}. If $\left(\mathbf{F}^\ast, \mathbf{U}^\ast \right)$ attains the minimum of the R.H.S. in (\ref{OrgBel}) for given $\boldsymbol{\chi}$, then the optimal control policy of Problem \ref{IHAC_MDP} is given by $\Omega^*\left(\boldsymbol{\chi} \right) = \left(\mathbf{F}^\ast, \mathbf{U}^\ast \right)$.~\hfill\IEEEQED	
\end{Theorem}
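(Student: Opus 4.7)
The plan is to follow the standard verification argument for average-cost MDPs: show that any $(\theta^\ast,V^\ast)$ satisfying (\ref{OrgBel}) and (\ref{transodts}) provides a lower bound on $L_{\boldsymbol{\gamma},\boldsymbol{\beta}}^{\Omega}$ for every admissible $\Omega$, and then show that the bound is attained by the policy achieving the minimum on the right-hand side of (\ref{OrgBel}). The key structural simplification that makes (\ref{OrgBel}) depend only on $\mathbf{Q}$ (rather than on $\boldsymbol{\chi}=(\mathbf{H},\mathbf{Q})$) is the i.i.d.\ assumption on $\mathbf{H}$ in Assumption \ref{CSIassum}, which I would invoke at the start to justify that taking an outer expectation over $\mathbf{H}$ yields a deterministic function of $\mathbf{Q}$ and that this reduction is without loss of generality.

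First I would fix an arbitrary admissible control policy $\Omega$. From (\ref{OrgBel}) and the definition of the minimum, for every $\mathbf{Q}$ we have
\begin{align*}
\theta^\ast \tau + V^\ast(\mathbf{Q}) \le \mathbb{E}\bigl[c(\mathbf{Q},\mathbf{F}^{\Omega})\tau + \textstyle\sum_{\mathbf{Q}'}\Pr[\mathbf{Q}'\mid\boldsymbol{\chi},\mathbf{F}^{\Omega},\mathbf{U}^{\Omega}]V^\ast(\mathbf{Q}')\,\bigm|\,\mathbf{Q}\bigr],
\end{align*}
where $(\mathbf{F}^{\Omega},\mathbf{U}^{\Omega})=\Omega(\boldsymbol{\chi})$. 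Taking the conditional expectation under $\Omega$ given the initial state $\mathbf{Q}(0)$ at time $t$, and using the Markov property together with the factorization of the transition kernel in (\ref{trankernel}), I would obtain
\begin{align*}
\theta^\ast\tau + \mathbb{E}^{\Omega}[V^\ast(\mathbf{Q}(t))\mid\mathbf{Q}(0)] \le \mathbb{E}^{\Omega}[c(\mathbf{Q}(t),\Omega(\boldsymbol{\chi}(t)))\tau\mid\mathbf{Q}(0)] + \mathbb{E}^{\Omega}[V^\ast(\mathbf{Q}(t{+}1))\mid\mathbf{Q}(0)].
\end{align*}
Summing over $t=0,\dots,T-1$ telescopes the value-function terms and yields
\begin{align*}
T\theta^\ast\tau + V^\ast(\mathbf{Q}(0)) \le \sum_{t=0}^{T-1}\mathbb{E}^{\Omega}[c(\mathbf{Q}(t),\Omega(\boldsymbol{\chi}(t)))]\tau + \mathbb{E}^{\Omega}[V^\ast(\mathbf{Q}(T))\mid\mathbf{Q}(0)].
\end{align*}

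Next I would divide by $T\tau$ and pass to $\limsup_{T\to\infty}$. The term $V^\ast(\mathbf{Q}(0))/(T\tau)$ vanishes because $V^\ast(\mathbf{Q}(0))$ is a finite constant, and the boundary term $\mathbb{E}^{\Omega}[V^\ast(\mathbf{Q}(T))\mid\mathbf{Q}(0)]/(T\tau)$ vanishes by the transversality condition (\ref{transodts}). This gives $\theta^\ast \le L_{\boldsymbol{\gamma},\boldsymbol{\beta}}^{\Omega}(\boldsymbol{\chi}(0))$ for every admissible $\Omega$ and every initial $\boldsymbol{\chi}(0)$. Finally, choosing $\Omega^\ast(\boldsymbol{\chi})=(\mathbf{F}^\ast,\mathbf{U}^\ast)$ that attains the minimizer in (\ref{OrgBel}) replaces the inequality by equality at every step, so the same telescoping argument yields $\theta^\ast = L_{\boldsymbol{\gamma},\boldsymbol{\beta}}^{\Omega^\ast}(\boldsymbol{\chi}(0))$, proving both optimality of $\theta^\ast$ and of $\Omega^\ast$. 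The unichain property in Definition \ref{adddtdomain} is what I would cite to conclude that $\theta^\ast$ is independent of $\boldsymbol{\chi}(0)$.

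The main obstacle is controlling the boundary term $\mathbb{E}^{\Omega}[V^\ast(\mathbf{Q}(T))]/T$; without a uniform handle on $V^\ast$, this is exactly why the transversality condition is hypothesized, and I would emphasize that it is the admissibility condition $\lim_{t\to\infty}\mathbb{E}^{\Omega}[\sum_k Q_k(t)]<\infty$ together with the expected at-most-polynomial growth of $V^\ast$ in $\mathbf{Q}$ (to be verified later when the closed-form approximation is derived) that makes (\ref{transodts}) hold in the problem at hand. A secondary subtlety to flag is measurability/existence of the minimizer on the right-hand side of (\ref{OrgBel}); I would invoke compactness of the feasible set of $(\mathbf{F},\mathbf{U})$ under any finite power budget and continuity of the integrand to justify that the $\min$ is attained and that $\Omega^\ast$ is a well-defined stationary policy.
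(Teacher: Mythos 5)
Your verification argument is correct, but it takes a genuinely different route from the paper. The paper's proof is essentially two lines: it invokes Proposition 4.6.1 of Bertsekas for the Bellman equation posed on the \emph{full} state $\boldsymbol{\chi}=(\mathbf{H},\mathbf{Q})$, and then obtains the reduced equation (\ref{OrgBel}) by taking the expectation over $\mathbf{H}$ on both sides and defining $V^\ast(\mathbf{Q})=\mathbb{E}[V^\ast(\boldsymbol{\chi})\,|\,\mathbf{Q}]$, exploiting the i.i.d.\ CSI assumption. You instead re-derive the verification theorem from first principles directly on the reduced equation: bounding the min by the value at $\Omega(\boldsymbol{\chi})$, telescoping over $t=0,\dots,T-1$, dividing by $T\tau$, and killing the boundary term with the transversality condition, then recovering equality for the minimizing policy. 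Your route is longer but self-contained, and it has one genuine advantage: the theorem hypothesizes a solution of the \emph{reduced} equation, so the paper's argument implicitly needs the converse construction (lifting a solution of (\ref{OrgBel}) back to a solution of the full-state Bellman equation), which it glosses over; your direct telescoping argument never needs that step. What the paper's approach buys in exchange is brevity and an explicit explanation of where (\ref{OrgBel}) comes from. Two small remarks: the independence of $\theta^\ast$ from $\boldsymbol{\chi}(0)$ already falls out of your argument (since $\theta^\ast$ is a single constant in the hypothesized equation), so the appeal to the unichain property there is not needed; and, like the paper, you should note that the equality branch requires the minimizing policy $\Omega^\ast$ itself to be admissible so that (\ref{transodts}) applies to it.
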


\begin{proof}
	please refer to Appendix A.
\end{proof}
Solving (\ref{OrgBel}) is a difficult problem because it corresponds to solving a series of fixed point equations w.r.t. $(\theta^\ast,\{ V^\ast(\mathbf{Q}) \} )$ which involves exponentially many equations and  unknowns. This explains why standard solutions such as value iteration and policy iteration \cite{Cao}, \cite{DP_Bertsekas} have exponential complexity w.r.t. $K$. Instead of solving (\ref{OrgBel}) directly, we exploit the timescale separation property between the slot duration $\tau$ and the interruption/overflow events and establish an approximate optimality equation to further simplify our problem.
\begin{Corollary}	[Approximate Optimality Equation]	\label{cor1}
For any given  $\boldsymbol{\gamma}$ and $\boldsymbol{\beta}$, if
\begin{itemize}
\item	there is a unique  ($\theta^\ast, \{ V^\ast\left(\mathbf{Q}  \right) \}$) that satisfies the optimality equation and transversality condition in Theorem \ref{LemBel}.

\item	there exist $\theta$ and $V\left( \mathbf{Q} \right)$ of class\footnote{$f(\mathbf{x})$ ($\mathbf{x}$ is a $K$-dimensional vector) is of class $\mathcal{C}^2(\mathbb{R}_+^K)$, if the first and second order partial derivatives of $f(\mathbf{x})$ w.r.t. each element of $\mathbf{x}$ are continuous when $\mathbf{x}\in \mathbb{R}_+^K$.} $\mathcal{C}^2(\mathbb{R}_+^K)$ that solve the following \emph{approximate optimality equation}:
\begin{align}
	&\theta = \mathbb{E}\bigg[ \min_{\mathbf{F}, \mathbf{U} }  \Big[c\left(\mathbf{Q}, \mathbf{F}\right) + 	\hspace{2cm} \forall \mathbf{Q}  \in \boldsymbol{\mathcal{Q}} \\ \label{bellman2}
	 & \hspace{1cm} \sum_{k=1}^K \frac{\partial V \left(\mathbf{Q}  \right) }{\partial Q_k} \big( R_k\left(\mathbf{H}, \mathbf{F}, \mathbf{U}_k\right) -\mu_k \big) \Big]\bigg| \mathbf{Q}   \bigg]\notag
\end{align}Furthermore, for all admissible  control policy $\Omega$ and initial queue state $\mathbf{Q} \left(0 \right)$, the transversality condition  in (\ref{transodts}) is satisfied for $V$.
\end{itemize}
then, we have
\begin{align}
	\theta^\ast=\theta+o(1), \quad V^\ast\left(\mathbf{Q}  \right)=V\left(\mathbf{Q}  \right)+o(1), \quad \forall \mathbf{Q}  \in \boldsymbol{\mathcal{Q}}
\end{align}
where the error term $o(1)$ asymptotically goes to zero  for sufficiently small slot duration $\tau$.~\hfill\IEEEQED
\end{Corollary}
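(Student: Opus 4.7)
The plan is to show that any $\mathcal{C}^2(\mathbb{R}_+^K)$ pair $(\theta, V)$ satisfying the approximate optimality equation (\ref{bellman2}) also satisfies the original Bellman equation (\ref{OrgBel}) of Theorem \ref{LemBel} up to a per-stage residual of order $o(\tau)$, and then to invoke the postulated uniqueness of the Bellman solution to conclude $\theta^\ast = \theta + o(1)$ and $V^\ast(\mathbf{Q}) = V(\mathbf{Q}) + o(1)$. The bridge between the two equations is a first-order Taylor expansion of $V$ along the one-step queue trajectory, which is justified because under any admissible $(\mathbf{F}, \mathbf{U})$ the displacement $\mathbf{Q}' - \mathbf{Q}$ is of order $\tau$ by (\ref{Qdyn}).

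Concretely, I would first pick an interior state $\mathbf{Q}$ with $Q_k > \mu_k\tau$ for all $k$, so that the reflection in $[Q_k - \mu_k\tau]^+$ is inactive and $Q_k'-Q_k = (R_k(\mathbf{H},\mathbf{F},\mathbf{U}_k)-\mu_k)\tau$. Using the $\mathcal{C}^2$ regularity, I would expand $V(\mathbf{Q}') = V(\mathbf{Q}) + \sum_k \frac{\partial V(\mathbf{Q})}{\partial Q_k}(Q_k'-Q_k) + O(\tau^2\,\|\partial^2 V\|_\infty)$, substitute into the RHS of (\ref{OrgBel}) with $V^\ast$ replaced by the candidate $V$, and collect $\tau$-powers. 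The term independent of $\tau$ cancels against the $V(\mathbf{Q})$ on the LHS, and the $\tau$-linear term factors as $\tau \cdot \mathbb{E}\bigl[\min_{\mathbf{F},\mathbf{U}} \{c(\mathbf{Q},\mathbf{F}) + \sum_k \frac{\partial V}{\partial Q_k}(R_k-\mu_k)\} \,\big|\, \mathbf{Q}\bigr]$, which by (\ref{bellman2}) equals $\theta\tau$. Hence $(\theta, V)$ satisfies (\ref{OrgBel}) with an additive residual of size $O(\tau^2) = o(\tau)$.

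To promote this to the asymptotic identities $\theta^\ast = \theta + o(1)$ and $V^\ast = V + o(1)$, I would apply the standard MDP verification route: run the stationary policy that attains the per-state minimum of the RHS of (\ref{bellman2}) against the true discrete-time dynamics, telescope the approximate Bellman identity over $T$ slots, divide by $T$, and let $T\to\infty$; the transversality condition (\ref{transodts}) for $V$ kills the terminal $\frac{1}{T}\mathbb{E}[V(\mathbf{Q}(T))]$ term, and the per-stage $O(\tau^2)$ residual contributes at most $O(\tau)$ to the time-average cost, yielding $\theta^\ast \le \theta + O(\tau)$; the reverse direction follows symmetrically by running the Bellman-optimal policy $\Omega^\ast$ against the approximate equation and using its assumed uniqueness. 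The main obstacle is controlling the Taylor residual on the near-boundary set $\{Q_k \le \mu_k\tau\}$, where the reflection in $[\cdot]^+$ is active and $Q_k'-Q_k$ no longer equals $(R_k-\mu_k)\tau$: there the single-step residual degrades from $O(\tau^2)$ to $O(\tau)$, so to keep the overall error at $o(1)$ one has to show that the stationary probability of this set is itself $o(1)$ as $\tau\to 0$, which is where the smooth underflow penalty $\gamma_k e^{-\eta[Q_k-Q^l]^+}$ in the per-stage cost plays the essential role. A secondary subtlety is that (\ref{OrgBel}) determines $V^\ast$ only up to an additive constant, so the statement $V^\ast = V + o(1)$ should be read after pinning both functions at a common reference queue state.
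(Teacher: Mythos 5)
Your proposal is correct in substance and shares the paper's central device: a first-order Taylor expansion of $V$ along the one-step queue displacement $\mathbf{Q}'-\mathbf{Q}=O(\tau)$, which shows that any $\mathcal{C}^2$ solution of the approximate equation (\ref{bellman2}) satisfies the exact Bellman equation (\ref{OrgBel}) up to an $o(\tau)$ per-stage residual. Where you diverge is in the concluding step. The paper (Appendix B) writes the exact Bellman operator as $T_{\boldsymbol{\chi}}=T_{\boldsymbol{\chi}}^\dagger+\nu G_{\boldsymbol{\chi}}$ with $\nu=o(1)$, proves $|\mathbb{E}[T_{\boldsymbol{\chi}}(\theta,V)|\mathbf{Q}]|=o(1)$ by sandwiching the minimum of the perturbed operator (Lemma \ref{applemma}), and then concludes by contradiction: letting $\tau\to 0$, the pair $(\theta,V)$ would otherwise furnish a second solution of the exact equation plus transversality, violating the assumed uniqueness of $(\theta^\ast,V^\ast)$ (Lemma \ref{tenlemma}). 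Your telescoping/verification argument is a more constructive route to $\theta^\ast\le\theta+O(\tau)$, but note that it only compares time-average costs; the pointwise claim $V^\ast(\mathbf{Q})=V(\mathbf{Q})+o(1)$ is not delivered by telescoping and you ultimately have to fall back on the uniqueness hypothesis exactly as the paper does, so the two proofs are not really independent at that point. On the other hand, you are more careful than the paper in two places: the paper simply asserts $Q_k'=Q_k-\mu_k\tau+R_k\tau$ and never addresses the reflection set $\{Q_k\le\mu_k\tau\}$ where the Taylor residual degrades to $O(\tau)$, whereas you correctly identify that one must bound the stationary mass of that set (though you do not prove the bound); and your remark about pinning $V$ and $V^\ast$ at a common reference state addresses the additive-constant ambiguity that the paper absorbs into its uniqueness hypothesis. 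Neither point invalidates the paper's argument under its stated assumptions, but both are genuine refinements worth keeping.
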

\begin{proof}
Please refer to Appendix B.
\end{proof}

Corollary \ref{cor1} states that the difference between ($\theta, \{ {V}\left(\mathbf{Q}  \right) \}$) obtained in (\ref{bellman2}) and ($\theta^\ast, \{ {V^\ast}\left(\mathbf{Q}  \right) \}$) in (\ref{OrgBel}) is asymptotically small w.r.t. the slot duration $\tau$. Therefore, we can focus on solving the approximate optimality equation in  (\ref{bellman2}), which is a simpler problem than solving the original optimality equation in (\ref{OrgBel}).

\section{Closed-Form Approximate Value Function based on Calculus Approach}	\label{weakintder}
In this section, we  adopt a calculus  approach to obtain a closed-form approximation of the  value function. Specifically, we shall exploit the  weak interference property  and  utilize the perturbation theory to obtain the approximate value function.

\subsection{Multi-dimensional PDE}
We first have the following theorem for solving the approximate optimality equation in (\ref{bellman2}):
\begin{Theorem}	\label{HJB1}	 \emph{(Calculus Approach on Solving the Approximate Optimality  Equation)}	
	Assume there exist $c^\infty$ and  $J\left( \mathbf{Q} ;L\right)$ of class $\mathcal{C}^2(\mathbb{R}_+^K)$   that satisfy
	\begin{itemize}
		\item the following multi-dimensional PDE:
		\begin{align}	
		&\hspace{-0.3cm} \mathbb{E}\bigg[  \min_{\mathbf{F}, \mathbf{U}} \bigg[ c\left(\mathbf{Q}, \mathbf{F}\right) + \hspace{3cm}   \mathbf{Q}  \in  \mathbb{R}_+^K  \label{cenHJB}  \\
		&  \sum_{k=1}^K  \frac{\partial  J \left(\mathbf{Q} ;{L}\right)} {\partial Q_k} \left(R_k\left(\mathbf{H}, \mathbf{F}, \mathbf{U}_k \right)    -\mu_k\right) \bigg]\bigg| \mathbf{Q}  \bigg] - c^{\infty} =0	\notag 
\end{align}with boundary condition $J\left(Q_1^\star,\dots, Q_K^\star ; L\right)=0$, where $Q_k^\star$ ($\forall k$) are some given constant queue values.
		\item $\frac{\partial  J \left(\mathbf{Q} ;{L}\right)} {\partial Q_k}>0$ for sufficiently large $Q_k$ for all $k$.
		\item $J\left(\mathbf{Q} ; L \right)=\mathcal{O}\left(\sum_{k=1}^K Q_k \right)$.
	\end{itemize}
	Then, we have
		\begin{align}	
		\theta^\ast= c^{\infty}+o(1), \quad V^\ast\left(\mathbf{Q}  \right)=J \left(\mathbf{Q} ;{L}\right)+o(1), \quad \forall \mathbf{Q}  \in \boldsymbol{\mathcal{Q}}	\label{15resu}
	\end{align}~\hfill\IEEEQED
\end{Theorem}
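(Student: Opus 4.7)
The plan is to reduce the claim to the hypotheses of Corollary \ref{cor1}, namely to show that the pair $(c^\infty, J(\mathbf{Q};L))$ produced by the PDE in (\ref{cenHJB}) constitutes a valid $(\theta, V)$ for the approximate optimality equation (\ref{bellman2}), after which (\ref{15resu}) follows immediately from (\ref{bellman2})'s conclusion $\theta^\ast = \theta + o(1)$, $V^\ast(\mathbf{Q}) = V(\mathbf{Q}) + o(1)$. Two things must be checked: (i) that $(c^\infty, J)$ satisfies the approximate optimality equation on $\boldsymbol{\mathcal{Q}}$; and (ii) that $J$ satisfies the transversality condition (\ref{transodts}) under every admissible control policy $\Omega$.

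For (i), I would simply observe that (\ref{cenHJB}) and (\ref{bellman2}) are syntactically identical up to the relabeling $\theta \leftrightarrow c^\infty$, $V \leftrightarrow J$, and the fact that (\ref{cenHJB}) is posed on the larger domain $\mathbb{R}_+^K \supseteq \boldsymbol{\mathcal{Q}}$. The required $\mathcal{C}^2$ regularity is assumed, so restricting $J$ to $\boldsymbol{\mathcal{Q}}$ produces a bona fide solution of the approximate optimality equation. The boundary condition $J(Q_1^\star,\dots,Q_K^\star;L) = 0$ and the monotonicity requirement $\partial J/\partial Q_k > 0$ for large $Q_k$ play no role in satisfying (\ref{bellman2}) itself; they serve to pin down the free additive constant in the PDE solution and to guarantee that the minimizing $(\mathbf{F}, \mathbf{U})$ induces an admissible (queue-stabilizing) control policy, which is needed so that the unique-solution hypothesis in Corollary \ref{cor1} can be invoked.

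For (ii), I would use the growth assumption $J(\mathbf{Q};L) = \mathcal{O}\big(\sum_{k=1}^K Q_k\big)$ to dominate $|J(\mathbf{Q}(T))|$ by $C \sum_{k=1}^K Q_k(T) + C'$ for some constants. Taking expectations under an admissible $\Omega$ and dividing by $T$,
\begin{equation*}
\frac{1}{T}\mathbb{E}^{\Omega}\!\left[|J(\mathbf{Q}(T))|\,\big|\,\mathbf{Q}(0)\right] \;\le\; \frac{C}{T}\mathbb{E}^{\Omega}\!\left[\textstyle\sum_{k} Q_k(T)\right] + \frac{C'}{T}.
\end{equation*}
By the admissibility condition in Definition \ref{adddtdomain}, $\limsup_{t \to \infty}\mathbb{E}^{\Omega}[\sum_k Q_k(t)] < \infty$, so both terms vanish as $T \to \infty$, giving (\ref{transodts}) for $J$.

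With (i) and (ii) in hand, Corollary \ref{cor1} applies with $(\theta, V) = (c^\infty, J(\cdot;L))$ and yields (\ref{15resu}) directly. The main obstacle is really the implicit step behind (i): one must argue that the PDE formulation is well-posed, i.e.\ that a $\mathcal{C}^2$ solution satisfying the stated boundary, monotonicity, and growth conditions exists and is essentially unique, so that the $J$ of Theorem \ref{HJB1} indeed coincides (up to the $o(1)$ error in Corollary \ref{cor1}) with the MDP value function $V^\ast$. Secondary care is needed at the boundary $Q_k = 0$, where the queue dynamics (\ref{Qdyn}) involve the reflection $[\cdot]^+$; this is handled by interpreting the PDE on the interior $\mathbb{R}_+^K$ and invoking the $\mathcal{C}^2$ assumption, so that the Taylor-expansion argument internal to Corollary \ref{cor1} (which converts the discrete difference in (\ref{OrgBel}) into the gradient term $\sum_k (\partial J/\partial Q_k)(R_k - \mu_k)$) remains valid up to the $o(1)$ error claimed.
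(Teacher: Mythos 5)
Your overall reduction is the same as the paper's: observe that (\ref{cenHJB}) is exactly the approximate optimality equation (\ref{bellman2}) with $(\theta,V)=(c^\infty,J(\cdot;L))$, verify the transversality condition (\ref{transodts}) from the linear growth bound $J=\mathcal{O}(\sum_k Q_k)$ together with the stability requirement in Definition~\ref{adddtdomain}, and then invoke Corollary~\ref{cor1}. Both of those steps match Appendix C essentially verbatim.

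However, there is a genuine gap: the bulk of the paper's proof is devoted to showing that the policy $\Omega^{J\ast}$ induced by the minimizer of (\ref{cenHJB}) is itself an \emph{admissible} policy in the sense of Definition~\ref{adddtdomain}, and you assert this rather than prove it. You correctly identify that the hypothesis $\frac{\partial J}{\partial Q_k}>0$ for large $Q_k$ is what "guarantees" admissibility, but the implication is not immediate and is exactly where the work lies. The paper establishes it in two nontrivial steps: first, a comparison argument (Lemma~\ref{lemmaappr}) showing that whenever $\frac{\partial J}{\partial Q_k}>0$ the minimizing precoder is $\mathbf{F}_k^{J\ast}=\mathbf{0}$, so that for $Q_k$ large the arrival rate vanishes and the conditional drift of $Q_k$ equals $-\mu_k<0$; second, a Kingman-type bound (Lemma~\ref{kingmanres}) converting this negative drift into an exponential tail $\Pr[Q_k\geq x]\leq e^{-r_k^\ast x}$, whence $\mathbb{E}^{\Omega^{J\ast}}\big[\sum_k Q_k\big]<\infty$ and $\Omega^{J\ast}$ is admissible. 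Without this, $c^\infty$ is not known to be the average cost achieved by any feasible policy and the verification argument behind Theorem~\ref{LemBel}/Corollary~\ref{cor1} does not close, since those results only compare against admissible policies. Separately, the "main obstacle" you flag --- well-posedness and uniqueness of the PDE --- is not actually an issue here: existence of $(c^\infty, J)$ with the stated regularity is a hypothesis of the theorem, not something the proof must supply.
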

\begin{proof}
	please refer to Appendix C.
\end{proof}

As a result, solving the approximate optimality equation in (\ref{bellman2}) is transformed into a calculus problem of solving the  PDE in (\ref{cenHJB}). However, the PDE  is still a $K$-dimensional non-linear PDE, which is in general very challenging. To obtain a closed-form approximation of $V^\ast(\mathbf{Q} )$, we apply perturbation analysis to a base PDE as shown in the next subsection.

\subsection{Perturbation Approximation  of $J\left(\mathbf{Q} ; {L}\right) $}
The solution of the multi-dimensions PDE in (\ref{cenHJB}) depends on the worst-case cross channel path gain $L$ and hence, the $K$-dimensional PDE  can be regarded as a perturbation of a base PDE defined below.
\begin{Definition} [Base PDE]	\label{base_sys}
	A  base PDE is the PDE in (\ref{cenHJB}) with $L=0$.~\hfill\IEEEQED
\end{Definition}

We then  study  the  base PDE  and use  $J\left(\mathbf{Q} ; {0}\right) $ to obtain a closed-form approximation of $J \left(\mathbf{Q} ;{L}\right)$. We have the following lemma summarizing the decomposable structure of  $c^\infty$ and $J\left(\mathbf{Q} ; {0}\right) $:

\begin{Lemma}	[Decomposable Structure of  $c^\infty$ and $J\left(\mathbf{Q} ; {0}\right)$]	\label{linearAp}
	If\footnote{These conditions on the weights $\boldsymbol{\gamma}$ and $\boldsymbol{\beta}$ are imposed to make sure there is a solution for  (\ref{cenHJB}) in Theorem \ref{HJB1}. Qualitatively, if $\beta_k$ is too small, the power cost in Problem \ref{IHAC_MDP} will dominate and the user $k$ will not be served at all (they will be allocated zero power). These conditions are used to avoid such uninteresting degenerated case.} $e^{\eta \left(Q^l-Q^h  \right)} < \frac{\gamma_k}{\beta_k} < e^{\eta \left(Q^h-Q^l  \right)}$ and $\beta_k>c_k^\infty$,  then $c^\infty$ and $J \left( \mathbf{Q} ; {0} \right)$ in the  base PDE  has the following decomposable structure:
	\begin{align}   \label{linearA}
		c^\infty=\sum_{k=1}^K c_k^\infty, \quad J \left( \mathbf{Q} ;{0} \right) = \sum_{k=1}^K J_k \left(Q_k \right)
	\end{align}
	where $c_k^\infty$ is given by  (\ref{cinfequ})  and  $J_k' \left(Q_k \right)$ is determined\footnote{The optimal control policy by solving  (\ref{cenHJB}) only requires the partial derivatives of the value functions $\{\frac{\partial J(\mathbf{Q};L)}{\partial Q_k}:\forall k\}$, so we can focus on deriving $J_k'(Q_k)$ in the based PDE.} by the fixed point equation in (\ref{fixedequ}) in Appendix D. Furthermore, we have the following asymptotic property of $J_k \left( Q_k \right)$:
	\begin{align}	\label{asympoticJ_k}
		J_k \left( Q_k \right) = C_k Q_k, \quad \text{as } Q_k \rightarrow \infty
	\end{align}
	where $C_k=\frac{\beta_k - c_k^\infty}{\mu_k}$ is a positive constant. ~\hfill\IEEEQED
\end{Lemma}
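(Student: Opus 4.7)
The plan is to substitute the separable ansatz $J(\mathbf{Q}; 0) = \sum_{k=1}^K J_k(Q_k)$ into the base PDE of Definition \ref{base_sys} (i.e., (\ref{cenHJB}) with $L=0$) and verify that the equation decouples into $K$ independent scalar PDEs, one for each user. The key observation is that setting $L=0$ kills every cross-channel term in (\ref{rate1}), so $R_k(\mathbf{H}, \mathbf{F}, \mathbf{U}_k)$ depends only on $\mathbf{F}_k$, $\mathbf{U}_k$ and the direct channel $\mathbf{H}_{kk}$. Combined with (i) the already-additive per-stage cost $c(\mathbf{Q}, \mathbf{F}) = \sum_k c_k(Q_k, \mathbf{F}_k)$, (ii) the fact that under the ansatz $\partial J/\partial Q_k = J_k'(Q_k)$ depends only on $Q_k$, and (iii) the mutual independence of $\{\mathbf{H}_{kk}\}$ from Assumption \ref{CSIassum} so that the outer expectation factorizes, the joint minimization over $(\mathbf{F}, \mathbf{U})$ in the PDE splits into $K$ separate scalar problems. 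Setting $c^\infty = \sum_k c_k^\infty$ and matching each user's contribution gives the decomposed form in (\ref{linearA}); the boundary condition $J(Q_1^\star,\dots,Q_K^\star;0) = 0$ is absorbed by choosing each $J_k(Q_k^\star)$ to make the sum vanish.

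For the asymptotic statement (\ref{asympoticJ_k}), I would analyze the $k$-th scalar PDE in the regime $Q_k \gg Q^h$. In this regime $e^{-\eta[Q_k - Q^l]^+} \to 0$ and $e^{-\eta[Q^h - Q_k]^+} = 1$, so the per-stage cost reduces to $\mathrm{Tr}(\mathbf{F}_k \mathbf{F}_k^\dagger) + \beta_k$. Postulating the linear growth $J_k(Q_k) \sim C_k Q_k$ so that $J_k'(Q_k) \to C_k$, the scalar PDE becomes
\begin{align*}
  \mathbb{E}\Big[\min_{\mathbf{F}_k, \mathbf{U}_k}\big[\mathrm{Tr}(\mathbf{F}_k \mathbf{F}_k^\dagger) + C_k\,R_k(\mathbf{H}_{kk}, \mathbf{F}_k, \mathbf{U}_k)\big]\Big] + \beta_k - C_k \mu_k = c_k^\infty.
\end{align*}
The bracketed minimization is exactly the per-stage weighted MMSE-type problem that defines $c_k^\infty$ via (\ref{cinfequ}), so the $\mathbf{Q}$-independent balance forces the ``$\beta_k - C_k \mu_k = 0$ modulo $c_k^\infty$'' relation, yielding $C_k = (\beta_k - c_k^\infty)/\mu_k$. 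The assumption $\beta_k > c_k^\infty$ then guarantees $C_k > 0$, which is consistent with the monotonicity requirement $\partial J/\partial Q_k > 0$ for large $Q_k$ imposed in Theorem \ref{HJB1}.

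The main obstacle will be showing that the two technical hypotheses $e^{\eta(Q^l-Q^h)} < \gamma_k/\beta_k < e^{\eta(Q^h-Q^l)}$ and $\beta_k > c_k^\infty$ actually produce a well-defined $\mathcal{C}^2$ solution of each scalar PDE with the correct growth. The bounds on $\gamma_k/\beta_k$ prevent either penalty term from dominating $c_k(Q_k, \mathbf{F}_k)$ uniformly in $Q_k$, which would otherwise drive $J_k$ to a degenerate shape (the interruption-only or overflow-only regime alluded to in the footnote). Establishing existence of $J_k'(Q_k)$ via the fixed-point equation (\ref{fixedequ}) and then integrating to obtain $J_k$ with linear growth, together with verifying positivity for large $Q_k$, is where most of the work sits; the matching $\theta = c^\infty$ and the linear asymptote of $J_k$ then follow by the balance argument above.
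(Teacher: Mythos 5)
Your decomposition argument for (\ref{linearA}) is sound and is exactly the paper's route: with $L=0$ the rate $R_k$ depends only on $(\mathbf{F}_k,\mathbf{U}_k,\mathbf{H}_{kk})$, the per-stage cost is already additive, and the independence of the $\mathbf{H}_{kk}$ lets the expectation and the minimization split into $K$ per-flow PDEs (this is the paper's Lemma on the decomposed optimality equation). However, there are two genuine gaps beyond that point. First, the substance of the lemma --- that $c_k^\infty$ is the explicit expression (\ref{cinfequ}) and that $J_k'$ is characterized by the fixed-point equation (\ref{fixedequ}) with a unique solution on each side of $Q_k^\star$ --- is precisely the part you defer as ``where most of the work sits.'' The paper obtains it by diagonalizing the per-flow problem via the SVD of $\mathbf{H}_{kk}$, solving the resulting water-filling problem with water level $-J_k'(Q_k)W/\ln 2$, evaluating $\mathbb{E}[\mathrm{Tr}(\mathbf{F}_k^\ast(\mathbf{F}_k^\ast)^\dagger)]$ and $\mathbb{E}[R_k^0]$ in closed form using the unordered-singular-value density, imposing the boundary condition at $Q_k^\star$ (which pins down $\lambda_k$ and hence $c_k^\infty$), and then proving monotonicity of $g(Q_k,\cdot)$ to get existence and uniqueness of $J_k'(Q_k)$. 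Without at least the water-filling structure, none of the quantities the lemma refers to are actually defined, so the proof is incomplete rather than merely abbreviated.

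Second, your asymptotic argument contains an incorrect identification. You claim the bracketed minimization $\min_{\mathbf{F}_k,\mathbf{U}_k}\left[\mathrm{Tr}(\mathbf{F}_k\mathbf{F}_k^\dagger)+C_kR_k\right]$ ``is exactly the per-stage weighted MMSE-type problem that defines $c_k^\infty$.'' It is not: $c_k^\infty$ is computed at $Q_k=Q_k^\star$, where the weight $J_k'(Q_k^\star)=\lambda_k$ is \emph{negative} and the rate is balanced to $\mu_k$; your bracket carries the \emph{positive} weight $C_k$, and since both $\mathrm{Tr}(\mathbf{F}_k\mathbf{F}_k^\dagger)$ and $R_k$ are nonnegative its minimum is identically zero, attained at $\mathbf{F}_k=\mathbf{0}$ (equivalently, the water level in the paper's power allocation is negative once $J_k'>0$). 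Only because the bracket vanishes does the balance $\beta_k-C_k\mu_k=c_k^\infty$ follow; if the bracket really were the quantity entering (\ref{cinfequ}), the arithmetic would give a different $C_k$. Relatedly, you postulate rather than derive that $J_k'(Q_k)$ becomes positive and converges as $Q_k\to\infty$; in the paper this follows from the monotonicity analysis of the fixed-point map $g(Q_k,J_k')$ on $[Q_k^\star,\infty)$ together with $\beta_k>c_k^\infty$. You reach the correct constant $C_k=(\beta_k-c_k^\infty)/\mu_k$, but the mechanism you give for it would not survive scrutiny as written.
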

\begin{proof}
Please refer to Appendix D.
\end{proof}

Next, we  approximate  $J\left(\mathbf{Q} ; L\right)$ as a perturbation of $J\left(\mathbf{Q} ; 0\right)$. Using perturbation analysis, we establish the following theorem on the approximation of $J\left(\mathbf{Q} ; L\right)$.
\begin{Theorem}	[Perturbation Approximation of $J\left(\mathbf{Q} ; L\right)$]	\label{ErrorEg2}	
	 $J\left(\mathbf{Q} ; L\right)$ can be approximated by $J\left(\mathbf{Q} ; 0\right)$ and the first order perturbation term is given by
	 \begin{align}\label{Ex1Error}
	 	J\left(\mathbf{Q} ; L\right) = \sum_{k=1}^K J_k(Q_k) -\sum_{k=1}^K \sum_{j \neq k}  L_{kj} h_{kj}\left(Q_k,Q_j \right)+ \mathcal{O}\left(L^2 \right)
	 \end{align}
	 where $h_{kj}\left(Q_k,Q_j \right)=o(1)$, if  either $Q_k>Q_k^\star$ or $Q_j>Q_j^\star$, and $h_{kj}\left(Q_k,Q_j \right)=E_{kj} \left( Q_k-Q_k^\star\right) +E_{jk} ( Q_j-Q_j^\star)+o(Q_k)+o(Q_j)$ ($k \neq j$), otherwise.    $Q_k^\star = \frac{Q^l+Q^h}{2} + \frac{1}{2 \eta} \ln \frac{\gamma_k}{\beta_k} \in (Q^l, Q^h )$ and $E_{kj}=\frac{\ln 2 \left( c_k^1 D_k +c_k^2 \right)\left(c_j^1D_j+c_j^2 \right) }{2 d W (  \mu_k-c_k^1 \ln(-D_k)-c_k^3)} $ is a  constant ($c_k^1$, $c_k^2$, $c_k^3$ and $D_k$ are given in (\ref{111equ})--(\ref{fixedpointws}) in Appendix E.)~\hfill\IEEEQED
\end{Theorem}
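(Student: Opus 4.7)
The plan is to treat the PDE (\ref{cenHJB}) as a perturbation of the base PDE in Definition \ref{base_sys}, using the worst-case cross-channel gain $L$ as the small parameter, and to expand the value function as
$$J(\mathbf{Q};L)=J(\mathbf{Q};0)+J^{(1)}(\mathbf{Q};L)+\mathcal{O}(L^2),$$
where $J(\mathbf{Q};0)=\sum_k J_k(Q_k)$ is already given in closed form by Lemma \ref{linearAp}, and $J^{(1)}$ is linear in the $L_{kj}$'s. Correspondingly, I would Taylor-expand the rate (\ref{rate1}) around $L_{kj}=0$ for $j\neq k$. Using the identity $\log\det(\mathbf{I}+\mathbf{A}(\mathbf{B}+\mathbf{I})^{-1})=\log\det(\mathbf{I}+\mathbf{A}+\mathbf{B})-\log\det(\mathbf{I}+\mathbf{B})$, the interference-free rate $R_k^{(0)}$ is recovered at zeroth order, and the first-order correction takes the form $R_k = R_k^{(0)} + \sum_{j\neq k} L_{kj} R_k^{(1,j)} + \mathcal{O}(L^2)$ with a trace-type sensitivity coefficient $R_k^{(1,j)}$ in the CSI and in the interference-free optimizers.

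Next, I would substitute both expansions into (\ref{cenHJB}) and match powers of $L$. The $L^0$ equation reproduces the base PDE and is already satisfied. The $L^1$ equation is a \emph{linear} first-order PDE for $J^{(1)}$: by the envelope theorem, perturbations of $(\mathbf{F},\mathbf{U})$ away from the optimizer $(\mathbf{F}^{(0)},\mathbf{U}^{(0)})$ of the base problem contribute only at $\mathcal{O}(L^2)$, so all quantities on the right-hand side are evaluated at $(\mathbf{F}^{(0)},\mathbf{U}^{(0)})$. Schematically,
$$\sum_k \mathbb{E}\!\left[J_k'(Q_k)\!\!\sum_{j\neq k}\!L_{kj}R_k^{(1,j)}\right] + \sum_k \mathbb{E}[R_k^{(0)}-\mu_k]\,\frac{\partial J^{(1)}}{\partial Q_k} = 0 .$$
Because the source term splits into a sum of pairwise $(k,j)$ contributions with $k\neq j$, I would look for a separable ansatz $J^{(1)}=-\sum_{k}\sum_{j\neq k}L_{kj}h_{kj}(Q_k,Q_j)$ and obtain a decoupled linear PDE for each $h_{kj}(Q_k,Q_j)$ in two variables.

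To get the claimed asymptotics of $h_{kj}$, I would exploit the structure of the base problem derived in Lemma \ref{linearAp} and Appendix E. The per-stage cost $c_k(Q_k,\mathbf{F}_k)$ is dominated by the interruption term when $Q_k<Q_k^{\star}=\tfrac{Q^l+Q^h}{2}+\tfrac{1}{2\eta}\ln\tfrac{\gamma_k}{\beta_k}$ and by the overflow term when $Q_k>Q_k^{\star}$; the optimal water-filling solution of the base problem therefore has two distinct regimes, and $J_k'(Q_k)$ is exponentially small whenever $Q_k$ lies away from the reference point on the overflow side. This immediately gives $h_{kj}=o(1)$ once $Q_k>Q_k^{\star}$ or $Q_j>Q_j^{\star}$, since the sensitivity source then decays exponentially in the relevant queue. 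In the complementary region $Q_k\le Q_k^{\star}$, $Q_j\le Q_j^{\star}$, I would insert the linear asymptotics of Lemma \ref{linearAp} together with the closed-form optimizers from Appendix E into the expectation defining the source term, reducing the linear PDE for $h_{kj}$ to one whose forcing is itself linear in $(Q_k-Q_k^{\star})$ and $(Q_j-Q_j^{\star})$. Balancing the drift $(R_k^{(0)}-\mu_k)\partial_{Q_k}h_{kj}+(R_j^{(0)}-\mu_j)\partial_{Q_j}h_{kj}$ against this forcing then yields the linear form $E_{kj}(Q_k-Q_k^{\star})+E_{jk}(Q_j-Q_j^{\star})+o(Q_k)+o(Q_j)$, with $E_{kj}$ read off as the stated ratio.

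The main obstacle will be computing the CSI-averaged sensitivity $\mathbb{E}[J_k'(Q_k)R_k^{(1,j)}]$ in closed form. This requires (i) identifying $R_k^{(1,j)}$ as a trace against the interference-free MMSE precoder/decorrelator of Appendix E, (ii) evaluating the Gaussian CSI expectation using the water-filling constants $c_k^1,c_k^2,c_k^3,D_k$, and (iii) carefully matching the exponential tails so that the $o(1)$ remainder in the ``safe'' region and the linear leading order in the ``deficit'' region both come out cleanly. Once that algebra is in place, the final expression (\ref{Ex1Error}) follows from the separable ansatz and direct solution of the two-variable linear PDE.
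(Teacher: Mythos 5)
Your plan coincides with the paper's own proof: a first-order Taylor/perturbation expansion of the PDE (\ref{cenHJB}) in $L$ around the base optimizers, with the envelope (parametric optimization) argument killing the sensitivity to $(\mathbf{F},\mathbf{U})$ at first order, a pairwise-separable ansatz $\sum_{k}\sum_{j\neq k}L_{kj}\widetilde{J}_{kj}$ yielding a decoupled linear first-order PDE per pair, region-by-region evaluation of the CSI expectations, and reading off $E_{kj}$ from the drift--forcing balance. The closed-form CSI averages you flag as the ``main obstacle'' are exactly what the paper carries out via the unordered-singular-value density and the Gamma/Meijer-G asymptotics already set up in Appendix D.

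One justification in your sketch is wrong, though the conclusion it supports is still correct. You claim $h_{kj}=o(1)$ for $Q_k>Q_k^{\star}$ because ``$J_k'(Q_k)$ is exponentially small'' away from the reference point on the overflow side. By Lemma \ref{linearAp}, $J_k'(Q_k)\rightarrow C_k=(\beta_k-c_k^{\infty})/\mu_k>0$ as $Q_k\rightarrow\infty$; it is not small there. The actual mechanism is that once $J_k'(Q_k)>0$ the water level $-J_k'(Q_k)W/\ln 2$ in (\ref{magniopt}) is negative, so the optimal precoder is $\mathbf{0}$ (Lemma \ref{lemmaappr}); consequently both $\mathbb{E}[R_k^0]$ and the sensitivity $\partial R_k/\partial L_{kj}|_{L=0}$ vanish, which is what makes the source term in the linear PDE for $\widetilde{J}_{kj}$ disappear in that region. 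You should replace the exponential-smallness claim with this shut-off argument; otherwise the region split you rely on has no support.
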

\begin{proof}
Please refer to Appendix E.		
\end{proof}

Finally, based on Theorem \ref{HJB1} and Theorem \ref{ErrorEg2}, we propose the following closed-form approximation of the relative value function:
\begin{align}	\label{finalapprox1}
	V^\ast\left( \mathbf{Q}  \right) \approx  \widetilde{V}\left( \mathbf{Q}  \right) \triangleq  \sum_{k=1}^K J_k(Q_k) -\sum_{k=1}^K \sum_{j \neq k}  L_{kj} \widetilde{h}_{kj}\left(Q_k,Q_j \right)
\end{align}
 where $\widetilde{h}_{kj}\left(Q_k,Q_j \right)=0$, if  either $Q_k>Q_k^\star$ or $Q_j>Q_j^\star$, and $\widetilde{h}_{kj}\left(Q_k,Q_j \right)=E_{kj} ( Q_k-Q_k^\star) +E_{jk} ( Q_j-Q_j^\star)$, otherwise.

\begin{figure}
\centering
  \includegraphics[width=3.5in]{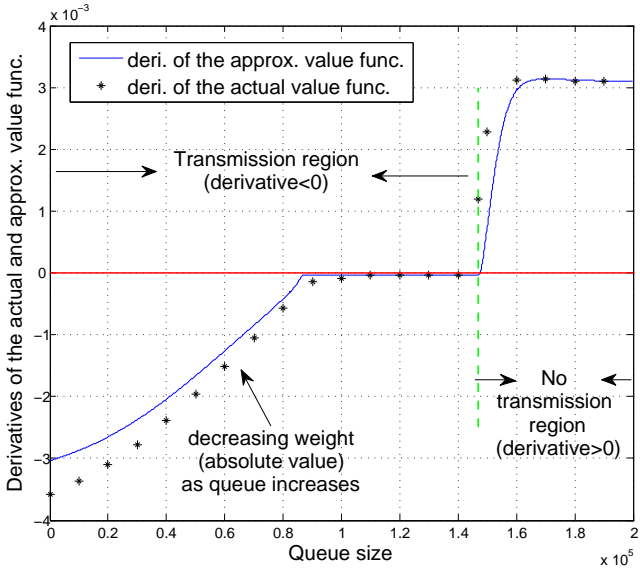}
  \caption{Derivatives of the actual value function $\frac{\partial {V^\ast}\left(\mathbf{Q} \right)}{\partial Q_k}$ and approximate value function $\frac{\partial \widetilde{V}\left(\mathbf{Q} \right)}{\partial Q_k}$ versus the playback queue $Q_k$ with $\mathbf{Q}=(Q_1, 50K, 100K, 150K)$ and $\eta=50$. The system parameters are configured as in the simulations in Section \ref{xxsim}.}
  \label{fig1wdas}
\end{figure}

Furthermore, based on Lemma \ref{linearAp}  and (\ref{finalapprox1}), we have  $\frac{\partial  J \left(\mathbf{Q} ;{L}\right)} {\partial Q_k}>0$ for sufficiently large $Q_k$ ($\forall k$),  and  $\widetilde {V}\left(\mathbf{Q}  \right) =\mathcal{O}\left(\sum_{k=1}^K Q_k \right)$.  Based on Theorem \ref{HJB1} and Theorem 3, the approximation error between the optimal value function $V^\ast \left(\mathbf{Q}  \right)$ in Theorem \ref{LemBel} and the closed-form approximate value function  $\widetilde {V}\left(\mathbf{Q}  \right)$ in (\ref{finalapprox1})  is $\mathcal{O}(L)+ o(1)$. In other words, the error terms are asymptotically small w.r.t. the worst-case cross channel path gain\footnote{\txtblue{Based on the Wi-Fi CSMA/CA example in Section \ref{frisss}, $L$ is very small and hence, the first order approximation in (\ref{finalapprox1}) is quite accurate.}} and the slot duration. Fig.~\ref{fig1wdas} illustrates the comparison of the derivatives of the  actual (optimal) value function and the approximate value function. In the next section, we derive a low complexity  control policy using the closed-form approximate value function in (\ref{finalapprox1}).

\section{low complexity MIMO precoders/decorrelators  based on WMMSE}	\label{algorithm_sec}
In this section, we use the closed-form approximate value function  in (\ref{finalapprox1}) to capture the urgency information of the data flows and to  obtain low complexity dynamic precoder and decorrelator   control.  We first show that minimizing the R.H.S. of the Bellman equation using the closed-form approximate value function is equivalent to solving a collection of  per-stage control problems. We further transform the per-stage problem into an equivalent weighted sum-MSE minimization problem and obtain a low complexity dynamic precoder and decorrelator  control  algorithm by solving this problem.

\subsection{Equivalent Per-Stage Control Problem}
Using the approximate value function in (\ref{finalapprox1}) and Corollary \ref{cor1}, the per-stage control problem (for each state realization $\boldsymbol{\chi}$) is given by\footnote{From  (\ref{finalapprox1}), $\frac{\partial \widetilde {V}\left(\mathbf{Q} \right)}{\partial Q_k} = J_k'(Q_k)-\sum_{j\neq k}\big(L_{kj}\widetilde{h}_{kj}'(Q_k,Q_j)+L_{jk}\widetilde{h}_{jk}'(Q_k,Q_j)\big)$, where $\widetilde{h}_{kj}'(Q_k,Q_j)=0$, if either $Q_k>Q_k^\star$ or $Q_j>Q_j^\star$, and $\widetilde{h}_{kj}'\left(Q_k,Q_j \right)=E_{kj}$, otherwise.}
\begin{align}	\label{utility1}
		\min _{\mathbf{F},\mathbf{U} }  \ \sum_{k=1}^K  \Big(\mathrm{Tr}\left(  \mathbf{F}_k  \mathbf{F}_k^\dagger\right)  +    \frac{\partial \widetilde{V}\left(\mathbf{Q} \right)}{\partial Q_k}  R_k\left(\mathbf{H}, \mathbf{F}, \mathbf{U}_k\right)  \Big)
	\end{align}
Note that the weights $\left\{ \frac{\partial \widetilde{V}\left(\mathbf{Q} \right)}{\partial Q_k}  : \forall k\right\}$ in the above per-stage control problem is determined by the instantaneous QSI. Fig.~\ref{fig1wdas}  illustrates $\frac{\partial \widetilde{V}\left(\mathbf{Q} \right)}{\partial Q_k}$ versus $Q_k$. When $Q_k$ is large, $-\frac{\partial \widetilde{V}\left(\mathbf{Q} \right)}{\partial Q_k}$ is small and hence, the priority of the $k$-th flow is reduced. This is reasonable because when $Q_k$ in the playback buffer of user $k$ is large, the $k$-th flow can withstand intermittent fading or reduction in the instantaneous arrivals for some time before playback interruption occurs.

For given precoding matrices $\mathbf{F}$ and state realization $\boldsymbol{\chi}$, the optimal decoding matrices of the mobile users $\mathbf{U}^\ast\left( \mathbf{F}\right)=\left\{\mathbf{U}_k^\ast\left( \mathbf{F}\right): \forall k \right\}$ are  given by the MMSE receiver \cite{WMMSE}:
\begin{align}	\label{mmserec}
	\mathbf{U}_k^\ast \left( \mathbf{F}\right)=\mathbf{J}_{k}^{-1}\left( \mathbf{F}\right) L_{kk} \mathbf{H}_{kk}\mathbf{F}_k, \quad \forall k
\end{align}
where $\mathbf{J}_{k}\left( \mathbf{F}\right)=\sum_{j=1}^K L_{kj}\mathbf{H}_{kj}\mathbf{F}_j \mathbf{F}_j^\dagger \mathbf{H}_{kj}^\dagger + \mathbf{I}$ is the downlink  signal plus noise covariance matrix.  Using the MMSE receiver in (\ref{mmserec}), the per-stage control problem in (\ref{utility1}) can be further transformed into the following equivalent form:
\begin{align}	\label{utility}
		\min _{\mathbf{F} }  \ \sum_{k=1}^K  \Big(\mathrm{Tr}\left(  \mathbf{F}_k  \mathbf{F}_k^\dagger\right)  +    \frac{\partial \widetilde{V}\left(\mathbf{Q} \right)}{\partial Q_k}  R_k\left(\mathbf{H}, \mathbf{F}, \mathbf{U}_k^\ast\left( \mathbf{F}\right)\right)  \Big)
	\end{align}
where $R_k\left(\mathbf{H}, \mathbf{F}, \mathbf{U}_k^\ast\left( \mathbf{F}\right)\right) = W \log_2 \det \big( \mathbf{I}+ L_{kk} \mathbf{H}_{kk}\mathbf{F}_k \mathbf{F}_k^\dagger \mathbf{H}_{kk}^\dagger \big(\sum_{j \neq k} L_{kj}\mathbf{H}_{kj}\mathbf{F}_j \mathbf{F}_j^\dagger \mathbf{H}_{kj}^\dagger+\mathbf{I} \big)^{-1}\big)$.
\begin{Remark}	[Interpretation of (\ref{utility})]
	The precoding matrices obtained by solving (\ref{utility}) is adaptive to both the CSI and the QSI. Furthermore, for sufficiently  large $Q_k$, $\frac{\partial \widetilde{V}\left(\mathbf{Q} \right)}{\partial Q_k}$ is positive, which results in the associated optimal precoding matrices\footnote{Please refer to Lemma \ref{lemmaappr} in Appendix C for the detailed proof.} to be $\mathbf{0}$.  Therefore, for given QSI realization $\mathbf{Q}$, we can focus on the MIMO precoder/decorrelator design for the set of users $\mathcal{I}_k^\mathbf{Q}=\left\{k: \frac{\partial \widetilde{V}\left(\mathbf{Q} \right)}{\partial Q_k} < 0 \right\}$, while the precoders/decorrelators  for the other users  ($k \notin \mathcal{I}_k^\mathbf{Q}$) are set to be $\mathbf{0}$.~\hfill\IEEEQED
\end{Remark}

\subsection{Low Complexity MIMO Precoders/Decorrelators Solution}
The per-stage problem in (\ref{utility}) can be further transformed into the following  weighted sum-MSE minimization problem  \cite{WMMSE}:
\begin{Problem}	[Weighted Sum-MSE  Minimization Problem]	\label{mmseprob}
\begin{align}
	\min _{\mathbf{F}, \mathbf{Z}, \mathbf{K}}  \ \sum_{k \in \mathcal{I}_k^\mathbf{Q}}  \big( \mathrm{Tr}(  \mathbf{F}_k  \mathbf{F}_k^\dagger)  -   \frac{\partial \widetilde{V}\left(\mathbf{Q} \right)}{\partial Q_k}  \frac{W}{\ln 2} \notag \\
	\big( \mathrm{Tr}\left(\mathbf{Z}_k \mathbf{E}_k\left( \mathbf{F}, \mathbf{K}\right) \right)- \ln \mathrm{det} \mathbf{Z}_k \big)\big)\label{utility1}
\end{align} where we denote $\mathbf{Z}=\left\{\mathbf{Z}_k:  k\in\mathcal{I}_k^\mathbf{Q} \right\}$ and $\mathbf{Z}_k\succeq0$ is a  weight for user $k$,   $\mathbf{K}=\left\{\mathbf{K}_k:  k\in\mathcal{I}_k^\mathbf{Q}  \right\}$  and $\mathbf{E}_k$ is the MSE  given by
\begin{align}		
	&\mathbf{E}_k\left( \mathbf{F}, \mathbf{K}\right) =\left(\mathbf{I}-\sqrt{L_{kk}}\mathbf{K}_k^\dagger \mathbf{H}_{kk} \mathbf{F}_k \right) \left(\mathbf{I}-\sqrt{L_{kk}}\mathbf{K}_k^\dagger \mathbf{H}_{kk} \mathbf{F}_k \right)^\dagger	\notag \\
	&+ \sum_{j \neq k \atop j \in\mathcal{I}_k^\mathbf{Q}}L_{kj}\mathbf{K}_k^\dagger \mathbf{H}_{kj} \mathbf{F}_j \mathbf{F}_j^\dagger  \mathbf{H}_{kj}^\dagger \mathbf{K}_k + \mathbf{K}_k^\dagger \mathbf{K}_k \label{mmesdef}
\end{align}~\hfill\IEEEQED
\end{Problem}
\begin{Lemma}	\label{equivalenceprobs}	\emph{(Relationship Between the Problems in (\ref{utility}) and (\ref{utility1}))}	
	The problem in (\ref{utility1}) is equivalent to the problem in (\ref{utility}), i.e., the global optimal solution $\mathbf{F}^{\ast}$ for the two problems are identical.~\hfill\IEEEQED
\end{Lemma}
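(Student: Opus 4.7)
The plan is to use the classical WMMSE-to-weighted-sum-rate equivalence, namely to perform partial minimization of the augmented objective in Problem \ref{mmseprob} over the auxiliary variables $(\mathbf{Z},\mathbf{K})$ in closed form, and then compare the resulting reduced objective in $\mathbf{F}$ with (\ref{utility}). First, I would observe that in (\ref{utility}) every user $k\notin\mathcal{I}_k^{\mathbf{Q}}$ has $\frac{\partial \widetilde V(\mathbf{Q})}{\partial Q_k}\geq 0$, while $R_k(\mathbf{H},\mathbf{F},\mathbf{U}_k^\ast(\mathbf{F}))\geq 0$ and $\mathrm{Tr}(\mathbf{F}_k\mathbf{F}_k^\dagger)\geq 0$; hence the global minimizer of (\ref{utility}) must set $\mathbf{F}_k=\mathbf{0}$ for every such $k$, so (\ref{utility}) effectively reduces to an optimization over $\{\mathbf{F}_k:k\in\mathcal{I}_k^{\mathbf{Q}}\}$, which is exactly the domain used in Problem \ref{mmseprob}.

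Next, for each $k\in\mathcal{I}_k^{\mathbf{Q}}$ the coefficient $-\frac{\partial\widetilde V}{\partial Q_k}\frac{W}{\ln 2}$ is strictly positive and $\mathbf{Z}_k\succeq \mathbf{0}$, so the inner problem is convex separately in $\mathbf{K}_k$ and in $\mathbf{Z}_k$. Holding $(\mathbf{F},\mathbf{Z})$ fixed and minimizing over $\mathbf{K}_k$, differentiating $\mathrm{Tr}(\mathbf{Z}_k\mathbf{E}_k(\mathbf{F},\mathbf{K}))$ from (\ref{mmesdef}) with respect to $\mathbf{K}_k^\ast$ and setting the result to zero yields the MMSE receiver
\begin{equation*}
\mathbf{K}_k^{\mathrm{opt}}(\mathbf{F})=\sqrt{L_{kk}}\,\mathbf{J}_k^{-1}(\mathbf{F})\,\mathbf{H}_{kk}\mathbf{F}_k,
\end{equation*}
identical to $\mathbf{U}_k^\ast(\mathbf{F})$ in (\ref{mmserec}). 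Substituting $\mathbf{K}_k^{\mathrm{opt}}$ back into (\ref{mmesdef}) gives the MMSE matrix $\mathbf{E}_k^{\mathrm{mmse}}=\mathbf{I}-L_{kk}\mathbf{F}_k^\dagger\mathbf{H}_{kk}^\dagger\mathbf{J}_k^{-1}(\mathbf{F})\mathbf{H}_{kk}\mathbf{F}_k$. Then, holding $\mathbf{F}$ fixed and minimizing the convex map $\mathbf{Z}_k\mapsto \mathrm{Tr}(\mathbf{Z}_k\mathbf{E}_k^{\mathrm{mmse}})-\ln\det\mathbf{Z}_k$ over $\mathbf{Z}_k\succ 0$ gives $\mathbf{Z}_k^{\mathrm{opt}}=(\mathbf{E}_k^{\mathrm{mmse}})^{-1}$, with minimum value $d-\ln\det(\mathbf{E}_k^{\mathrm{mmse}})^{-1}$.

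The key step is then the standard MMSE-rate identity: applying the Woodbury matrix inversion lemma to $\mathbf{E}_k^{\mathrm{mmse}}$ and using Sylvester's determinant theorem gives
\begin{equation*}
\det(\mathbf{E}_k^{\mathrm{mmse}})^{-1}
=\det\!\Big(\mathbf{I}+L_{kk}\mathbf{H}_{kk}\mathbf{F}_k\mathbf{F}_k^\dagger\mathbf{H}_{kk}^\dagger\Big(\!\sum_{j\neq k}L_{kj}\mathbf{H}_{kj}\mathbf{F}_j\mathbf{F}_j^\dagger\mathbf{H}_{kj}^\dagger+\mathbf{I}\Big)^{\!-1}\Big)=2^{R_k(\mathbf{H},\mathbf{F},\mathbf{U}_k^\ast(\mathbf{F}))/W},
\end{equation*}
so $\ln\det(\mathbf{E}_k^{\mathrm{mmse}})^{-1}=\frac{\ln 2}{W}R_k(\mathbf{H},\mathbf{F},\mathbf{U}_k^\ast(\mathbf{F}))$. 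Substituting $(\mathbf{K}^{\mathrm{opt}},\mathbf{Z}^{\mathrm{opt}})$ back into the objective of Problem \ref{mmseprob} therefore yields
\begin{equation*}
\sum_{k\in\mathcal{I}_k^{\mathbf{Q}}}\!\Big(\mathrm{Tr}(\mathbf{F}_k\mathbf{F}_k^\dagger)+\frac{\partial\widetilde V(\mathbf{Q})}{\partial Q_k}R_k(\mathbf{H},\mathbf{F},\mathbf{U}_k^\ast(\mathbf{F}))\Big)\;-\;\frac{Wd}{\ln 2}\sum_{k\in\mathcal{I}_k^{\mathbf{Q}}}\frac{\partial\widetilde V(\mathbf{Q})}{\partial Q_k},
\end{equation*}
where the second sum is a constant independent of $\mathbf{F}$. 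This reduced objective coincides with (\ref{utility}) restricted to $\mathcal{I}_k^{\mathbf{Q}}$ up to that additive constant, and by the first paragraph this restriction shares the same global minimizer as (\ref{utility}); hence the two problems have identical global minimizers $\mathbf{F}^\ast$.

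The main obstacle I anticipate is not the optimization itself (each inner minimization is smooth and strictly convex) but the bookkeeping in the matrix-inversion-lemma step, where one must verify that the determinantal identity applied to $\mathbf{E}_k^{\mathrm{mmse}}$ exactly recovers the rate expression $R_k(\mathbf{H},\mathbf{F},\mathbf{U}_k^\ast(\mathbf{F}))$ appearing in (\ref{utility}) after the MMSE receiver substitution, and that the sign handling (using $\frac{\partial\widetilde V}{\partial Q_k}<0$ on $\mathcal{I}_k^{\mathbf{Q}}$ to ensure the positive weighting required by MSE minimization) is consistent throughout.
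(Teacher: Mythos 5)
Your proposal is correct and is exactly the argument the paper invokes: the paper's proof is a one-line citation to the WMMSE equivalence of \cite{WMMSE} (Theorem 1 there), which is precisely the partial minimization over $(\mathbf{K},\mathbf{Z})$ followed by the Woodbury/Sylvester identity $\det(\mathbf{E}_k^{\mathrm{mmse}})^{-1}=2^{R_k/W}$ that you carry out. Your additional first paragraph (reducing to $k\in\mathcal{I}_k^{\mathbf{Q}}$) matches Lemma \ref{lemmaappr} in Appendix C, so the proposal simply supplies the details the paper omits.
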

\begin{proof}
The proof follows similar approach as in [17, Theorem 1]. Details are omitted due to page limit.  		
\end{proof}

The sum-MSE  cost function in (\ref{utility1}) is not jointly convex in all  the optimization variables $\left\{\mathbf{F}_k, \mathbf{Z}_k, \mathbf{K}_k: \right.\\ \left. k \in \mathcal{I}_k^\mathbf{Q}  \right\}$, but it is convex in each of $\left\{\mathbf{F}_k, \mathbf{Z}_k, \mathbf{K}_k:  k\in \mathcal{I}_k^\mathbf{Q} \right\}$ while holding the others fixed. Therefore, we propose to use an alternating iterative  algorithm to solve the WMMSE problem in Problem \ref{mmseprob}. In particular, we minimize the sum-MSE cost function by sequentially updating one  of $\left\{\mathbf{F}_k, \mathbf{Z}_k, \mathbf{K}_k:  k \in \mathcal{I}_k^\mathbf{Q} \right\}$ and fixing the others. The precoder and decorrelator  control algorithm based on WMMSE is given as follows:

\begin{Algorithm}	\label{wmmsealg} \emph{(Low Complexity Dynamic Precoder and Decorrelator  Control:)}
	\begin{itemize}		
		\item	\textbf{Step 1 [Initialization]:} Set $n=0$ and each BS $k$ initializes $\mathbf{F}_k(0)$.
		\item	\textbf{Step 2 [Message Passing between BSs and Mobile Users]:} Each user $k$ broadcasts its local QSI to the  $K$ BSs, and then each BS $k$ calculates $\left\{\frac{\partial \widetilde{V}\left(\mathbf{Q} \right)}{\partial Q_k}:\forall k\right\}$ locally. If $\frac{\partial \widetilde{V}\left(\mathbf{Q} \right)}{\partial Q_k}< 0$, the $k$-th Tx-Rx pair will participate in the precoder/decorrelator iterative calculations in the current slot. Otherwise, the associated precoder/decorrelator are set to be $\mathbf{0}$.  
		\item	\textbf{Step 3 [Update on $\mathbf{K}$ and $\mathbf{Z}$]:} Each BS $k$ ($k \in \mathcal{I}_k^\mathbf{Q}$) informs the associated mobile user $k$ of the updated $\mathbf{F}_k$. Each user $k$ ($k \in \mathcal{I}_k^\mathbf{Q}$) \txtblue{locally estimates the  downlink signal plus noise covariance matrix $\mathbf{J}_{k}\left( \mathbf{F}\right)$,} and updates $\mathbf{K}_k$ and $\mathbf{Z}_k$ according to the following equations:
			\begin{align}
				 & \mathbf{K}_k\left( n + 1 \right) = \mathbf{J}_{k}^{-1}\left( \mathbf{F}(n)\right) \sqrt{L_{kk}}\mathbf{H}_{kk}\mathbf{F}_k\left( n \right)	 \label{inv1}\\
				 & \mathbf{Z}_k\left( n + 1 \right) = \left(\mathbf{I}-  \mathbf{K}_k^\dagger \left( n \right)\sqrt{L_{kk}} \mathbf{H}_{kk} \mathbf{F}_k\left( n \right) \right)^{-1}		\label{inv2}
			\end{align}
		\item	\textbf{Step 4 [Update on $\mathbf{F}$]:} Each user $k$ ($k \in \mathcal{I}_k^\mathbf{Q}$) feeds back the updated $\mathbf{K}_k, \mathbf{Z}_k$   to each associated  BS $k$. Each BS $k$ locally updates $\mathbf{F}_k$  according to the following equations:
			\begin{align}
				& {\mathbf{F}}_k\left( n + 1 \right)=-\frac{\partial \widetilde{V}\left(\mathbf{Q} \right)}{\partial Q_k}\frac{W}{\ln 2} \Big(\sum_{j}-\frac{\partial \widetilde{V}\left(\mathbf{Q} \right)}{\partial Q_j} \frac{W}{\ln 2} L_{jk} 		\notag  \\
				&\mathbf{H}_{jk}^\dagger  \mathbf{K}_j \left( n+1 \right)  \mathbf{Z}_j\left( n+1 \right) \mathbf{K}_j^\dagger \left( n+1 \right)  \mathbf{H}_{jk} +  \mathbf{I} \Big)^{-1} \notag \\
				&\sqrt{L_{kk}} \mathbf{H}_{kk}^\dagger \mathbf{K}_k\left( n+1 \right) \mathbf{Z}_k\left( n+1 \right)\label{inv3} 	
			\end{align}
			\item	\textbf{Step 5 [Termination]:}  Set $n = n + 1$ and go to Step 3 until a certain termination condition is satisfied.~\hfill\IEEEQED
	\end{itemize}
\end{Algorithm}

\begin{figure}
\subfigure[\txtblue{Signaling flow with explicit QSI feedback.}]{
\hspace{1.95cm}\includegraphics[width=2.75in]{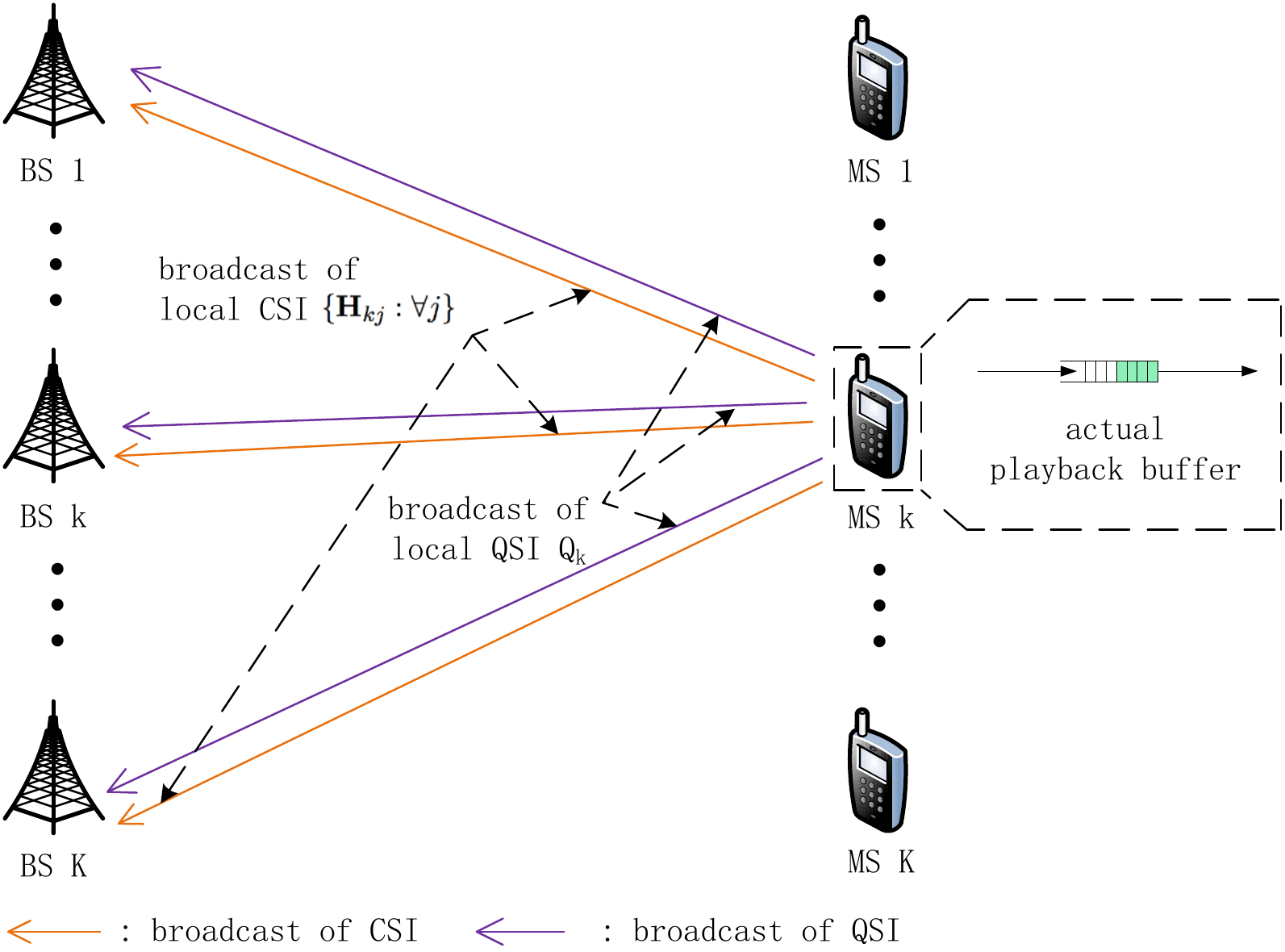}\label{fig1wdassadaasda}}
\hspace{0.5cm}
\subfigure[\txtblue{Signaling flow with virtual queue  at each BS and without explicit QSI feedback.}]{
\includegraphics[width=3.5in]{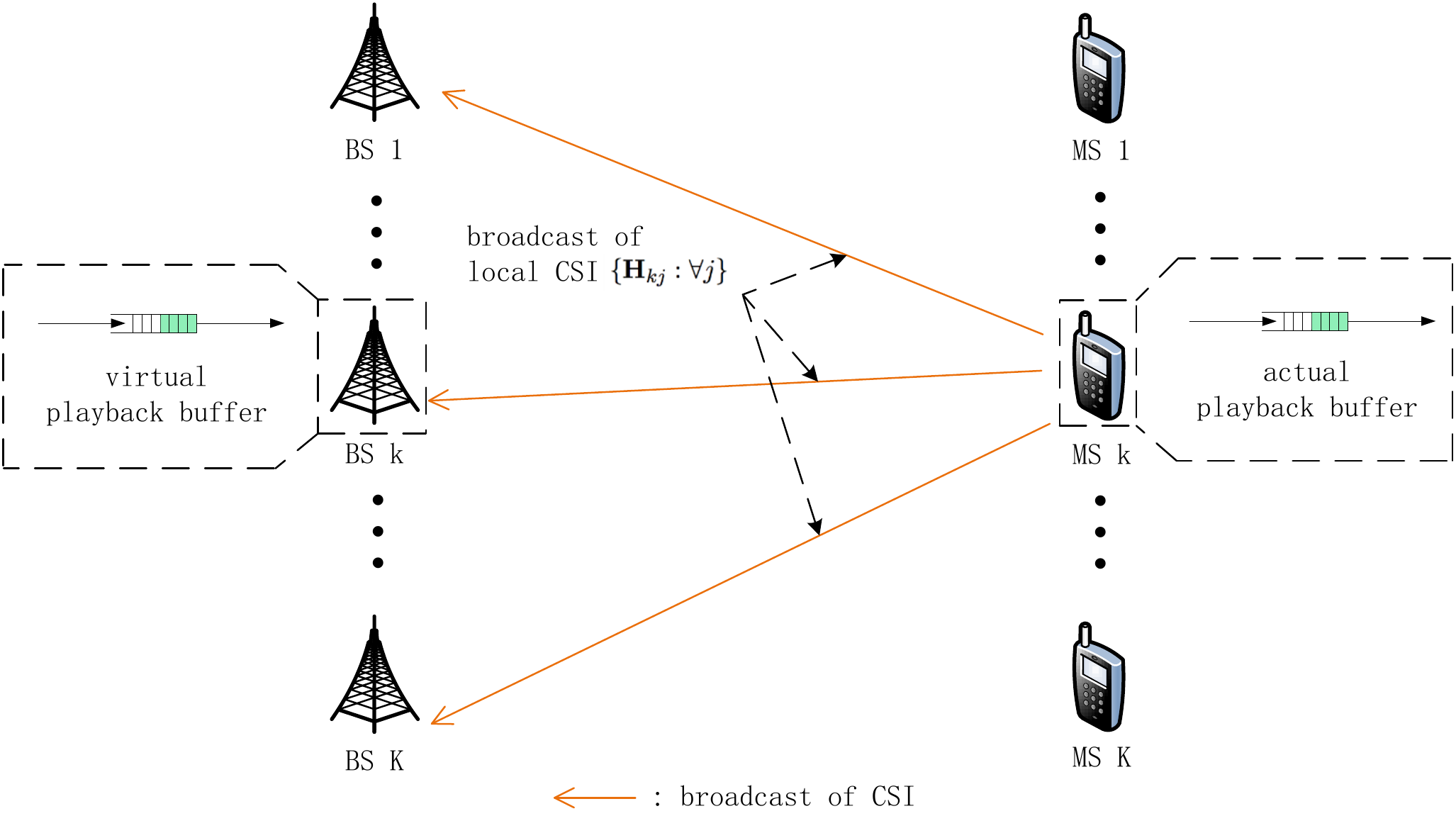}\label{fig1wdassadab}}
\centering \caption{\txtblue{Illustrations of the signaling flow of Algorithm 1 and  the virtual queue  at the BS for each Tx-Rx pair $k$.}}
\label{fig1wdassada}
\end{figure}

\begin{Lemma}	[Convergence Property of Algorithm \ref{wmmsealg}]	\label{globalopt}
	Any limiting point $\left\{\mathbf{F}(\infty), \mathbf{Z}(\infty), \mathbf{K}(\infty) \right\}$ of Algorithm \ref{wmmsealg} is a   stationary point of Problem \ref{mmseprob},  and   $\mathbf{F}(\infty)$, $\mathbf{K}(\infty)$ (corresponding to $\mathbf{U}^\ast$ in (\ref{mmserec}) according to (\ref{inv1})) is a stationary point of the problem in (\ref{utility}). Furthermore, Algorithm \ref{wmmsealg} converges to the unique global optimal point of  the problem in (\ref{utility}) for sufficiently small $L$.~\hfill\IEEEQED
\end{Lemma}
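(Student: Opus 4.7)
The plan is to combine a standard block coordinate descent (BCD) convergence theorem on the WMMSE reformulation with a perturbation argument around the interference-free limit $L=0$. First, observe that Algorithm \ref{wmmsealg} is exactly a BCD scheme on the three blocks $(\mathbf{K},\mathbf{Z},\mathbf{F})$ of Problem \ref{mmseprob}. For each $k\in\mathcal{I}_k^\mathbf{Q}$ the weight $-\partial\widetilde{V}/\partial Q_k$ is strictly positive, so the cost in (\ref{utility1}) is strictly convex in each block while the others are held fixed: a positive-definite quadratic in $\mathbf{K}_k$ (from $\mathbf{K}_k^\dagger\mathbf{K}_k$ plus the interference-covariance terms), strictly convex in $\mathbf{Z}_k\succ 0$ via $\mathrm{Tr}(\mathbf{Z}_k\mathbf{E}_k)-\ln\det\mathbf{Z}_k$, and a positive-definite quadratic in $\mathbf{F}_k$ (from $\mathrm{Tr}(\mathbf{F}_k\mathbf{F}_k^\dagger)$ plus positively-weighted convex MSE terms). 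Setting each per-block gradient to zero yields precisely the closed-form updates (\ref{inv1})--(\ref{inv3}), so every block step is the unique global minimizer of its subproblem.

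Given strict convexity and unique block minimizers, I would invoke the classical BCD convergence result (e.g., Bertsekas, \emph{Nonlinear Programming}, Prop.~2.7.1): the cost sequence is monotonically decreasing and bounded below, so it converges, and every limit point of the iterates is a block-coordinatewise minimum and therefore a stationary point of Problem \ref{mmseprob}. To transfer this stationarity back to problem (\ref{utility}), reuse the WMMSE-equivalence argument underlying Lemma \ref{equivalenceprobs} (paralleling [17, Thm.~1]): at any stationary point of Problem \ref{mmseprob}, the $\mathbf{K}$-block condition forces $\mathbf{K}_k=\mathbf{U}_k^\ast(\mathbf{F})$ as in (\ref{mmserec}), and the $\mathbf{Z}$-block forces $\mathbf{Z}_k=\mathbf{E}_k^{-1}$. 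Substituting these eliminations into the $\mathbf{F}$-block stationarity condition reproduces exactly the first-order optimality condition of (\ref{utility}) with $\mathbf{U}=\mathbf{U}^\ast(\mathbf{F})$, giving the first claim.

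For the second claim I would perturb around $L=0$. When the cross-link path gains vanish, problem (\ref{utility}) decouples into $K$ independent single-user problems
\[
\min_{\mathbf{F}_k}\ \mathrm{Tr}(\mathbf{F}_k\mathbf{F}_k^\dagger)+\tfrac{\partial\widetilde{V}}{\partial Q_k}\,W\log_2\det\bigl(\mathbf{I}+L_{kk}\mathbf{H}_{kk}\mathbf{F}_k\mathbf{F}_k^\dagger\mathbf{H}_{kk}^\dagger\bigr),
\]
each of which is convex in the covariance $\mathbf{F}_k\mathbf{F}_k^\dagger$ (since $-\partial\widetilde{V}/\partial Q_k>0$ and $\log\det$ is concave) and admits a unique global optimum given by standard water-filling. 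By smooth dependence of the cost on $(\mathbf{F},L)$ and strict second-order sufficiency at the $L=0$ solution, the implicit function theorem produces a unique smooth continuation of this global minimizer for all sufficiently small $L$ and preserves positive definiteness of the Hessian of (\ref{utility}) in $\mathbf{F}$; within that neighborhood the stationary point of (\ref{utility}) is unique, so the BCD limit point must be that global optimum.

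The main obstacle is precisely this last step: for $L>0$ the joint problem is non-convex and BCD stationarity alone does not imply global optimality, so ``sufficiently small $L$'' has to be pinned down by a second-order/Lipschitz bound showing that the interference-induced perturbation of the $\mathbf{F}$-Hessian, which scales at most linearly in $L$, cannot flip any curvature eigenvalue of the separable $L=0$ Hessian.
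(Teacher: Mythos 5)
Your first claim tracks the paper's own proof, which simply defers to the KKT-equivalence argument of the WMMSE reference: Algorithm \ref{wmmsealg} is three-block alternating minimization on Problem \ref{mmseprob}, each block subproblem is strictly convex with the closed-form minimizers (\ref{inv1})--(\ref{inv3}), and eliminating $\mathbf{K}$ and $\mathbf{Z}$ at a stationary point recovers the first-order conditions of (\ref{utility}) with $\mathbf{U}=\mathbf{U}^\ast(\mathbf{F})$ from (\ref{mmserec}). No issue there.

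For the ``unique global optimum for sufficiently small $L$'' claim, your route genuinely differs from the paper's, and the difference is where the gap lies. The paper proves (Lemma \ref{lemma8} in Appendix F) that the reduced objective $f(\mathbf{F},L)$ of (\ref{utility}) is \emph{globally convex in $\mathbf{F}$} for sufficiently small $L$: it evaluates $\frac{\mathrm{d}^2}{\mathrm{d}t^2}f\big(t\mathbf{F}^{(1)}+(1-t)\mathbf{F}^{(2)},L\big)$ along an arbitrary line segment and shows that the interference-induced terms scale as $\mathcal{O}(L)$ or $\mathcal{O}(L^2)$ and cannot overturn the nonnegative dominant terms (using $a_k\le 0$ and the Hermitian structure of the perturbation). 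Global convexity is exactly what upgrades ``stationary point of (\ref{utility})'' to ``unique global optimum.'' Your implicit-function-theorem continuation from the $L=0$ water-filling solution cannot deliver this, for two reasons. First, strict second-order sufficiency fails in the $\mathbf{F}$ parametrization: the objective is invariant under $\mathbf{F}_k\mapsto\mathbf{F}_k\mathbf{V}_k$ for any unitary $\mathbf{V}_k$ (and water-filling may null weak streams), so the Hessian at the $L=0$ minimizer is singular along the orbit directions and the IFT does not apply as stated; you would have to pass to covariance variables or quotient out the invariance, at which point ``uniqueness'' of the minimizer itself needs requalification. Second, even a locally unique, nondegenerate continuation branch says nothing about stationary points of the non-convex problem that lie far from the $L=0$ solution, so you cannot conclude that the BCD limit point coincides with the global optimum. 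The uniform curvature bound you defer to in your final sentence is therefore not a refinement of your argument but its missing core, and it is precisely what the paper's Lemma \ref{lemma8} supplies.
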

\begin{proof}
Please refer to Appendix F.		
\end{proof}

\subsection{Summary of the  Dynamic MIMO Precoder/Decorrelator Control and Performance Analysis}
\txtblue{Fig.~\ref{fig1wdassada} illustrates the signaling flow of Algorithm \ref{wmmsealg}. We  have the following remark discussing the signaling overhead  of the algorithm:}
\begin{Remark}	[\txtblue{Signaling Overhead  of Algorithm \ref{wmmsealg}}]	
\txtblue{Our proposed algorithm  has very low signaling overhead. To implement Algorithm 1, all the   BSs need to know the CSI matrices of the interference channels, and the QSI at the playback buffers of all the  mobile users. Specifically,}
	\begin{itemize}
		\item	{\txtblue{\textbf{CSI Signaling:}}} \txtblue{The knowledge of the CSI matrices at all the $K$  BSs can be achieved by each mobile user  $k$ broadcasting the local CSI measurements $\left\{\mathbf{H}_{kj}: \forall j \right\}$  to the $K$  BSs as illustrated in Fig.~\ref{fig1wdassada}. This CSI signaling requirement is the same as the conventional  interference mitigation schemes  in cooperative/coordinate MIMO  \cite{WMMSE}, \cite{zfbf}.}
		\item	{\txtblue{\textbf{QSI Signaling:}}} \txtblue{Besides the CSI signaling, an additional signaling requirement is the QSI. This can be achieved by each user  $k$ broadcasting  $Q_k$ to all the  $K$ BSs as illustrated in Fig.~\ref{fig1wdassadaasda}. Since $Q_k$  is a scalar, the additional signaling cost is negligible compared with the CSI signaling (which is a matrix feedback), and such scalar signaling can  be easily supported by the existing LTE measurement messages \cite{lte}. Furthermore, for constant playback rate at each mobile user, there is no need to explicitly feedback $Q_k$  to the BSs, because each BS $k$  can keep track of the transmit bits to the associated MS  $k$. Therefore, each BS  $k$ can maintain a virtual queue process as shown in Fig.~\ref{fig1wdassadab}, which has the same queue dynamics as the playback buffer at the mobile user.~\hfill\IEEEQED}				
	\end{itemize}
\end{Remark}

We  have the following remark discussing the complexity  of the Algorithm \ref{wmmsealg}:
\begin{Remark}	[Complexity Analysis of Algorithm \ref{wmmsealg}]
The computational complexity of Algorithm \ref{wmmsealg} is very low. Specifically, the complexity comes from computing the approximate value function in (\ref{finalapprox1}) and the precoding matrices in Algorithm \ref{wmmsealg}. The complexity of computing the closed-form approximate value function is very low compared with conventional  value iteration  methods \cite{DP_Bertsekas}. Computing the precoding matrices in Algorithm \ref{wmmsealg} is fast since each mobile user only needs to do twice matrix inversions (as in (\ref{inv1}) and (\ref{inv2})) and each BS needs to do one matrix inversions (as in (\ref{inv3})) based on local information at each time slot. Table.~\ref{tabletime} illustrates the comparison of the MATLAB computational time of the proposed solution, the baselines and the brute-force value iteration algorithm \cite{DP_Bertsekas}.~\hfill\IEEEQED
\end{Remark}

Finally, we analyze the  performance  gap between the optimal solution (by solving (\ref{OrgBel})) and the low complexity solution in Algorithm \ref{wmmsealg}.  Let $\widetilde{\Omega}$ represent the precoder and decorrelator  control  policy  in  Algorithm \ref{wmmsealg} and $\widetilde{\theta}=\limsup_{T \rightarrow \infty} \frac{1}{T} \sum_{t=0}^{T-1} \mathbb{E}^{\widetilde{\Omega}}  \left[c\left(\mathbf{Q} \left(t\right), \Omega\left(\boldsymbol{\chi}\left(t\right) \right)\right)    \right]$ be the associated average  performance. The performance gap between $\widetilde{\theta}$ and the optimal average  cost $\theta^{\ast}$ in (\ref{OrgBel}) is established in the following theorem:
\begin{Theorem}	[Performance Gap between $\widetilde{\theta}$ and $\theta^{\ast}$]	\label{perfgap}
	The performance gap between $\widetilde{\theta}$ and $\theta^{\ast}$ is given by
	\begin{align}		\label{perfgapequ}
		\widetilde{\theta} - \theta^{\ast} =\mathcal{O}(L)+o(1), \qquad \text{as } L \rightarrow 0, \tau  \rightarrow 0
	\end{align}~\hfill\IEEEQED
	\end{Theorem}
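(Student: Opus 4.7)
The plan is to route the comparison through the PDE-derived constant $c^\infty$ in Theorem \ref{HJB1}. Writing $\widetilde{\theta}-\theta^\ast = (\widetilde{\theta}-c^\infty) + (c^\infty-\theta^\ast)$, the second bracket is $o(1)$ directly from Theorem \ref{HJB1} (equivalently, via Corollary \ref{cor1}), so the task reduces to showing $\widetilde{\theta}-c^\infty \le \mathcal{O}(L)+o(1)$; the reverse inequality $\widetilde{\theta}\ge\theta^\ast$ is immediate from the optimality of $\theta^\ast$.

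To bound $\widetilde{\theta}-c^\infty$, I would use $\widetilde{V}$ in (\ref{finalapprox1}) as a Lyapunov test function along the trajectory generated by Algorithm \ref{wmmsealg}. Since $\widetilde{V}$ is $\mathcal{C}^2$ on the regions away from the kinks $Q_k\in\{Q^l,Q_k^\star,Q^h\}$, a Taylor expansion of the queue dynamics (\ref{Qdyn}) in $\tau$ yields the conditional drift
\begin{align*}
\mathbb{E}\bigl[\widetilde{V}(\mathbf{Q}(t{+}1))-\widetilde{V}(\mathbf{Q}(t))\,\big|\,\mathbf{Q}(t)\bigr] = \tau\,\mathbb{E}\Bigl[\sum_{k=1}^K \frac{\partial\widetilde{V}}{\partial Q_k}\bigl(R_k(\mathbf{H},\widetilde{\mathbf{F}},\widetilde{\mathbf{U}}_k)-\mu_k\bigr)\,\Big|\,\mathbf{Q}(t)\Bigr] + o(\tau).
\end{align*}
By Lemma \ref{globalopt}, for sufficiently small $L$ the policy $\widetilde{\Omega}$ attains the global minimum of (\ref{utility}), hence the pointwise minimum of $c(\mathbf{Q},\mathbf{F})+\sum_k \partial_{Q_k}\widetilde{V}\,(R_k-\mu_k)$. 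Substituting the expansion $\widetilde{V}=J(\mathbf{Q};L)+\mathcal{O}(L)$ guaranteed by Theorem \ref{ErrorEg2} together with (\ref{finalapprox1}), and then invoking the PDE (\ref{cenHJB}) for $J(\mathbf{Q};L)$ at its own minimizer, I would obtain
\begin{align*}
\mathbb{E}\bigl[c(\mathbf{Q}(t),\widetilde{\mathbf{F}}(t))\,\big|\,\mathbf{Q}(t)\bigr]\tau + \mathbb{E}\bigl[\widetilde{V}(\mathbf{Q}(t{+}1))-\widetilde{V}(\mathbf{Q}(t))\,\big|\,\mathbf{Q}(t)\bigr] = c^\infty\tau + \mathcal{O}(L)\tau + o(\tau).
\end{align*}

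Next I would take unconditional expectation, sum over $t=0,\dots,T-1$, divide by $T\tau$, and let $T\to\infty$; the transversality condition $\lim_T \frac{1}{T}\mathbb{E}^{\widetilde{\Omega}}[\widetilde{V}(\mathbf{Q}(T))]=0$—which holds because $\widetilde{V}(\mathbf{Q})=\mathcal{O}(\sum_k Q_k)$ by Lemma \ref{linearAp} while $\widetilde{\Omega}$ is admissible under Definition \ref{adddtdomain}—collapses the telescoping drift and gives $\widetilde{\theta}\le c^\infty+\mathcal{O}(L)+o(1)$. Combining with $c^\infty-\theta^\ast = o(1)$ then proves (\ref{perfgapequ}).

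The main obstacle is the error accounting in the middle step: I need the $\mathcal{O}(L)$ bound on $\widetilde{V}-J(\mathbf{Q};L)$ to translate, after pointwise minimization in $(\mathbf{F},\mathbf{U})$ and expectation over $\mathbf{H}$, into an $\mathcal{O}(L)$ gap at the per-stage Hamiltonian \emph{uniformly} in $\mathbf{Q}$. This is delicate because the residual $h_{kj}-\widetilde{h}_{kj}$ in Theorem \ref{ErrorEg2} contains $o(Q_k)+o(Q_j)$ contributions, so after multiplication by $\partial_{Q_k}\widetilde{V}$ (controlled via the linear-growth structure of $J_k$ in Lemma \ref{linearAp}) and interaction with the weak cross gain $L$, one must argue these remain uniformly $\mathcal{O}(L)$ on the recurrent region of the QSI chain. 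A secondary subtlety is verifying admissibility of $\widetilde{\Omega}$: the argument should exploit the fact that Algorithm \ref{wmmsealg} assigns zero power precisely when $\partial_{Q_k}\widetilde{V}\ge 0$, which together with the constant drain $-\mu_k\tau$ prevents unbounded growth of any $Q_k$ and thus secures both the stability and transversality requirements of Definition \ref{adddtdomain}.
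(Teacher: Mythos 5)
Your proposal is correct and follows essentially the same route as the paper's Appendix G: both arguments use $\widetilde{V}$ as a Lyapunov/test function in a drift identity under $\widetilde{\Omega}$, invoke Lemma \ref{globalopt} to assert that $\widetilde{\Omega}$ attains the minimum of the Hamiltonian built from $\widetilde{V}$, and propagate the $o(1)+\mathcal{O}(L)$ value-function approximation error into the average cost. The only cosmetic difference is that you route the comparison through $c^\infty$ and the PDE (\ref{cenHJB}), whereas the paper compares directly against $\theta^\ast$ via the discrete Bellman equation and a Lipschitz bound on the Bellman operator --- interchangeable given Theorem \ref{HJB1} --- and the uniformity and admissibility caveats you flag are present, and handled at the same level of rigor, in the paper's own proof.
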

\begin{proof}
Please refer to Appendix G.		
\end{proof}

Theorem \ref{perfgap} suggests that  $\widetilde{\theta} \rightarrow \theta^{\ast} $, as $L \rightarrow 0$ and $\tau \rightarrow 0$. In other words, the proposed precoder and decorrelator  control  algorithm in Algorithm \ref{wmmsealg} is asymptotically optimal as $L \rightarrow 0$ and $\tau \rightarrow 0$.

\section{simulations}	\label{xxsim}
In this section, we compare the  performance  of the proposed  precoder and decorrelator  control scheme for multimedia streaming in Algorithm \ref{wmmsealg}  with the following three baselines using numerical simulations:
\begin{itemize}
\item \textbf{Baseline 1, Zero-Forcing Precoding (ZFP) \cite{zfbf}}: The $K$ BSs adopt zero-forcing precoding matrix and fixed power transmission at each time slot. The precoding matrix of BS $k$ is obtained by projection of $\mathbf{H}_{kk}$ on the orthogonal complement of the subspace span $\left( \left[\mathbf{H}_{jk}\right]_{j \neq k}\right)$.

\item \textbf{Baseline 2, CSI-Only  Precoding (COP) \cite{WMMSE}}: The precoding matrix of each BS is obtained by solving the following  problem at each time slot: $\min _{\mathbf{F},\mathbf{U} }  \sum_{k=1}^K  \left(\mathrm{Tr}\left(  \mathbf{F}_k  \mathbf{F}_k^\dagger\right)  -   \alpha  R_k\left(\mathbf{H}, \mathbf{F}, \mathbf{U}_k\right)  \right)$ for all $k$,  where  $\alpha$ is used to adjust the tradeoff between the transmit power and the data rate. The optimal CSI-only precoding control   is only adaptive to CSI.

\item \textbf{Baseline 3, Queue-Weighted Precoding (QWP) \cite{lya1}}: The precoding matrix of each BS is obtained by solving the following  problem at each time slot: $\min _{\mathbf{F},\mathbf{U} }  \sum_{k=1}^K  \left(\mathrm{Tr}\left(  \mathbf{F}_k  \mathbf{F}_k^\dagger\right)  -   \alpha  [Q^h-Q_k]^+  \right. \\ \left.  \times R_k\left(\mathbf{H}, \mathbf{F}, \mathbf{U}_k\right)  \right)$ for all $k$. The optimal queue-weighted precoding control   is  adaptive to CSI and QSI.
\end{itemize}

\begin{figure}
\begin{minipage}[t]{0.45\textwidth}
  \centering
  \includegraphics[width=3.5in]{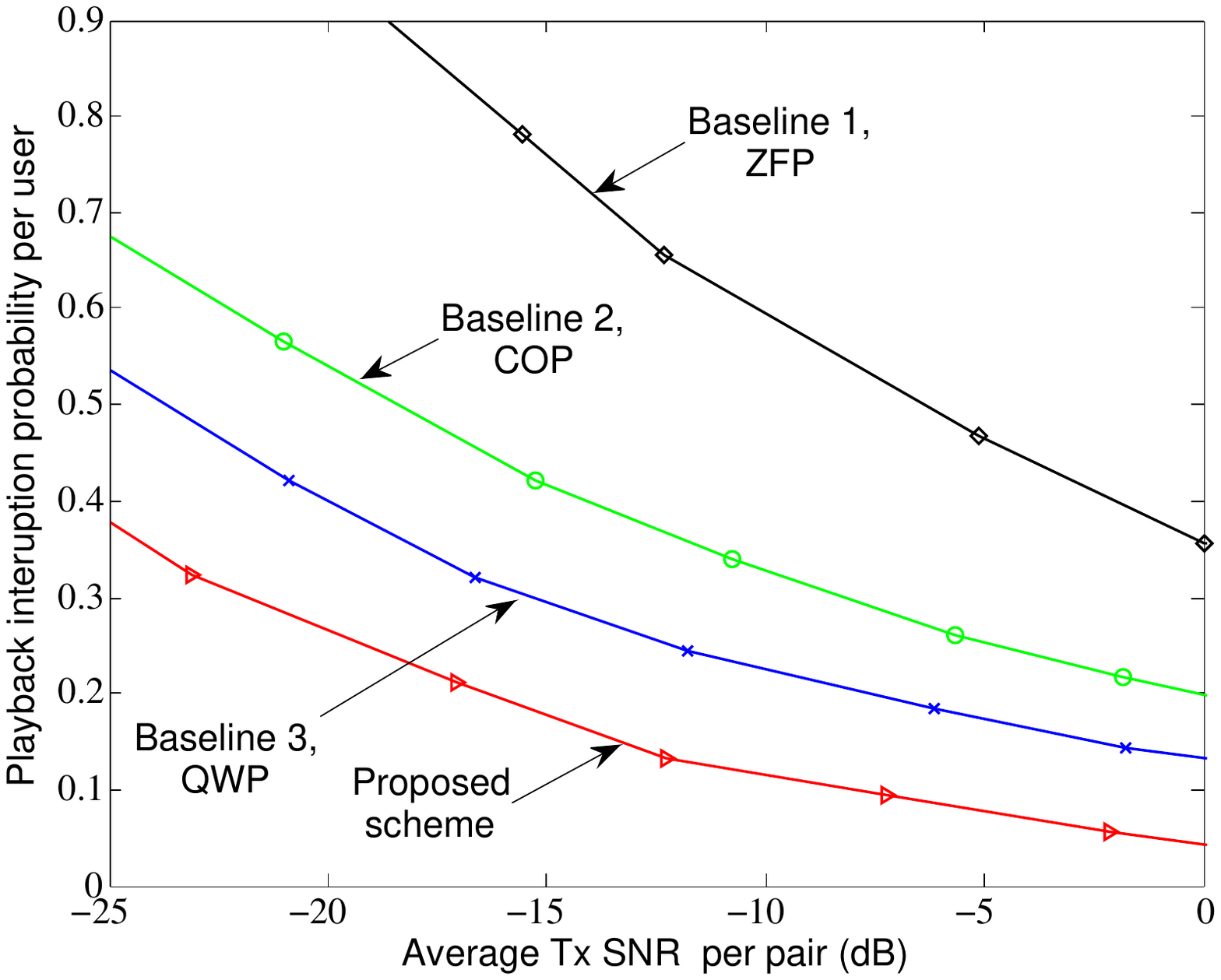}
  \caption{\txtblue{Playback interruption probability per user versus average Tx SNR per pair, with $K=5$, $N_t=5$ and $N_r=2$.}}
  \label{fig1}
\end{minipage}
\hspace{0.05\textwidth}
\begin{minipage}[t]{0.45\textwidth}
  \centering
   \includegraphics[width=3.5in]{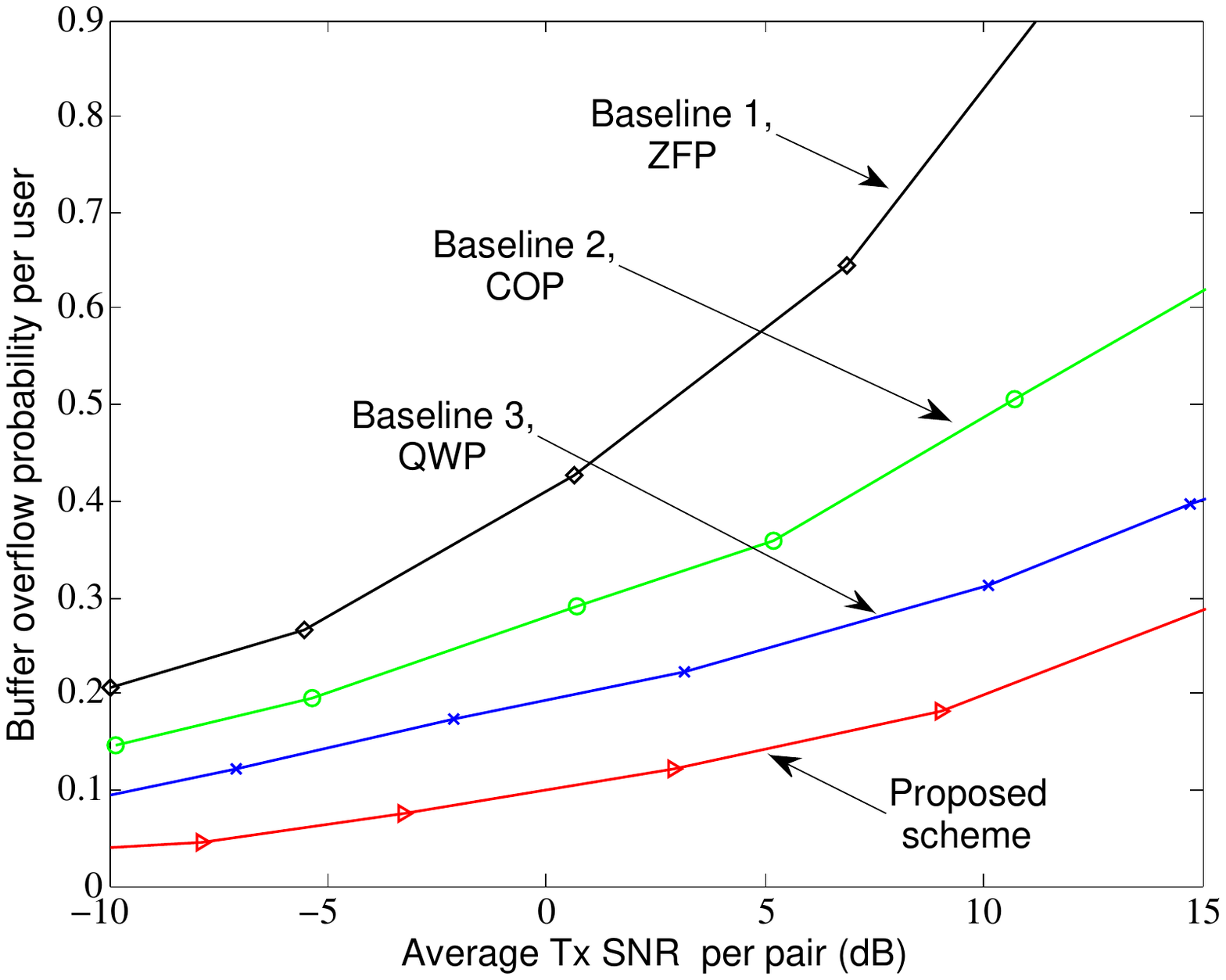}
  \caption{\txtblue{Buffer overflow probability per user versus average Tx SNR per pair, with  $K=5$, $N_t=5$ and $N_r=2$.}}
  \label{fig2}\end{minipage}	
 \end{figure}

In the simulations, we consider multimedia streaming in a $K$-pair MIMO interference network \txtblue{under the 802.11e WLAN setup as in \cite{apvideo}}. The channel fading coefficient and the channel noise are complex Gaussian distributed. For the direct and the cross channel long-term path gain, we let $\frac{L_{kj}}{L_{kk}}=0.1$ for all $k \neq j$ as in \cite{simtopo}.  We consider constant bit rate video streaming for each mobile user with  streaming rate  equal to \txtblue{$1.5$ Mbps as in \cite{apvideo}}. The decision slot duration $\tau$ is $10$ ms. The total bandwidth is $1$MHz. Furthermore, we let $\gamma_k=\beta_k=\beta$ for all $k$ and vary $\beta$ to obtain different tradeoff curves. The mobile users adopt MMSE decorrelator as in (\ref{mmserec}) for all the baselines. We consider the  average power cost (\ref{delay_cost}), playback interruption probability (\ref{ind1}) and buffer overflow probability (\ref{ind2}) as the performance metrics for each multimedia streaming flow. The other system parameters are configured as: $\eta=50$, \txtblue{$Q^l=50$ Kbits and $Q^h=150$ Kbits}.

\begin{figure}
\begin{minipage}[t]{0.45\textwidth}
\centering
  \includegraphics[width=3.5in]{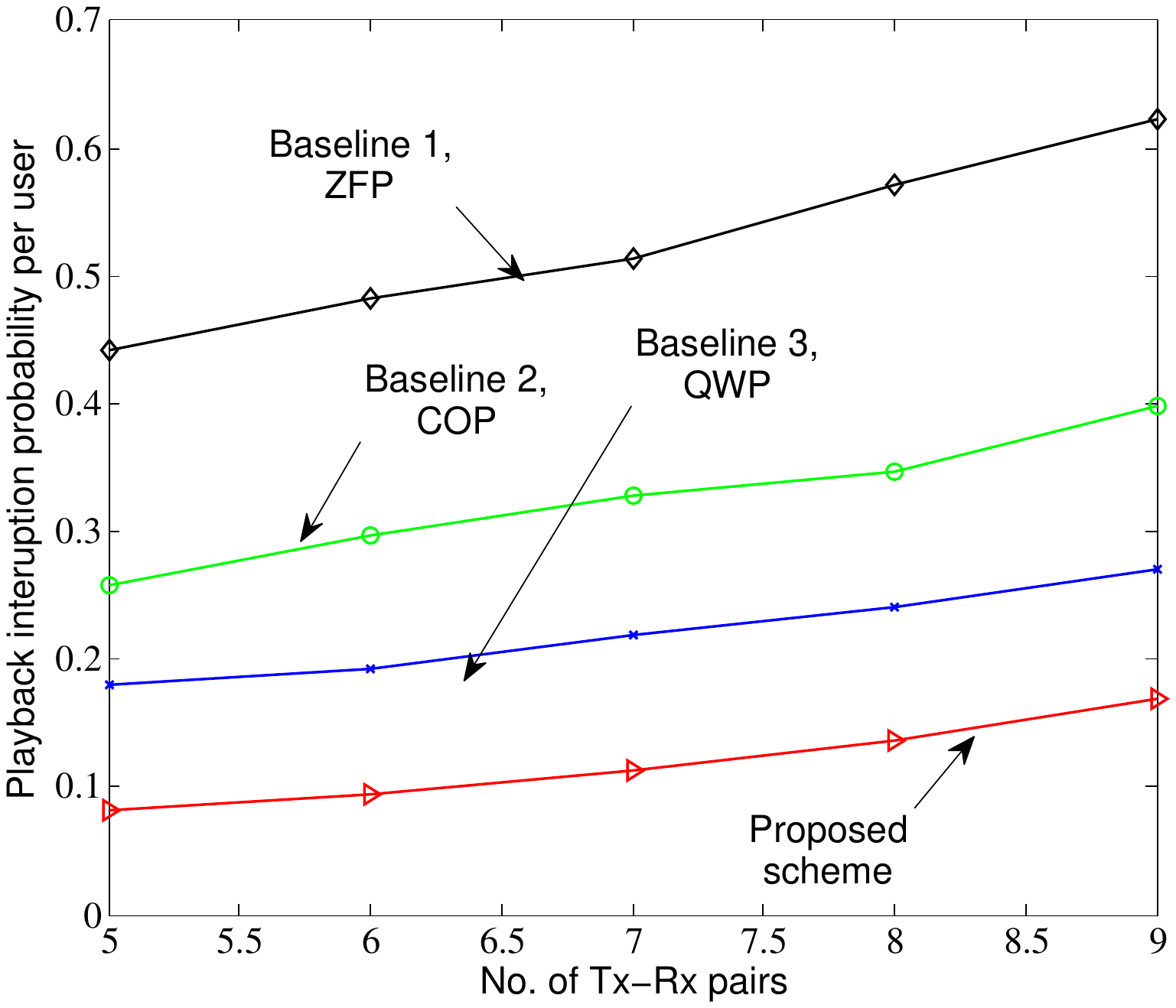}
  \caption{\txtblue{Playback interruption probability per user versus no. of  Tx-Rx pairs at average transmit SNR $ =-5$ dB.}}
  \label{fig3}
\end{minipage}
\vspace{0.5cm}
\hspace{0.05\textwidth}
\begin{minipage}[t]{0.45\textwidth}
  \centering
   \includegraphics[width=3.5in]{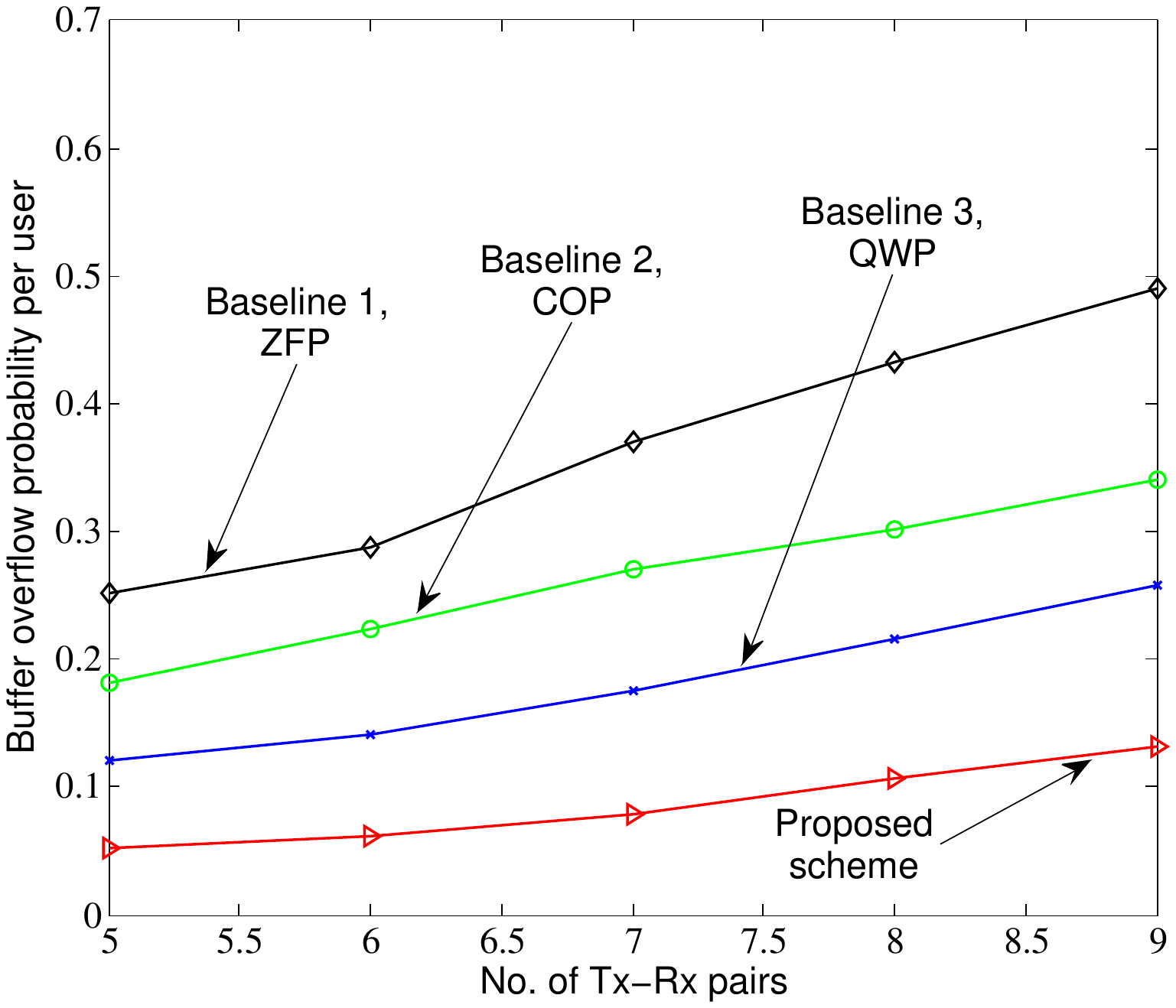}
  \caption{\txtblue{Buffer overflow probability per user versus no. of  Tx-Rx pairs at average transmit SNR $ =-5$ dB.}}
  \label{fig4}\end{minipage}	
\end{figure}

\subsection{Playback Interruption and Buffer Overflow Probabilities versus Average Transmit SNR}

Fig.~\ref{fig1} and Fig.~\ref{fig2} illustrates the playback interruption probability  and buffer overflow probability per user versus average transmit SNR per pair.  The proposed scheme achieves significant performance gain over all the baselines across a wide range of SNR values. It can also be observed that there exists a tradeoff between  the  playback interruption   and buffer overflow probabilities, and we cannot decrease them both by adjusting the transmit power. From Fig.~\ref{fig1} and Fig.~\ref{fig2}, we can see that the best SNR region for using our proposed  algorithm is around -5 dB, where both the playback interruption and buffer overflow probabilities are relatively low, \txtblue{and there is also significant performance gain over the baselines.}

\subsection{Playback Interruption and Buffer Overflow Probabilities versus Number of Tx-Rx Pairs}

Fig.~\ref{fig3} and Fig.~\ref{fig4} illustrates the playback interruption probability  and buffer overflow probability per user  versus the number of Tx-Rx pairs.  The number of transmit antennas at the BS is $K$ (which is equal to the number of Tx-Rx pairs) and the number of receive antennas at the mobile users is $2$. It can be observed that our proposed scheme achieves  significant performance gain over  all the baselines across a wide range of the numbers of Tx-Rx pairs. \txtblue{\subsection{Performance Comparison  of the Proposed Algorithm and the Optimal Solution}
Fig.~\ref{fig30}  illustrates the playback interruption and buffer overflow probabilities per user  versus the carrier sensing distance $\delta$  for both the proposed Algorithm \ref{wmmsealg} and the brute-force value iteration (VIA) algorithm\footnote{Note that the brute-force VIA \cite{DP_Bertsekas} solves the discrete time Bellman equation in (\ref{OrgBel}) and gives the optimal average cost.} \cite{DP_Bertsekas}. It can be observed that the performance of our proposed algorithm is very close to that of the brute-force VIA algorithm and the performance gap becomes smaller as  $\delta$ increases\footnote{\txtblue{As $\delta$ increases, the worst-case cross channel path gain decreases according to the path loss model in Section \ref{frisss}.}}. This is in accordance with the performance gap analysis in Theorem 4.}

\begin{figure}
\centering
  \includegraphics[width=3.5in]{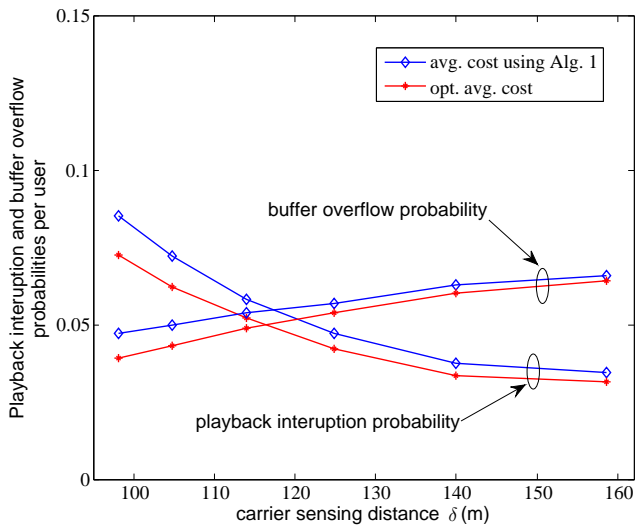}
  \caption{\txtblue{Playback interruption and buffer overflow probabilities per user versus the  carrier sensing distance at average transmit SNR $ =-5$ dB, with  $K=4$, $N_t=4$ and $N_r=2$. The path loss model is given  in Section \ref{frisss}, where  $G_r=G_t=3 $ dB and $\lambda=0.125$ m (2.4 GHz carrier frequency). The direct channel path gain is $-75$ dBm.}}
  \label{fig30}
\end{figure}

\subsection{Computational Complexity Analysis}
Table \ref{tabletime}  illustrates the comparison of the MATLAB computational time of the proposed solution, the  baselines and the brute-force value iteration algorithm \cite{DP_Bertsekas}. \txtblue{Note that the computation time of all the algorithms  increases as $K$ increases, and this is a fair price to pay.  The computational time of Baseline 1  is the smallest in all different $K$ scenarios, but it has the worst performance. On the other hand, our proposed solution has a similar order of  complexity growth w.r.t. $K$ compared with Baseline 2/3 , but the proposed solution has much better performance.}

\begin{table}
	\centering
\begin{tabular}{|c|c|c|c|c|c|}
	\hline
		  &${K=4}$ & $ {K=6}$  & $ {K=8}$ & $ {K=10}$   \\
	\hline
		 {Baseline 1, ZFP} &  {0.002s}&  {0.003s} &  {0.004s} &  {0.006s}    			\\
		 {Baseline 2/3, COP/QWP}&{0.014s} &   {0.023s} &  {0.035s} &    {0.051s}	 \\
	            {Proposed Scheme} &   {0.052s} &  {0.080s} &  {0.125s}  &   {0.191s}	\\	
	            {Value Iteration Algorithm}   &    {657s}  &    {$>10^4$s} &  {$>10^4$s}  &   {$>10^4$s} 	\\
	\hline
\end{tabular}
	\caption{\txtblue{Comparison of the MATLAB computational time of the proposed scheme, the baselines and  the value iteration algorithm in one decision slot.}}	
		\label{tabletime}
\end{table}

\section{Summary}
In this paper, we propose an asymptotically optimal dynamic precoder/decorrelator control  to support multimedia streaming applications in MIMO interference networks. We formulate the associated stochastic optimization problem as an infinite horizon average cost MDP  and derive the  sufficient conditions for optimality. Using the weak interference property of the wireless network, we derive a closed-form approximate value function to the $K$-dimensional optimality equation and the associated error bound using perturbation analysis. Based on the closed-form approximate value function, we propose an asymptotically optimal low complexity precoder/decorrelator  control algorithm and establish the performance  gap between the optimal solution and the proposed low complexity solution. Numerical results show that the proposed  scheme has much better  performance than the other  baselines.

\section*{Appendix A: Proof of Theorem \ref{LemBel}}
Following \emph{Proposition 4.6.1} of \cite{DP_Bertsekas},  the sufficient conditions for optimality of Problem \ref{IHAC_MDP} is that there exists a ($\theta^\ast, \{ V^\ast\left(\mathbf{Q}  \right) \}$) that satisfies the following Bellman equation and $V^\ast$ satisfies the transversality condition in (\ref{transodts})  for all  admissible control policy $\Omega$ and initial  state $\mathbf{Q} \left(0 \right)$:
\begin{small}\begin{align}
	&\theta^\ast\tau + V^\ast\left(\boldsymbol{\chi} \right) 	 \\
	=& \min_{\mathbf{F}, \mathbf{U}} \Big[ c\left( \mathbf{Q}, \mathbf{F}\right)\tau+  \sum_{\boldsymbol{\chi}' } \Pr\big[ \boldsymbol{\chi}'\big| \boldsymbol{\chi}, \mathbf{F}, \mathbf{U}\big]  V^\ast\left(\boldsymbol{\chi}'\right)    \Big]	\notag \\
	=& \min_{\mathbf{F}, \mathbf{U}} \Big[ c\left(\mathbf{Q}, \mathbf{F}\right)\tau+ \sum_{\mathbf{Q} '} \sum_{\mathbf{H}' }  \Pr \big[ \mathbf{Q} '\big| \boldsymbol{\chi}, \mathbf{F}, \mathbf{U} \big] \Pr \big[\mathbf{H}' \big]  V^\ast\left(\boldsymbol{\chi}'\right)    \Big]\notag	
\end{align}\end{small}Taking expectation w.r.t. $\mathbf{H}$ on both sizes of the above equation and denoting $V^\ast\left(\mathbf{Q}  \right) = \mathbb{E}\big[V^\ast\left(\boldsymbol{\chi} \right) \big| \mathbf{Q} \big]$, we obtain the equivalent Bellman equation in (\ref{OrgBel}) in Theorem \ref{LemBel}.

\section*{Appendix B: Proof of Corollary \ref{cor1}}
Let $\mathbf{Q} ' =(Q_1',\cdots, Q_k')= \mathbf{Q} (t+1)$ and $\mathbf{Q} =(Q_1,\cdots, Q_k)=\mathbf{Q} (t)$. For the queue dynamics in (\ref{Qdyn}) and sufficiently small $\tau$, we have $Q_k'  =  Q_k- \mu_k \tau + R_k\left(\mathbf{H}, \mathbf{F},\mathbf{U}_k  \right) \tau,  \forall k$. Therefore, if  $V\left(\mathbf{Q} \right)$ is of class $\mathcal{C}^2(\mathbb{R}_+^K)$, we have the following Taylor expansion on $V\left( \mathbf{Q} '\right)$ in (\ref{OrgBel}): \bs$\mathbb{E}\left[ V\left( \mathbf{Q} '\right) \big| \mathbf{Q}  \right] =V\left( \mathbf{Q} \right)+\sum_{k=1}^K  \frac{\partial V\left(\mathbf{Q} \right)}{\partial Q_k} \left[  \mathbb{E}\left[R_k\left(\mathbf{H}, \mathbf{F},\mathbf{U}_k  \right)\big| \mathbf{Q} \right] -\mu_k  \right]\tau + o(\tau)$.\bsc For notation convenience, let $T_{\boldsymbol{\chi}}(\theta, V, \mathbf{F}, \mathbf{U})$ and $T_{\boldsymbol{\chi}}^\dagger(\theta, V, \mathbf{F}, \mathbf{U})$ denote the \emph{Bellman operators}:
\bs \begin{align}	\label{beloperator1}
	  T_{\boldsymbol{\chi}}(\theta, V, \mathbf{F}, \mathbf{U})=T_{\boldsymbol{\chi}}^\dagger(\theta, V, \mathbf{F}, \mathbf{U})+\nu  G_{\boldsymbol{\chi}}(V,\mathbf{F}, \mathbf{U})
\end{align}\bsc
for some smooth function $G_{\boldsymbol{\chi}}$ and $\nu=o(1)$ (which asymptotically goes to zero as $\tau$ goes to zero), and denote
\bs\begin{align}
	 &T_{\boldsymbol{\chi}}^\dagger(\theta, V, \mathbf{F}, \mathbf{U}) = -\theta + {c}\left(\mathbf{Q},  \mathbf{F}\right) \notag \\
	 &\hspace{2.5cm}+  \sum_{k=1}^K \frac{\partial V \left(\mathbf{Q}  \right) }{\partial Q_k} \left[ R_k\left(\mathbf{H},  \mathbf{F}, \mathbf{U}\right)  -\mu_k  \right]		\\
	 &T_{\boldsymbol{\chi}}(\theta, V)=\min_{  \mathbf{F}, \mathbf{U}} T_{\boldsymbol{\chi}}(\theta, V, \mathbf{F}, \mathbf{U}),  \notag \\
	 &T_{\boldsymbol{\chi}}^\dagger(\theta, V)=\min_{  \mathbf{F}, \mathbf{U}} T_{\boldsymbol{\chi}}^\dagger(\theta, V, \mathbf{F}, \mathbf{U})	\label{zerofuncsa}
\end{align}\bsc
Suppose $(\theta^\ast,V^\ast)$ satisfies the Bellman equation in (\ref{OrgBel}) and $(\theta,V)$ satisfies the approximate Bellman equation in (\ref{bellman2}), we have for any $\mathbf{Q}  \in \boldsymbol{\mathcal{Q}}$,
\bs\begin{align}
	\mathbb{E}\left[T_{\boldsymbol{\chi}}(\theta^\ast, V^\ast)|\mathbf{Q}\right]=0, \quad \mathbb{E}\left[T_{\boldsymbol{\chi}}^\dagger(\theta, V)|\mathbf{Q}\right]=0	\label{zerofunc}
\end{align}\bsc

Then, we establish the following lemma.
\begin{Lemma}	\label{applemma}
$\big|\mathbb{E}\left[T_{\boldsymbol{\chi}}  (\theta, V)\big|\mathbf{Q}\right]\big|=o(1)$,  $\forall {\boldsymbol{\chi}} $, where $o(1)$ asymptotically goes to zero as $\tau$ goes to zero.~\hfill\IEEEQED
\end{Lemma}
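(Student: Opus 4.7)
\textbf{Proof proposal for Lemma \ref{applemma}.} The plan is to exploit the decomposition $T_{\boldsymbol{\chi}}=T_{\boldsymbol{\chi}}^\dagger+\nu G_{\boldsymbol{\chi}}$ together with the identity $\mathbb{E}[T_{\boldsymbol{\chi}}^\dagger(\theta,V)\,|\,\mathbf{Q}]=0$ from (\ref{zerofunc}). The key observation is that although minimization is nonlinear, the difference of two minima is controlled by the pointwise difference of the integrands via the standard min-sandwich argument, which converts the $\nu=o(1)$ perturbation inside the Bellman operator into an $o(1)$ perturbation of the minimized value.

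First, I would let $(\mathbf{F}^\ast,\mathbf{U}^\ast)$ attain the minimum in $T_{\boldsymbol{\chi}}(\theta,V)$ and $(\mathbf{F}^\dagger,\mathbf{U}^\dagger)$ attain the minimum in $T_{\boldsymbol{\chi}}^\dagger(\theta,V)$ (existence follows from the per-stage cost penalizing large $\|\mathbf{F}_k\|$ through the power term, which makes the effective action set compact). By testing each minimum against the other's minimizer, I obtain the two-sided inequality
\begin{align*}
T_{\boldsymbol{\chi}}^\dagger(\theta,V)+\nu G_{\boldsymbol{\chi}}(V,\mathbf{F}^\ast,\mathbf{U}^\ast)\leq T_{\boldsymbol{\chi}}(\theta,V)\leq T_{\boldsymbol{\chi}}^\dagger(\theta,V)+\nu G_{\boldsymbol{\chi}}(V,\mathbf{F}^\dagger,\mathbf{U}^\dagger),
\end{align*}
so that $\bigl|T_{\boldsymbol{\chi}}(\theta,V)-T_{\boldsymbol{\chi}}^\dagger(\theta,V)\bigr|\leq |\nu|\bigl(|G_{\boldsymbol{\chi}}(V,\mathbf{F}^\ast,\mathbf{U}^\ast)|+|G_{\boldsymbol{\chi}}(V,\mathbf{F}^\dagger,\mathbf{U}^\dagger)|\bigr)$. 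Taking $\mathbb{E}[\cdot\,|\,\mathbf{Q}]$ on both sides and using $\mathbb{E}[T_{\boldsymbol{\chi}}^\dagger(\theta,V)\,|\,\mathbf{Q}]=0$ together with the triangle inequality gives
\begin{align*}
\bigl|\mathbb{E}[T_{\boldsymbol{\chi}}(\theta,V)\,|\,\mathbf{Q}]\bigr|\leq |\nu|\,\mathbb{E}\bigl[|G_{\boldsymbol{\chi}}(V,\mathbf{F}^\ast,\mathbf{U}^\ast)|+|G_{\boldsymbol{\chi}}(V,\mathbf{F}^\dagger,\mathbf{U}^\dagger)|\,\big|\,\mathbf{Q}\bigr].
\end{align*}

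Second, I would argue that the conditional expectation on the right is finite (uniformly in $\mathbf{Q}$ on compact subsets). Tracing back to the Taylor expansion in Appendix B, $G_{\boldsymbol{\chi}}$ is the remainder arising from the second-order term of the expansion of $V(\mathbf{Q}')$ about $\mathbf{Q}$, hence it is a smooth function of the second partial derivatives of $V$ evaluated at an intermediate point and of the quadratic rate increments $(R_k-\mu_k)^2$. Since $V\in\mathcal{C}^2(\mathbb{R}_+^K)$, its second derivatives are bounded on the neighborhood reachable in one slot; the rates $R_k$ are dominated (at the tested actions) by $W\log_2\det(\mathbf{I}+L_{kk}\|\mathbf{H}_{kk}\|^2\mathbf{F}_k\mathbf{F}_k^\dagger)$, which is integrable against the Gaussian CSI distribution of Assumption 1 once $\mathrm{Tr}(\mathbf{F}_k\mathbf{F}_k^\dagger)$ is bounded; and the effective action set is bounded because any minimizer must satisfy $\mathrm{Tr}(\mathbf{F}_k\mathbf{F}_k^\dagger)\leq\max_k\gamma_k+\beta_k$ (otherwise the zero-power policy would already yield a smaller cost). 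This delivers $\mathbb{E}[|G_{\boldsymbol{\chi}}|\,|\,\mathbf{Q}]=\mathcal{O}(1)$, so the whole right-hand side is $|\nu|\cdot\mathcal{O}(1)=o(1)$.

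I expect the main obstacle to be the second step, namely establishing that $\mathbb{E}[|G_{\boldsymbol{\chi}}|\,|\,\mathbf{Q}]$ is bounded uniformly as $\tau\to 0$. This requires simultaneously controlling (i) the second-order Taylor remainder of $V$ along the queue dynamics, (ii) integrability of the stochastic rate $R_k$ under the Gaussian CSI, and (iii) a uniform $L^p$ bound on the minimizing precoders. Modulo these regularity checks, the min-sandwich in the first step is the conceptual content of the lemma and is essentially automatic.
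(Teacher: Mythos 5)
Your proposal is correct and follows essentially the same route as the paper: the same min-sandwich between $T_{\boldsymbol{\chi}}^\dagger(\theta,V)$ plus $\nu G_{\boldsymbol{\chi}}$ evaluated at suitable minimizers, followed by taking the conditional expectation and invoking $\mathbb{E}[T_{\boldsymbol{\chi}}^\dagger(\theta,V)\,|\,\mathbf{Q}]=0$ to conclude the bound is $\mathcal{O}(\nu)=o(1)$. The only difference is that you flesh out the boundedness of $\mathbb{E}[|G_{\boldsymbol{\chi}}|\,|\,\mathbf{Q}]$, which the paper simply asserts by calling $T_{\boldsymbol{\chi}}^\dagger$ and $G_{\boldsymbol{\chi}}$ smooth and bounded.
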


\begin{proof}	[Proof of Lemma \ref{applemma}]
For any $\boldsymbol{\chi}$, we have \bs$T_{\boldsymbol{\chi}} (\theta, V)=\min_{ \mathbf{F}, \mathbf{U}}\left[ T_{\boldsymbol{\chi}}^\dagger(\theta, V, \mathbf{F}, \mathbf{U})+\nu  G_{\boldsymbol{\chi}}(V,\mathbf{F}, \mathbf{U}) \right] \geq \min_{ \mathbf{F}, \mathbf{U}} T_{\boldsymbol{\chi}}^\dagger(\theta, V, \mathbf{F}, \mathbf{U}) + \nu \min_{ \mathbf{F}, \mathbf{U}} G_{\boldsymbol{\chi}}(V,\mathbf{F}, \mathbf{U})$\bsc. On the other hand, \bs$T_{\boldsymbol{\chi}} (\theta, V) \leq \min_{ \mathbf{F}, \mathbf{U}} T_{\boldsymbol{\chi}}^\dagger(\theta, V, \mathbf{F}, \mathbf{U}) + \nu G_{\boldsymbol{\chi}}(V,\mathbf{F}^\dagger, \mathbf{U}^\dagger)$\bsc, where \bs$(\mathbf{F}^\dagger, \mathbf{U}^\dagger)= \arg \min_{\mathbf{F}, \mathbf{U}} T_{\boldsymbol{\chi}}^\dagger(\theta, V, \mathbf{F}, \mathbf{U}) $\bsc.

From (\ref{zerofuncsa}) and (\ref{zerofunc}), \bs$\mathbb{E}\left[\min_{ \mathbf{F}, \mathbf{U}} T_{\boldsymbol{\chi}}^\dagger(\theta, V, \mathbf{F}, \mathbf{U})\big|\mathbf{Q}\right]=\mathbb{E}\left[T_{\boldsymbol{\chi}}^\dagger(\theta, V)\big|\mathbf{Q}\right]=0$\bsc. Since \bs$T_{\boldsymbol{\chi}}^\dagger(\theta, V, \mathbf{F}, \mathbf{U}) $\bsc and \bs$G_{\boldsymbol{\chi}}(V,\Omega^\dagger(\mathbf{Q} )) $\bsc  are all smooth and bounded functions, we have $\big|\mathbb{E}\left[T_{\boldsymbol{\chi}}  (\theta, V)\big|\mathbf{Q}\right]\big| = \mathcal{O}(\nu)=o(1)$ for any $\mathbf{Q}  \in \boldsymbol{\mathcal{Q}}$, where $o(1)$ asymptotically goes to zero as $\tau$ goes to zero.
\end{proof}

Finally, we  prove the final result as follows.
\begin{Lemma}		\label{tenlemma}
	Suppose \bs$\mathbb{E}[T_{\boldsymbol{\chi}}(\theta^\ast, V^\ast)|\mathbf{Q}] = 0$\bsc for all $\mathbf{Q} $ together with the transversality condition in (\ref{transodts})  has a unique solution $(\theta^*, V^\ast)$. If $(\theta, V)$ satisfies the approximate Bellman equation in (\ref{bellman2}) and the transversality condition in (\ref{transodts}), then $|\theta-\theta^\ast|=o\left(1 \right)$, $|V\left(\mathbf{Q}  \right)-V^\ast\left(\mathbf{Q}  \right)|=o\left(1 \right)$ for all  $\mathbf{Q}  $, where the error term $o(1)$ asymptotically goes to zero  as $\tau$ goes to zero.~\hfill\IEEEQED	
\end{Lemma}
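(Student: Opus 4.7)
The plan is to leverage Lemma \ref{applemma} to show that $(\theta,V)$ is an $o(1)$-approximate solution of the \emph{exact} Bellman equation, and then use the uniqueness of $(\theta^*,V^*)$ together with a telescoping/transversality argument to promote this near-solution status into the desired $o(1)$ closeness in both $\theta$ and $V$. The key identity is that the exact operator $T_{\boldsymbol{\chi}}$ differs from the approximate operator $T_{\boldsymbol{\chi}}^\dagger$ by only $\nu G_{\boldsymbol{\chi}}$ with $\nu=o(1)$ (the higher-order Taylor terms), so any exact zero of $T^\dagger$ automatically has Bellman residual $o(1)$ under $T$, and the optimality inequalities for $T$ and $T^\dagger$ differ only by this same $o(1)$ amount.

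First I would establish $|\theta-\theta^*|=o(1)$ by a two-sided comparison. Let $\widetilde\Omega$ be the policy attaining the minimum in (\ref{bellman2}). By Lemma \ref{applemma}, under $\widetilde\Omega$,
\begin{align*}
-\theta\tau+\tau\,\mathbb{E}^{\widetilde\Omega}\!\left[c(\mathbf{Q}(t),\mathbf{F}(t))\,|\,\mathbf{Q}(t)\right]+\mathbb{E}^{\widetilde\Omega}\!\left[V(\mathbf{Q}(t+1))-V(\mathbf{Q}(t))\,|\,\mathbf{Q}(t)\right]=o(\tau).
\end{align*}
Summing over $t=0,\dots,T-1$, dividing by $T\tau$, letting $T\to\infty$, and invoking the transversality condition to kill the boundary term $\mathbb{E}^{\widetilde\Omega}[V(\mathbf{Q}(T))]/(T\tau)$ yields $L^{\widetilde\Omega}=\theta+o(1)$. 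Replaying the telescoping with $V^*$ in place of $V$ and using only the one-sided inequality $\mathbb{E}[T_{\boldsymbol{\chi}}(\theta^*,V^*,\widetilde\Omega)\,|\,\mathbf{Q}]\ge 0$ (which follows because $\widetilde\Omega$ is feasible but not necessarily optimal for the exact minimisation) gives $L^{\widetilde\Omega}\ge\theta^*$, hence $\theta\ge\theta^*-o(1)$. The reverse bound follows by running the symmetric argument with the exact-optimal policy $\Omega^*$ inserted into $T^\dagger(\theta,V,\cdot)$: since $\mathbb{E}[T^\dagger(\theta,V,\Omega^*)\,|\,\mathbf{Q}]\ge 0$ and $T=T^\dagger+\nu G$, the same telescoping plus transversality gives $L^{\Omega^*}=\theta^*\ge\theta-o(1)$, so $\theta\le\theta^*+o(1)$.

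For the pointwise statement $|V(\mathbf{Q})-V^*(\mathbf{Q})|=o(1)$, my idea is to substitute the now-established $\theta=\theta^*+o(1)$ back into both (\ref{OrgBel}) and (\ref{bellman2}) and treat the difference $W:=V-V^*$ as the solution of a \emph{homogeneous} perturbed fixed-point equation. Subtracting the two Bellman equations and using (\ref{beloperator1}) gives an identity of the form $\mathbb{E}[\mathcal{L}W\,|\,\mathbf{Q}]=o(1)$, where $\mathcal{L}$ is the linearised first-order generator along the trajectory driven by a common comparison policy (either $\widetilde\Omega$ or $\Omega^*$). Iterating this along the induced Markov chain, summing, and applying transversality to the boundary term $\mathbb{E}[W(\mathbf{Q}(T))]/T$ forces $W(\mathbf{Q})=o(1)$ on the recurrent class. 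The uniqueness hypothesis on $(\theta^*,V^*)$ is exactly what rules out a non-trivial $O(1)$ homogeneous solution here.

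The main obstacle will be the $V$-closeness step rather than the $\theta$ step. The telescoping argument cleanly identifies average costs because all boundary $V$-terms are annihilated by transversality after time-averaging, but bootstrapping this into pointwise closeness of the relative value functions is delicate: one must verify that the unichain assumption makes the long-run distribution induced by the comparison policy visit every state of interest, and that the perturbed homogeneous equation inherits uniqueness from the exact Bellman system. Without the uniqueness hypothesis explicitly invoked in the lemma, $V$ could drift by a state-dependent $O(1)$ amount without disturbing the residual, so the argument is fundamentally reliant on this hypothesis being used in an essential (not just normalising) way.
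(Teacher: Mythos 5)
Your two-sided telescoping argument for $|\theta-\theta^\ast|=o(1)$ is sound and is a genuinely different (and more explicit) route than the paper takes for that half: it is a standard verification-theorem comparison, inserting the approximate-optimal policy into the exact system and the exact-optimal policy into the approximate system, with transversality killing the boundary terms after the $1/(T\tau)$ normalization. The paper does not argue this way at all.

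The $V$-closeness step, however, has a genuine gap, and it is exactly the one you flag at the end without resolving. From the subtracted equations you obtain, along a common comparison policy, a relation of the form $\mathbb{E}\left[W\left(\mathbf{Q}\left(t+1\right)\right)\,\middle|\,\mathbf{Q}\left(t\right)\right]=W\left(\mathbf{Q}\left(t\right)\right)+o\left(\tau\right)$ with $W=V-V^\ast$. Iterating $T$ times accumulates an error of size $T\cdot o\left(\tau\right)$, while the transversality condition (\ref{transodts}) only controls $\mathbb{E}\left[W\left(\mathbf{Q}\left(T\right)\right)\right]/T$. If you divide by $T$ to invoke transversality, the term $W\left(\mathbf{Q}\left(0\right)\right)/T$ vanishes as well and the identity becomes vacuous; if you do not divide, the boundary term $\mathbb{E}\left[W\left(\mathbf{Q}\left(T\right)\right)\right]$ is uncontrolled. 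This is precisely why telescoping identifies average costs but cannot, by itself, pin down relative value functions. Moreover, your appeal to the uniqueness hypothesis to exclude a nontrivial $O(1)$ solution of the \emph{linearized homogeneous} equation $\mathcal{L}W=0$ is not justified: the hypothesis asserts uniqueness for the full nonlinear Bellman system plus transversality, and transferring it to the linearized generator along a fixed policy would itself require proof. The paper closes this step differently and more directly: Lemma \ref{applemma} shows that $(\theta,V)$ has residual $o(1)$ in the \emph{exact} operator $T_{\boldsymbol{\chi}}$ together with transversality; assuming $V\left(\mathbf{Q}'\right)=V^\ast\left(\mathbf{Q}'\right)+\alpha$ for some $\alpha\neq 0$ persisting as $\tau\rightarrow 0$, the limit of $(\theta,V)$ would be a second solution of the exact system, contradicting the assumed uniqueness of $(\theta^\ast,V^\ast)$. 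That contradiction argument uses the uniqueness hypothesis at the level where it actually lives (the nonlinear system), and is the ingredient your proposal is missing.
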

\begin{proof}	[Proof of Lemma \ref{tenlemma}]
	Suppose for some $\mathbf{Q} '$, we have $V\left(\mathbf{Q} ' \right)=V^\ast\left(\mathbf{Q} ' \right)+\alpha$ for some $\alpha \neq 0$. From Lemma \ref{applemma}, we have $\big|\mathbb{E}[T_{\boldsymbol{\chi}}(\theta, V)|\mathbf{Q}]\big|= o(1)$ for all $\mathbf{Q} $. Now let $\tau \rightarrow 0$, we have $(\theta, V)$ satisfies $\mathbb{E}[T_{\boldsymbol{\chi}}(\theta, V)|\mathbf{Q}]= 0$ for all $\mathbf{Q} $ and the  transversality condition in (\ref{transodts}). However, $V\left(\mathbf{Q} ' \right) \neq V^\ast\left(\mathbf{Q} ' \right)$ because of the assumption that $V\left(\mathbf{Q} ' \right)=V^\ast\left(\mathbf{Q} ' \right)+\alpha$. This contradicts  the condition that $(\theta^*, V^\ast)$ is a unique solution of $\mathbb{E}\left[T_{\boldsymbol{\chi}}(\theta^\ast, V^\ast)|\mathbf{Q}\right]=0$ for all $\mathbf{Q} $  and the transversality condition in (\ref{transodts}). Hence, we must have $|V\left(\mathbf{Q}  \right)-V^\ast\left(\mathbf{Q}  \right)|=o\left(1 \right)$ for all  $\mathbf{Q}  $, where $o(1)$ asymptotically goes to zero as $\tau$ goes to zero. Similarly, we can establish $|\theta- \theta^\ast| = o(1)$.
\end{proof}

\section*{{Appendix C: Proof of Theorem \ref{HJB1}}}	
For simplicity of notation, we write $J\left(\mathbf{Q} \right)$ in place of $J\left(\mathbf{Q} ;L \right)$. We first establish the relationship between $J\left(\mathbf{Q} \right)$ and  $V \left(\mathbf{Q}  \right)$. We can observe that if ($c^{\infty}, \{ J\left(\mathbf{Q}  \right) \}$) satisfies  the PDE in (\ref{cenHJB}), it also satisfies the approximate Bellman equation in (\ref{bellman2}). Furthermore, since $J\left(\mathbf{Q} \right)=\mathcal{O}(\sum_{k=1}^K Q_k)$, we have $\lim_{t \rightarrow \infty}\mathbb{E}^{\Omega}\left[J\left(\mathbf{Q} (t)\right) \right]< \infty$ for any admissible policy $\Omega$. Hence, $J\left(\mathbf{Q}  \right)=\mathcal{O}(\sum_{k=1}^K Q_k)$ satisfies the transversality condition in (\ref{transodts}).

Next, we show that the optimal control policy $\Omega^{J \ast}$ obtained by solving the PDE in (\ref{cenHJB}) is an admissible control policy in the discrete time system as defined in Definition \ref{adddtdomain}.

We first establish the following lemma:
\begin{Lemma}	\label{lemmaappr}
	For any $k$, if $\frac{\partial  J \left(\mathbf{Q} ;{L}\right)} {\partial Q_k}>0$ for sufficient large $Q_k$, then the optimal precoding matrix for user $k$obtained by solving (\ref{cenHJB}) is  $\mathbf{F}_k^{J\ast}=\mathbf{0}$.~\hfill\IEEEQED
\end{Lemma}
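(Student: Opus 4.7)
The plan is to exploit the separable structure of the inner minimization in (\ref{cenHJB}) together with the monotonicity of each user's rate in its interference-plus-noise covariance. Dropping the $\mathbf{F}$-independent pieces of the integrand (the two exponentials in $c_l(Q_l,\mathbf{F}_l)$ and the constants $-(\partial J/\partial Q_l)\mu_l$), the inner minimization at a fixed realization $(\mathbf{H},\mathbf{Q})$ is equivalent to
\begin{align*}
\min_{\mathbf{F},\mathbf{U}}\ \Phi(\mathbf{F},\mathbf{U}) \;=\; \sum_{l=1}^K \mathrm{Tr}\bigl(\mathbf{F}_l\mathbf{F}_l^\dagger\bigr) \;+\; \sum_{l=1}^K \frac{\partial J(\mathbf{Q};L)}{\partial Q_l}\, R_l\bigl(\mathbf{H},\mathbf{F},\mathbf{U}_l\bigr).
\end{align*}
I would fix such a $(\mathbf{H},\mathbf{Q})$ at which the hypothesis $\partial J/\partial Q_k>0$ holds and introduce the ``urgent'' index set $\mathcal{K}_+=\{l:\partial J/\partial Q_l>0\}\ni k$.

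The core of the argument is a one-shot exchange on $\Phi$. Given any candidate $(\mathbf{F},\mathbf{U})$ I would construct $(\widehat{\mathbf{F}},\widehat{\mathbf{U}})$ with $\widehat{\mathbf{F}}_l=\mathbf{0}$ for every $l\in\mathcal{K}_+$, $\widehat{\mathbf{F}}_l=\mathbf{F}_l$ for $l\notin\mathcal{K}_+$, and $\widehat{\mathbf{U}}=\mathbf{U}$, and compare the two objectives term by term: (i) the sum of power terms decreases by $\sum_{l\in\mathcal{K}_+}\mathrm{Tr}(\mathbf{F}_l\mathbf{F}_l^\dagger)\ge 0$; (ii) for each $l\in\mathcal{K}_+$, $\widehat{\mathbf{F}}_l=\mathbf{0}$ forces $R_l(\mathbf{H},\widehat{\mathbf{F}},\widehat{\mathbf{U}}_l)=0$, so the non-negative weighted-rate term collapses to zero; (iii) for each $l\notin\mathcal{K}_+$, the user-$l$ signal covariance is unchanged while its interference-plus-noise covariance $\mathbf{N}_l$ weakly decreases in the Loewner order, which makes $R_l$ weakly increase, and multiplying by the non-positive weight $\partial J/\partial Q_l\le 0$ turns this into a weak decrease of the weighted rate. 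Summing the three contributions yields $\Phi(\widehat{\mathbf{F}},\widehat{\mathbf{U}})\le\Phi(\mathbf{F},\mathbf{U})$, so the infimum is attained on $\{\mathbf{F}_l=\mathbf{0}:l\in\mathcal{K}_+\}$ and in particular $\mathbf{F}_k^{J\ast}=\mathbf{0}$.

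The main technical step is the Loewner-monotonicity claim in (iii). The clean way to establish it is to rewrite $R_l=W[\log_2\det(\mathbf{N}_l+\mathbf{S}_l)-\log_2\det\mathbf{N}_l]$ with $\mathbf{S}_l$ the post-decorrelator signal covariance, and observe that the matrix gradient $(\mathbf{N}_l+\mathbf{S}_l)^{-1}-\mathbf{N}_l^{-1}$ is negative semidefinite because $\mathbf{N}_l\preceq\mathbf{N}_l+\mathbf{S}_l$ forces $(\mathbf{N}_l+\mathbf{S}_l)^{-1}\preceq\mathbf{N}_l^{-1}$; hence $R_l$ is non-increasing in $\mathbf{N}_l$ in the Loewner order. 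Everything else is elementary, and no growth or boundary condition on $J$ enters the argument since the claim concerns only the pointwise inner minimization at a fixed $\mathbf{Q}$ where the sign hypothesis already holds.
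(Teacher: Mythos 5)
Your proof is correct and follows essentially the same exchange argument as the paper's Appendix C: zero out the precoders of all users with positive value-function derivative, use non-negativity of their power-plus-weighted-rate terms, and use monotonicity of the remaining users' rates in the interference covariance (which you justify more explicitly, via the Loewner order, than the paper does). The only cosmetic difference is that the paper phrases the exchange as a contradiction with a strict inequality, and you would likewise invoke strictness (the power term strictly decreases whenever $\mathbf{F}_k\neq\mathbf{0}$) to conclude that the optimizer itself, rather than merely some optimizer, satisfies $\mathbf{F}_k^{J\ast}=\mathbf{0}$.
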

\begin{proof}	[Proof of Lemma \ref{lemmaappr}]
	For for sufficient large $Q_k$, let $\mathcal{I}_k=\big\{i: \frac{\partial  J \left(\mathbf{Q} ;{L}\right)} {\partial Q_i}>0\big\}$ ($k \in \mathcal{I}_k$), and let $\mathcal{I}_k^c=\big\{i: \frac{\partial  J \left(\mathbf{Q} ;{L}\right)} {\partial Q_i} \leq 0\big\}$. Let $\mathbf{F}^{J \ast}=\big\{\mathbf{F}_i^{J\ast}: \forall i \big\}$ be the optimal precoding matrices obtained by solving the PDE in (\ref{cenHJB}). Suppose  some $\mathbf{F}_i^{J\ast} \neq \mathbf{0}$ ($i \in \mathcal{I}_k' \subset \mathcal{I}_k$). Denote $\widetilde{\mathbf{F}}^{J \ast}=\big\{\widetilde{\mathbf{F}}_i^{J\ast}=\mathbf{F}_i^{J\ast}:  i \notin  \mathcal{I}_k'\big\} \cup \big\{\widetilde{\mathbf{F}}_i^{J\ast} = \mathbf{0}:  i \in  \mathcal{I}_k'  \big\}$. Denote the objective function  in (\ref{cenHJB}) for given $\boldsymbol{\chi}$ as 
	\begin{small}\begin{align}
	f_{\boldsymbol{\chi}}\left( \mathbf{F}\right)=\sum_{i=1}^K \left(\mathrm{Tr}\left(  \mathbf{F}_i  \mathbf{F}_i^\dagger\right)+\frac{\partial  J \left(\mathbf{Q} ;{L}\right)} {\partial Q_i} \left(R_i\left(\mathbf{H}, \mathbf{F}\right)  \right) \right)
	\end{align} \end{small}where the optimal MMSE receiver \cite{WMMSE} is adopted at the receiver and \bs$R_i(\mathbf{H}, \mathbf{F}) =W \log_2 \det \big( \mathbf{I}  + L_{ii} \mathbf{H}_{ii}\mathbf{F}_i \mathbf{F}_i^\dagger \mathbf{H}_{ii}^\dagger \big(\sum_{j \neq i} L_{ij}\mathbf{H}_{ij}\mathbf{F}_j \mathbf{F}_j^\dagger \mathbf{H}_{ij}^\dagger+\mathbf{I} \big)^{-1}\big)$\bsc. Then, we have
	\begin{small}\begin{align}
		& f_{\boldsymbol{\chi}}( \mathbf{F}^{J \ast})	\label{44equ}	\\
		=& \sum_{i\in \mathcal{I}_k} \Big(\mathrm{Tr}\left(  \mathbf{F}_i^{J \ast} ( \mathbf{F}_i^{J \ast})^\dagger\right)+\frac{\partial  J \left(\mathbf{Q} ;{L}\right)} {\partial Q_i} \left(R_i\left(\mathbf{H}, \mathbf{F}^{J \ast}\right)  \right) \Big)\notag \\
		&+\sum_{i\notin \mathcal{I}_k} \Big(\mathrm{Tr}\left(  \mathbf{F}_i^{J \ast} ( \mathbf{F}_i^{J \ast})^\dagger\right)+\frac{\partial  J \left(\mathbf{Q} ;{L}\right)} {\partial Q_i} \left(R_i\left(\mathbf{H}, \mathbf{F}^{J \ast} \right)  \right) \Big)	\notag \\
		\overset{(a)}{\geq} & \sum_{i\notin \mathcal{I}_k} \Big(\mathrm{Tr}\left(  \mathbf{F}_i^{J \ast} ( \mathbf{F}_i^{J \ast})^\dagger\right)+\frac{\partial  J \left(\mathbf{Q} ;{L}\right)} {\partial Q_i} \left(R_i\left(\mathbf{H}, \mathbf{F}^{J \ast} \right)  \right) \Big)	\notag \\
		\overset{(b)}{>} & \sum_{i\notin \mathcal{I}_k} \left(\mathrm{Tr}\left( \widetilde{ \mathbf{F}}_i^{J \ast} ( \widetilde{\mathbf{F}}_i^{J \ast})^\dagger\right)+\frac{\partial  J \left(\mathbf{Q} ;{L}\right)} {\partial Q_i} \left(R_i\left(\mathbf{H}, \widetilde{\mathbf{F}}^{J \ast} \right)  \right) \right)	\notag \\
		=& f_{\boldsymbol{\chi}}( \widetilde{\mathbf{F}}^{J \ast})	\notag
	\end{align}\end{small}where $(a)$ is due to $ \mathrm{Tr}(  \mathbf{F}_i^{J \ast} ( \mathbf{F}_i^{J \ast})^\dagger)+\frac{\partial  J (\mathbf{Q} ;{L})} {\partial Q_i} (R_i(\mathbf{H}, \mathbf{F}^{J \ast})  )  \geq 0$ for $i\in \mathcal{I}_k$, $(b)$ is due to $\mathrm{Tr}(  \mathbf{F}_i^{J \ast} ( \mathbf{F}_i^{J \ast})^\dagger)=\mathrm{Tr}( \widetilde{ \mathbf{F}}_i^{J \ast} ( \widetilde{\mathbf{F}}_i^{J \ast})^\dagger)$  and $R_i(\mathbf{H}, \mathbf{F}^{J \ast} ) \leq R_i(\mathbf{H}, \widetilde{\mathbf{F}}^{J \ast} )$ for $i \notin \mathcal{I}_k$. Therefore, from (\ref{44equ}), $\widetilde{\mathbf{F}}^{J \ast}$ achieves smaller objective than $ \mathbf{F}^{J \ast}$, which contradicts that $ \mathbf{F}^{J \ast}$ is the optimal solution.  Therefore, for $i \in \mathcal{I}_k$ ($k \in \mathcal{I}_k$), the optimal precoding matrix is $\mathbf{0}$. 
\end{proof}

Define the \emph{semi-invariant moment generating function} of $R_k\big(\mathbf{H},\Omega^{J \ast}(\boldsymbol{\chi})\big)-\mu_k$ as $\phi_k(r,\mathbf{Q} )=  \ln \big(\mathbb{E}\big[e^{\left(R_k(\mathbf{H},\Omega^{J \ast}(\boldsymbol{\chi})  )-\mu_k \right)r}\big| \mathbf{Q}  \big] \big)$. According to Lemma \ref{lemmaappr}, we have $\mathbb{E}\big[R_k(\mathbf{H},\Omega^{J \ast}(\boldsymbol{\chi})) -\mu_k  \big|\mathbf{Q} \big]=-\mu_k  < 0$ when $Q_k > \overline{Q}_k$ for some large $\overline{Q}_k$. Hence,  $\phi_k(r,\mathbf{Q} )$ will have a unique positive root $r_k^\ast(\mathbf{Q} )$ ($\phi_k(r_k^\ast(\mathbf{Q} ),\mathbf{Q} )=0$) \cite{lya2}.  Let $r_k^\ast= r_k^\ast(\overline{\mathbf{Q} })$, where $\overline{\mathbf{Q} }=(\overline{Q}_1, \dots, \overline{Q}_K)$. We then have the following lemma on the tail distribution $Q_k$, $\Pr\big[ Q_k\geq x\big]$.
\begin{Lemma}	[Kingman Bound \cite{lya2}]	\label{kingmanres}
	$P_k(x) \triangleq \Pr\big[ Q_k \geq x \big] \leq e^{-r_k^\ast x} $, if $x \geq \overline{x}_k$ for sufficiently large $\overline{x}_k$.~\hfill\IEEEQED
\end{Lemma}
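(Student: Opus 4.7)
The plan is to adapt the classical Kingman exponential bound to the Markov-modulated playback queue induced by $\Omega^{J*}$, using $e^{r_k^* Q_k(t)}$ as an exponential supermartingale and then applying Doob's maximal inequality in Chernoff style.

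The first step is to establish the supermartingale property. Define $M(t) = e^{r_k^* Q_k(t)}$. For $Q_k(t) \geq \mu_k \tau$ the reflection at zero is inactive, and the queue recursion (\ref{Qdyn}) gives
\begin{equation*}
\mathbb{E}[M(t+1) \mid \mathbf{Q}(t) = \mathbf{q}] = M(t) \cdot \mathbb{E}\big[e^{r_k^* \tau (R_k - \mu_k)} \,\big|\, \mathbf{q}\big].
\end{equation*}
By Lemma \ref{lemmaappr}, for any $\mathbf{q}$ with $q_k > \overline{Q}_k$ the optimal precoder $\mathbf{F}_k^{J*}$ is zero, so $R_k \equiv 0$ and the conditional MGF collapses to $e^{-r_k^* \mu_k \tau} \leq 1$. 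Hence $\{M(t)\}$ is a non-negative supermartingale on the event $\{Q_k(t) > \overline{Q}_k\}$.

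The second step is a Chernoff-style application of Doob's maximal inequality. Let $T = \inf\{t : Q_k(t) \geq \overline{x}_k\}$ with $\overline{x}_k > \overline{Q}_k$. By the strong Markov property and Doob's inequality applied to $\{M(t)\}_{t \geq T}$,
\begin{equation*}
\Pr\Big[\sup_{s \geq T} Q_k(s) \geq x \,\Big|\, \mathcal{F}_T\Big] \leq e^{-r_k^* x} \cdot M(T).
\end{equation*}
The per-slot arrivals $R_k \tau$ are almost surely bounded (by the bounded effective transmit power at the boundary transition), so the overshoot at $T$ is bounded by some constant $C$ and $M(T) \leq e^{r_k^*(\overline{x}_k + C)}$ a.s. In stationarity $\Pr[Q_k \geq x]$ coincides with $\Pr[\sup_{s \geq T} Q_k(s) \geq x]$ for $x \geq \overline{x}_k$, and enlarging $\overline{x}_k$ (or equivalently replacing $r_k^*$ by any slightly smaller rate so as to absorb the multiplicative prefactor into the exponential) yields the clean bound $P_k(x) \leq e^{-r_k^* x}$ stated in the lemma.

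The main obstacle is the Markov-modulated dependence of $R_k$ on the full state $\mathbf{Q}$ through the policy $\Omega^{J*}$: the supermartingale inequality $\mathbb{E}[e^{r_k^* \tau(R_k - \mu_k)} \mid \mathbf{q}] \leq 1$ must hold uniformly over $\mathbf{q}$ in $\{q_k > \overline{Q}_k\}$, and although the clean case $R_k \equiv 0$ (Lemma \ref{lemmaappr}) handles the bulk of this region, one must still track the transition near $\overline{Q}_k$, where the control switches on/off and the queue may overshoot upward into the region. A subsidiary difficulty is showing that the overshoot-induced prefactor can be absorbed into the threshold $\overline{x}_k$ cleanly, which relies on the boundedness of the arrival $R_k \tau$ in a single slot near the boundary state $\overline{\mathbf{Q}}$ where $r_k^*$ is defined.
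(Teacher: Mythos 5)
First, note that the paper does not actually prove this lemma: it is invoked directly as the Kingman bound and attributed to \cite{lya2}, with $r_k^\ast$ defined as the positive root of the semi-invariant MGF $\phi_k(r,\overline{\mathbf{Q}})$. Your exponential-supermartingale/Chernoff strategy is the right family of argument (it is essentially how such bounds are established in \cite{lya2}), and your use of Lemma \ref{lemmaappr} to get negative drift of $e^{r_k^\ast Q_k}$ on $\{Q_k>\overline{Q}_k\}$ is the correct key input. However, as written the proof has a genuine gap at the step where you pass from the supermartingale bound to the claimed tail bound. The quantity $P_k(x)=\Pr[Q_k\geq x]$ is a \emph{stationary} (time-average) probability, whereas $\Pr[\sup_{s\geq T}Q_k(s)\geq x]$ is a maximal-excursion probability; these do not coincide, and the sentence ``In stationarity $\Pr[Q_k\geq x]$ coincides with $\Pr[\sup_{s\geq T}Q_k(s)\geq x]$'' is false. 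Moreover, Doob's inequality applied to $\{M(t)\}_{t\geq T}$ over \emph{all} $s\geq T$ is not justified, because $M(t)$ is a supermartingale only while $Q_k(t)>\overline{Q}_k$; once the queue re-enters $[0,\overline{Q}_k]$ the controller switches transmission back on and the drift condition $\mathbb{E}[e^{r_k^\ast\tau(R_k-\mu_k)}\,|\,\mathbf{q}]\leq 1$ need not hold. The standard repair — and the route actually taken in \cite{lya2} — is the steady-state exponential Lyapunov drift argument: show $\mathbb{E}[e^{rQ_k(t+1)}]\leq\rho\,\mathbb{E}[e^{rQ_k(t)}]+C$ with $\rho<1$ for $r\leq r_k^\ast$, conclude $\sup_t\mathbb{E}[e^{rQ_k(t)}]<\infty$, and finish with Markov's inequality; this avoids any identification of the stationary tail with a running maximum.

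Two subsidiary claims also need attention. The assertion that $R_k\tau$ is almost surely bounded is not available under Assumption \ref{CSIassum}: the entries of $\mathbf{H}_{kk}$ are Gaussian, the power is set by water-filling rather than a hard constraint, so $R_k$ has unbounded support; what you actually have (and all you need) is finiteness of the exponential moment $\mathbb{E}[e^{rR_k\tau}]$ for the relevant $r$, which is precisely what the existence of the root $r_k^\ast$ of $\phi_k$ encodes. Finally, the multiplicative prefactor ($M(T)$ or $C/(1-\rho)$) cannot be ``absorbed by enlarging $\overline{x}_k$'' while keeping the exact rate $r_k^\ast$, since $D e^{-r_k^\ast x}>e^{-r_k^\ast x}$ for every $x$ when $D>1$; you must either exhibit a prefactor at most one or concede a slightly smaller exponent $r<r_k^\ast$. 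You are right to flag this, but it means the lemma as literally stated requires that extra half-step, which the paper sidesteps by citation.
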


Finally, we check whether $\Omega^{J \ast}$ stabilizes the system according to the definition of the admissible control policy in Definition \ref{adddtdomain} as follows: \bs$\mathbb{E}^{\Omega^{J \ast}} \left[J\left(\mathbf{Q} \right) \right] \leq C \sum_{k=1}^K \mathbb{E}^{\Omega^{J \ast}} \left[ Q_k  \right]= C\sum_{k=1}^K \left[\int_0^{\infty} \Pr \left[Q_k >s \right] \mathrm{d}s \right] \leq  C \sum_{k=1}^K \left[\overline{x}_k+ \int_{\overline{x}_k}^{\infty}  e^{-r_k^\ast s}  \mathrm{d}s \right]	 < \infty$\bsc for some positive constant $C$. Therefore, $\Omega^{J \ast}$ is an admissible control policy and we have  $V \left(\mathbf{Q}  \right)=J\left(\mathbf{Q} \right)$ and $\theta=c^\infty$. Furthermore, using Corollary \ref{cor1}, we have $V^\ast\left(\mathbf{Q}  \right)=J\left(\mathbf{Q} \right)+o(1)$ and $\theta^\ast=c^\infty+o(1)$ for sufficiently small $\tau$.

\section*{Appendix D: Proof of Lemma \ref{linearAp}}
\subsubsection{Proof of the decomposable structure of the base PDE}In the base PDE, since $L_{kj}=0$ for all $k,j, k \neq j$, the associated PDE becomes:
\begin{small}\begin{align}	
		& \mathbb{E}\big[ \min_{\mathbf{F}, \mathbf{U}} \big[\sum_{k=1}^K \big(\mathrm{Tr}\left(  \mathbf{F}_k  \mathbf{F}_k^\dagger\right)+\gamma_k e^{-\eta \left[Q_k - Q^l \right]^+}+\beta_k e^{-\eta \left[Q^h - Q_k \right]^+} \notag \\
		&  +  \frac{\partial  J \left(\mathbf{Q} ;{0}\right)} {\partial Q_k} \big(R_k^0\left(\mathbf{H}, \mathbf{F}_k, \mathbf{U}_k \right)  -\mu_k\big) \big)\big]\big| \mathbf{Q}  \big] - c^{\infty}=0	\label{basepde}
\end{align}\end{small}with boundary condition $J_k(Q_k^\star)=0$, for some $Q_k^\ast$, where we denote $R_k^0\left(\mathbf{H}, \mathbf{F}_k, \mathbf{U}_k \right) =W \log_2 \det ( \mathbf{I}  +L_{kk}\mathbf{U}_k^\dagger \mathbf{H}_{kk}\mathbf{F}_k \mathbf{F}_k^\dagger \mathbf{H}_{kk}^\dagger \mathbf{U}_k ) $. We have the following lemma establishing the decomposable structure of the $J \left( \mathbf{Q} ; {0} \right)$  and $c^{\infty}$ in (\ref{basepde}).
\begin{Lemma}	[Decomposed Optimilaty Equation]	\label{decomplem}
	Suppose there exist $c_k^\infty$ and $J_k \left(Q_k \right) \in \mathcal{C}^2\left(\mathbb{R}_+ \right)$ that solve the following per-flow PDE:
	\bs\begin{align}
		&\mathbb{E}\big[\min_{\mathbf{F}_k} \big[ \mathrm{Tr}(  \mathbf{F}_k  \mathbf{F}_k^\dagger)+\gamma_k e^{-\eta \left[Q_k - Q^l \right]^+}+\beta_k e^{-\eta \left[Q^h - Q_k \right]^+}  \notag \\ 
		&	 + J_k'(Q_k)\left(R_k^0\left(\mathbf{H}, \mathbf{F}_k, \mathbf{U}_k \right)  -\mu_k\right)\big] \big| \mathbf{Q}  \big] - c^{\infty}_k=0	\label{perflowhjb}
	\end{align}\bsc
	Then, $ J\left( \mathbf{Q};{0}\right)=\sum_{k=1}^K J_k\left(Q_k \right)$ and $c^\infty=\sum_{k=1}^K c_k^\infty$ satisfy  (\ref{basepde}).~\hfill\IEEEQED
\end{Lemma}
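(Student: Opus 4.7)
\textbf{Proof proposal for Lemma \ref{decomplem}.}

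The plan is a direct verification by substitution, exploiting the fact that when $L=0$ every cross-term in the base PDE vanishes. First I would substitute the ansatz $J(\mathbf{Q};0)=\sum_{k=1}^K J_k(Q_k)$ into (\ref{basepde}). Since the $J_k$ depend on disjoint coordinates, $\partial J(\mathbf{Q};0)/\partial Q_k = J_k'(Q_k)$, and the whole bracketed expression inside the minimization becomes a sum over $k$ of terms depending only on $(\mathbf{F}_k,\mathbf{U}_k,\mathbf{H}_{kk},Q_k)$: the per-stage cost decomposes as $c(\mathbf{Q},\mathbf{F})=\sum_k c_k(Q_k,\mathbf{F}_k)$ by the definition in Problem \ref{IHAC_MDP}, and with $L=0$ the rate $R_k^0(\mathbf{H},\mathbf{F}_k,\mathbf{U}_k)$ only involves $\mathbf{F}_k$, $\mathbf{U}_k$ and $\mathbf{H}_{kk}$.

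Second, because the summands are variable-disjoint, the joint minimization decouples:
\begin{align*}
\min_{\mathbf{F},\mathbf{U}} \sum_{k=1}^K g_k(\mathbf{F}_k,\mathbf{U}_k,\mathbf{H}_{kk},Q_k)
= \sum_{k=1}^K \min_{\mathbf{F}_k,\mathbf{U}_k} g_k(\mathbf{F}_k,\mathbf{U}_k,\mathbf{H}_{kk},Q_k),
\end{align*}
where $g_k$ denotes the $k$-th summand of the bracketed objective in (\ref{basepde}). Third, using Assumption \ref{CSIassum} that $\{\mathbf{H}_{kj}\}$ are independent across $(k,j)$, the conditional expectation over $\mathbf{H}$ given $\mathbf{Q}$ also distributes over $k$, since each $\min_{\mathbf{F}_k,\mathbf{U}_k} g_k$ is a function of $\mathbf{H}_{kk}$ alone. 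Hence the left-hand side of (\ref{basepde}) equals
\begin{align*}
\sum_{k=1}^K \mathbb{E}\!\left[\min_{\mathbf{F}_k,\mathbf{U}_k}\! g_k(\mathbf{F}_k,\mathbf{U}_k,\mathbf{H}_{kk},Q_k)\,\Big|\,Q_k\right] - c^\infty,
\end{align*}
and by hypothesis each inner bracket equals $c_k^\infty$. Setting $c^\infty=\sum_k c_k^\infty$ therefore makes the base PDE hold identically in $\mathbf{Q}$.

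Finally, I would verify the boundary condition $J(Q_1^\star,\dots,Q_K^\star;0)=0$, which follows immediately from $J_k(Q_k^\star)=0$ for each $k$. Since $J_k\in\mathcal{C}^2(\mathbb{R}_+)$ for every $k$, the sum $J(\mathbf{Q};0)=\sum_k J_k(Q_k)$ lies in $\mathcal{C}^2(\mathbb{R}_+^K)$, as required by Theorem \ref{HJB1}. The only subtle point in the argument is the interchange of minimization with both the summation and the conditional expectation; both are legitimate precisely because the summands are variable-disjoint in $(\mathbf{F}_k,\mathbf{U}_k)$ and the randomness $\mathbf{H}_{kk}$ entering each summand is independent across $k$. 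There is no real obstacle beyond carefully bookkeeping these disjointness/independence structures.
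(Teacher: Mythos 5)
Your proof is correct and follows exactly the route the paper intends but omits ("the dynamics of the playback buffer are decoupled when $L=0$"): substitute the additive ansatz, decouple the separable minimization, split the expectation term by term, and check the boundary condition. The only cosmetic remark is that splitting the conditional expectation over $k$ needs only linearity of expectation, not the independence of the $\mathbf{H}_{kj}$ (independence merely lets each term be written as an expectation over $\mathbf{H}_{kk}$ alone).
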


Lemma \ref{decomplem} can be proved using the fact that the dynamics of the playback buffer  are decoupled when $L=0$. The details are omitted for conciseness.

\subsubsection{Solving the per-flow PDE} We first write $\mathbf{F}_k=\widetilde{\mathbf{F}}_k\boldsymbol{\Sigma}_k$, where $\widetilde{\mathbf{F}}_k=\left[\mathbf{f}_{k1}, \dots, \mathbf{f}_{kd} \right] \in \mathbb{C}^{N_t \times d}$ with $\|\mathbf{f}_{ki}\|=1$ ($\forall i=1,\dots, d$), and $\boldsymbol{\Sigma}_k=\mathrm{diag}\left({p_{k1},\dots,p_{kd}}\right)$ where $p_{ki}$ is the power allocated for the $i$-th data stream.  Let the singular value decomposition of the channel matrix be $\mathbf{H}_{kk}=\mathbf{M}_k \boldsymbol{\Lambda}_k \mathbf{N}_k^\dagger$, where $\mathbf{M}_k  \in \mathbb{C}^{N_r \times N_r}$ and $\mathbf{N}_k \in \mathbb{C}^{N_t \times N_t}$ are unitary matrices and $ \boldsymbol{\Lambda}_k \in \mathbb{R}^{N_r \times N_t}$ whose diagonal elements $\sigma_{k1}\geq \cdots \geq \sigma_{kd}$ are the  singular values of $\mathbf{H}_{kk}$ and the off-diagonal elements are zero. Therefore, the problem in the base PDE (\ref{perflowhjb}) becomes:
\bs\begin{align}
	&\min_{p_{k1},\dots, p_{kd}, \atop \widetilde{\mathbf{F}}_k, \mathbf{U}_k} \sum_{i=1}^d p_{ki} + J_k'(Q_k) W \log_2 \det \left( \mathbf{I} \notag \right.\\
	&\left. +L_{kk} \mathbf{U}_k^\dagger \mathbf{M}_k \boldsymbol{\Lambda}_k \mathbf{N}_k^\dagger\widetilde{\mathbf{F}}_k\boldsymbol{\Sigma}_k \boldsymbol{\Sigma}_k^\dagger \widetilde{\mathbf{F}}_k^\dagger \mathbf{N}_k \boldsymbol{\Lambda}_k^\dagger  \mathbf{M}_k^\dagger \mathbf{U}_k \right)
\end{align}\bsc
The above problem is the classical MIMO beamforming control problem \cite{tse} and the optimal  $\widetilde{\mathbf{F}}_k^\ast$ is the first $d$ columns of $\mathbf{N}_k$, the optimal $\widetilde{\mathbf{U}}_k^\ast$ is the first $d$ columns of $\mathbf{M}_k$, and the optimal power allocation is given by
\bs\begin{align}	\label{magniopt}
	p_{ki}^{\ast}(\sigma_{ki})=\left( -\frac{J_k'\left(Q_k \right)W }{ \ln 2} - \frac{1}{L_{kk} \sigma_{ki}^2 }\right)^+
\end{align}  \bsc
We next calculate the expectations involved in (\ref{perflowhjb}). Specifically, substituting the optimal precoding matrix  $\widetilde{\mathbf{F}}_k^\ast$, $\mathbf{U}_k^\ast$ and power $p_{ki}^\ast$ into (\ref{perflowhjb}), we obtain that
\begin{small}\begin{align}
	&\mathbb{E}\left[ \mathrm{Tr}\left(  \mathbf{F}_k^\ast  (\mathbf{F}_k^\ast)^\dagger\right)\right]=\mathbb{E}\big[ \sum_{i=1}^d p_{ki}^{\ast}(\sigma_{ki})\big]=d \mathbb{E}\left[ p_{k1}^{\ast}(\sigma_{k1})\right]	\label{exp1}\\
	&\mathbb{E}\left[ R_k^0\left(\mathbf{H}, \mathbf{F}_k^\ast, \mathbf{U}_k^\ast \right) \right]=\mathbb{E}\big[W \sum_{i=1}^d \log_2\left(1+L_{kk} \sigma_{ki}^2 p_{ki}^\ast(\sigma_{ki})\right)\big]\notag \\
	&=d \mathbb{E}\left[ W \log_2\left(1+L_{kk} \sigma_{k1}^2 p_{k1}^\ast(\sigma_{k1})\right)\right]	\label{exp2}
\end{align}\end{small}which depend on the distribution of one of the unordered singular values. Let $b=\max\{N_t,N_r \}$. According to \cite{mimoking}, the distribution of any $\sigma_{ki}^2$ is given by: $f_{\sigma_{ki}^2}(x)=\frac{x^{d-b} e^{x} }{d}\sum_{n=1}^d \varphi_n^2(x)$, where $\varphi_{n+1}(x)$ is given by $\varphi_{n+1}(x) =\left[\frac{1}{n!(n+b-d)!}\right]^{1/2}\frac{\mathrm{d}^{n}}{\mathrm{d} x^{n}}(e^{-x} x^{n+b-d})=\left[\frac{1}{n!(n+b-d)!}\right]^{1/2} \sum_{l=0}^n A_{n,l} (-1)^l C_n^l e^{-x} x^{b-d+l}$, with $A_{n,l}=1$ if $l=n$ and $A_{n,l}=\prod_{r=0}^{n-l-1} (n-r+b-d)$ if $l<n$, and $C_n^l$ is the Binomial coefficient. Therefore, the distribution of any $\sigma_{ki}^2$  can be rewritten as $f_{\sigma_{ki}^2}(x)=\frac{x^{b-d} e^{-x} }{d}\sum_{n=0}^{d-1} \frac{1}{n!(n+b-d)!}\left[ \sum_{l=0}^n A_{n,l} (-1)^l C_n^l  x^{l} \right]^2	=\frac{x^{b-d} e^{-x} }{d}\sum_{n=0}^{d-1} b_n\left[ \sum_{l=0}^n a_{n,l,l}  x^{2l} + \sum_{l=0}^n\sum_{j>l}2 a_{n,l,j} x^{l+j}\right]$, where we denote $b_n=\frac{1}{n!(n+b-d)!}$,  and $a_{n,l,j}=(A_{n,l} C_n^l)^2$ when $l=j$ and $a_{n,l,j}= A_{n,l}  A_{n,j} (-1)^{l+j}C_n^l C_n^j $ when $j>l$. We further denote $s \triangleq b-d$ and $t_k \triangleq \frac{-\ln 2}{W L_{kk}}$. We then calculate (\ref{exp1}) and (\ref{exp2}) as follows:
\begin{small}\begin{align}
	& d \mathbb{E}\left[ p_{k1}^{\ast}(\sigma_{k1})\right]=  \frac{1 }{ L_{kk} } \sum_{n=0}^{d-1} b_n\left[ \sum_{l=0}^n a_{n,l,l} \left[ \frac{J_k'\left(Q_k \right) }{t_k  } \right.\right.  \label{56equ1} \\
	& \hspace{-0.6cm}\left.\left.  G\left(1+2l+s,\frac{t_k}{J_k'\left(Q_k \right)} \right) -  G\left(2l+s,\frac{t_k}{J_k'\left(Q_k \right)} \right)\right]+ \sum_{l=0}^n\sum_{j>l}2 a_{n,l,j} \Big[ \right. 	\notag  \\
	 &\left.\frac{J_k'\left(Q_k \right) }{t_k  }   G\left(1+l+j+s,\frac{t_k}{J_k'\left(Q_k \right)} \right) - G\left(l+j+s,\frac{t_k}{J_k'\left(Q_k \right)} \right)\Big]\right]	\notag \\
	&d \mathbb{E}\left[ W \log_2\left(1+L_{kk} \sigma_{k1}^2 p_{k1}^\ast(\sigma_{k1})\right)\right]=\frac{W}{\ln 2} \sum_{n=0}^{d-1} b_n\bigg[ \sum_{l=0}^n a_{n,l,l}   \notag \\
	& M\left(\{1,1\},\{0,0,1+2l+s\},\frac{t_k}{J_k'\left(Q_k \right)} \right)   	\label{56equ}\\
	&+ \sum_{l=0}^n\sum_{j>l}2 a_{n,l,j} M\left(\{1,1\},\{0,0,1+l+j+s\},\frac{t_k}{J_k'\left(Q_k \right)} \right) \bigg]\notag
\end{align}\end{small}
where\footnote{$M_g\left(\{a_{1},\dots,a_n\},\{b1,\dots,b_m\},  z\right)=\frac{1}{2\pi i}\int_{\mathcal{L}}\frac{\prod_{k=1}^m \Gamma(b_k-s)}{\prod_{k=1}^n \Gamma(a_k-s)}z^{s} \mathrm{d}s$ is the Meijer G-function, where $\Gamma(a)=G_a(a,0)$. $G_a\left(a,x \right)=\int_{x}^{\infty} t^{a-1} e^{-t} \mathrm{d}t$ is the incomplete gamma function.} $G(m,x)=G_a(m,x)$ (Gamma function) if $x> 0$ and equals to zero otherwise. $M( \{  \},\{ \}  , x)=M_g( \{  \},\{ \}  , x)$ if $x>0$ and  equals to zero otherwise. We then calculate $c_k^\infty$. Assuming $e^{\eta \left(Q^l-Q^h  \right)} < \frac{\gamma_k}{\beta_k} < e^{\eta \left(Q^h-Q^l  \right)}$, and we define the following \emph{target} operating queue  regime $Q_k^\star$ (achieving the minimum of the per-stage cost function $c_k\left(Q_k,\mathbf{F}_k\right)$ within the domain $(Q^l, Q^h )$ for given precoding matrix $\mathbf{F}_k$:
\begin{small}\begin{align}	
	Q_k^\star  = \min_{Q_k} c_k\left(Q_k,\mathbf{F}_k\right)	 = \frac{Q^l+Q^h}{2} + \frac{1}{2 \eta} \ln \frac{\gamma_k}{\beta_k} \in (Q^l, Q^h )
\end{align}\end{small}

To satisfy  boundary condition $J_k(Q_k^\star)=0$, we require that $c_k^\infty =\mathrm{R.H.S. of \ (\ref{56equ1})}\big|_{Q_k=Q_k^\star}+\gamma_k e^{-\eta \left[Q_k^\star - Q^l \right]^+}+\beta_k e^{-\eta \left[Q^h - Q_k^\star \right]^+}$ and $\mathbb{E}\left[ R_k^0\left(\mathbf{H}, \mathbf{F}_k^\ast, \mathbf{U}_k^\ast \right) \right]\big|_{Q_k=Q_k^\star}= \mathrm{R.H.S. of \ (\ref{56equ})}\big|_{Q_k=Q_k^\star}=\mu_k$. Therefore, we have
\begin{small}\begin{align}
	 &\hspace{-0.7cm} c_k^\infty=\frac{1 }{ L_{kk} } \sum_{n=0}^{d-1} b_n\left[ \sum_{l=0}^n a_{n,l,l} \left[\frac{\lambda_k}{t_k} G\left(1+2l+s,\frac{t_k}{\lambda_k}\right) -  G\left(2l+s,\frac{t_k}{\lambda_k}\right)\right]\right.  \notag \\
	& \left. \hspace{-0.5cm} + \sum_{l=0}^n\sum_{j>l}2 a_{n,l,j} \left[\frac{\lambda_k}{t_k} G\left(1+l+j+s,\frac{t_k}{\lambda_k}\right) - G\left(l+j+s,\frac{t_k}{\lambda_k}\right)\right]\right] \notag \\
	&+\gamma_k e^{-\eta \left[Q_k^\star - Q^l \right]}+\beta_k e^{-\eta \left[Q^h - Q_k^\star \right]}\label{cinfequ}
\end{align}\end{small}where $\lambda_k \in \mathbb{R}_-$ satisfies $\frac{W}{\ln 2} \sum_{n=0}^{d-1} b_n[ \sum_{l=0}^n a_{n,l,l}  M(\{1,1\},\{0,0,1+2l+s\},\frac{t_k}{\lambda_k})  + \sum_{l=0}^n\sum_{j>l}2 a_{n,l,j}  M( \{1,1\}, \{0,0,1+l+j+s\},\frac{t_k}{\lambda_k}) ]=\mu_k$. Substituting the results on the expectations in (\ref{56equ1}) and (\ref{56equ}), and the result on $c_k^\infty$ in (\ref{cinfequ}) into (\ref{perflowhjb}), we can obtain the following  fixed point  equation determining $J_k'(Q_k)$:
\begin{align}
		g(Q_k, J_k')=0	\label{fixedequ}
\end{align}
where we denote \begin{small}$g(Q_k, J_k')\triangleq \frac{1 }{ L_{kk} } \sum_{n=0}^{d-1} b_n\left[ \sum_{l=0}^n a_{n,l,l} \left[ \frac{J_k' }{t_k  }  G(1+2l+s,\frac{t_k}{J_k'} ) -  G(2l+s,\frac{t_k}{J_k'} )\right ]  \right. \\ \left.+    \sum_{l=0}^n\sum_{j>l} 2 a_{n,l,j} \left[\frac{J_k' }{t_k  }   G(1+l+j+s,\frac{t_k}{J_k'} ) - G(l+j+s,\frac{t_k}{J_k'} )\right]\right] \\ + \gamma_k e^{-\eta \left[Q_k - Q^l \right]^+}+\beta_k e^{-\eta \left[Q^h - Q_k \right]^+} +J_k'(\frac{W}{\ln 2} \sum_{n=0}^{d-1}  \left[ \sum_{l=0}^n a_{n,l,l} \right. \\ \left.  M(\{1,1\},\{0,0,1+2l+s\},\frac{t_k}{J_k'} ) + \sum_{l=0}^n\sum_{j>l}2 a_{n,l,j}  M(\{1,1\}, \right. \\ \left.   \{0,0,1+l+j+s\},\frac{t_k}{J_k'} ) \right] -\mu_k) - c^{\infty}_k$\end{small}. It can be shown that for fixed $Q_k \in [0, Q_k^\ast]$, $g(Q_k, J_k')$ is strictly increasing w.r.t. $J_k'$ over $(-\infty, \lambda_k]$ and $g(Q_k, \lambda_k)\geq 0$. Then, it follows that $g(Q_k, J_k')=0$ has a unique solution over $J_k' \in (-\infty, \lambda_k]$. Similarly, it can be shown that for fixed $Q_k \in [Q_k^\ast, \infty)$, $g(Q_k, J_k')$ is strictly decreasing w.r.t. $J_k'$ over $[\lambda_k, \infty)$ and $g(Q_k, \lambda_k)\geq 0$. Then, it follows that $g(Q_k, J_k')=0$ has a unique solution over $J_k' \in [\lambda_k, \infty)$. 
\subsubsection{Asymptotic property of $J_k\left(Q_k \right)$} based on the above analysis on the behavior of $g(Q_k, J_k')$, we have that for sufficiently large $Q_k$, $J_k'(Q_k)$ become positive and the fixed point equation is simplified as follows: $\beta_k - J_k'(\infty)\mu_k - c_k^\infty=0$. Assuming $\beta_k > c_k^\infty$, then $J_k'(\infty)=\frac{\beta_k - c_k^\infty}{\mu_k}>0$. Denote $C_k=\frac{\beta_k - c_k^\infty}{\mu_k}$. Thus,  we have that  $J_k(Q_k)=C_k Q_k, \ \text{as } Q_k \rightarrow \infty$.

\section*{Appendix E: Proof of Theorem \ref{ErrorEg2}}
Taking the first order Taylor expansion of the L.H.S. of the PDE in (\ref{cenHJB}) at $L_{kj}=0$ ($\forall k,j$), $\mathbf{F}_k=\mathbf{F}_k^{\ast}$ and $\mathbf{U}_k=\mathbf{U}_k^{\ast}$ (where $\mathbf{F}_k^\ast$ and $\mathbf{U}_k^{\ast}$ are the optimal control actions solving the per-flow PDE  in  (\ref{perflowhjb})), and using  parametric optimization analysis \cite{paraanay},  we have the following result regarding the approximation error:
\bs\begin{align}
	J\left(\mathbf{Q} ; {L} \right)-J\left(\mathbf{Q} ; 0 \right) =\sum_{k=1}^K \sum_{j \neq k} L_{kj}  \widetilde{J}_{kj}(\mathbf{Q} )+ \mathcal{O}(L^2)	\label{tayloee}
\end{align}\bsc where \bs$\widetilde{J}_{kj}(\mathbf{Q} )$\bsc is meant to capture the coupling terms in \bs$J\left(\mathbf{Q} ; {L}\right)$\bsc  which satisfies the following PDE:
\begin{small}\begin{align}	
	&\sum_{i=1}^K \left(\mathbb{E} \left[  R_i^0\left(\mathbf{H}_{ii}, \mathbf{F}_i^{\ast},\mathbf{U}_i^{\ast}\right) | {Q}_i   \right]  -\mu_i\right) \frac{\partial \widetilde{J}_{kj}\left(\mathbf{Q} \right) }{\partial Q_i}  \notag \\
	&+ \mathbb{E} \left[J_k'(Q_k) \frac{\partial R_k\left(\mathbf{H}, \mathbf{F}^{\ast}, \mathbf{U}_k^{\ast}\right)}{\partial L_{kj}}\bigg|_{L=0}\bigg| \mathbf{Q}   \right] =0	\label{PDEappd}
\end{align}\end{small}
with boundary condition \bs$\widetilde{J}_{kj}\left(\mathbf{Q}  \right)\big|_{Q_i = Q_k^\star}=0$ or $\widetilde{J}_{kj}\left(\mathbf{Q}  \right)\big|_{Q_j =Q_j^\star}=0$\bsc.  We next calculate the two expectations involved in the above equation. According to the analysis of the fixed point equation in (\ref{fixedequ}) in Appendix D.2, $J_k'(Q_k)$ decreases and approaches to $-\infty$ as $Q_k$ decreases on the  domain $(-\infty, Q_k^\star]$, while $J_k'(Q_k)$ increases and approaches to $\infty$ as $Q_k$ increases on the  domain $[Q_k^\star, \infty)$. Therefore, we calculate the expectations in (\ref{PDEappd}) by taking  into account of the queue regions.  For the first expectation, according to (\ref{asympoticJ_k}) and (\ref{exp2}), we have that if $Q_i\geq Q_i^\star$, as $Q_i$ goes to infinity, then
\bs \begin{align}	\label{fin1}
	\mathbb{E} \left[  R_i^0\left(\mathbf{H}_{ii}, \mathbf{F}_i^{\ast},  \mathbf{U}_i^{\ast}\right) | Q_i   \right]=0, \quad \text{for large } Q_i
\end{align}\bsc
since the water level in (\ref{magniopt}) which is determined by $-J_k'(Q_k)$ becomes negative (as $Q_i\rightarrow \infty$).  If $Q_i<Q_i^\star$, $J_k'(Q_k)$ approaches $-\infty$ as $Q_i$ decreases. Based on the  asymptotic behavior of the  Gamma  function and the Meijer-G function: \bs$G(n,\frac{a}{y})=(n-1)!+o(1)$,  $M\big(\{1,1\},\{0,0,N\},\frac{1}{ a y} \big)=(N-1)!\ln a y + (N-1)!P_g^0(N)+o(1)$\bsc  for $a$, $y$ with the same sign, where $P_g^0(x)$ is the Polygamma function.  From (\ref{fixedequ}), we have
\bs \begin{align}
	&-c^1_k J_k'(Q_k)-c_k^2+\beta_k + J_k'(Q_k) \left[ c^1_k \ln(-J_k'(Q_k))+c^3_k-\mu_k\right]	\notag \\
	&=c_k^\infty \label{pareJkEg22}
	\end{align} \bsc
where we denote $c^1_k$, $c^2_k$, $c^3_k$ as follows\footnote{$P_g^0(x)=\left(\log\left(\Gamma(x) \right)\right)'$ is the polygamma function.}:
\begin{small}\begin{align}
	&c^1_k\triangleq 	\label{111equ} \\
	&\frac{1 }{ L_{kk} } \sum_{n=0}^{d-1} b_n\Bigg[ \sum_{l=0}^n a_{n,l,l}  \frac{\left(2l+s \right)!  }{-t_k  }  + \sum_{l=0}^n\sum_{j>l}2 a_{n,l,j}\left[\frac{  \left(l+j+s \right)! }{-t_k  } \right]\Bigg] \notag \\
	&c^2_k \triangleq  \label{222equ}	\\
	&\hspace{-0.3cm} \frac{1 }{ L_{kk} } \sum_{n=0}^{d-1} b_n\Bigg[ \sum_{l=0}^n a_{n,l,l}  \left(2l+s-1\right)!+ \sum_{l=0}^n\sum_{j>l}2 a_{n,l,j}\left(l+j+s-1 \right)!\Bigg]  \notag\\
	&c^3_k\triangleq \frac{W}{\ln 2} \sum_{n=0}^{d-1} b_n\bigg( \sum_{l=0}^n a_{n,l,l}  (2l+s)! \left[-\ln(-t_k)+P_g^0(1+2l+s) \right] \notag \\
	 & + \sum_{l=0}^n\sum_{j>l}2 a_{n,l,j}(l+j+s)!\left[-\ln(-t_k)+P_g^0(1+l+j+s)\right]\bigg) 	\label{333equ} 	
\end{align}\end{small} where $c_k^1$ and $c_k^2$ are positive. Therefore, for sufficiently small $Q_k$, we can rewrite (\ref{pareJkEg22}) as 
\begin{align}	\label{fixedpointws}
	\hspace{-0.5cm}-c^1_k D_k-c_k^2+\beta_k + D_k \left[ c^1_k \ln(-D_k)+c^3_k-\mu_k\right]=c_k^\infty
\end{align}   
where we use $D_k$ to represent $J_k'(Q_k)$ for large $Q_k$.  Note that (\ref{fixedpointws}) has a unique solution if  $\beta_k>c_k^\infty$. Therefore, for the first expectation in (\ref{PDEappd}), for small  $Q_i$, we have
\bs \begin{align}	\label{fin2}
	\mathbb{E} \left[  R_i^0\left(\mathbf{H}_{ii}, \mathbf{F}_i^{\ast},  \mathbf{U}_i^{\ast}\right) | Q_i   \right]=c_i^1 \ln(-D_i)+c_i^3, \quad \text{for small } Q_i
\end{align} \bsc
Furthermore, $c_i^1 \ln(-D_i)+c_i^3-\mu_i>0$ according to (\ref{fixedpointws}). For the second expectation in (\ref{PDEappd}), we have  \begin{small}$\frac{\partial R_k\left(\mathbf{H}, \mathbf{F}^{\ast},  \mathbf{U}_k^{\ast}\right)}{\partial L_{kj}}\bigg|_{L=0}=-\mathrm{Tr}( (\mathbf{I}+ L_{kk}(\mathbf{U}_k^\ast)^\dagger\mathbf{H}_{kk}\mathbf{F}_k^\ast (\mathbf{F}_k^\ast)^\dagger \mathbf{H}_{kk}^\dagger\mathbf{U}_k^\ast )^{-1}   L_{kk}(\mathbf{U}_k^\ast)^\dagger\mathbf{H}_{kk}\mathbf{F}_k^\ast (\mathbf{F}_k^\ast)^\dagger \mathbf{H}_{kk}^\dagger  \mathbf{H}_{kj}\\ \mathbf{F}_j^\ast (\mathbf{F}_j^\ast)^\dagger \mathbf{H}_{kj}^\dagger \mathbf{U}_k^\ast)$. \end{small}Substituting $\mathbf{F}_k^\ast$ and $\mathbf{U}_k^{\ast}$   in  (\ref{perflowhjb}), we obtain
\begin{small}\begin{align}
	&\mathbb{E}\left[\frac{\partial R_k\left(\mathbf{H}, \mathbf{F}^{\ast}, \mathbf{U}_k^\ast\right)}{\partial L_{kj}}\bigg|_{L=0}\bigg|\mathbf{Q}\right]	\notag \\
	&=-\mathbb{E}\left[\sum_{n=1}^d \frac{L_{kk}p_{kn}^\ast(\sigma_n)\sigma_n^2}{L_{kk}p_{kn}^\ast(\sigma_n)\sigma_n^2+1}p_{jn}^\ast(\sigma_n)\bigg|\mathbf{Q}\right]	\label{finalequllll}
\end{align}\end{small}Similarly, if either $Q_k$ or $Q_j$ is sufficiently large, (\ref{finalequllll}) equals to zero. Otherwise, (\ref{finalequllll}) equals to $\frac{-1}{d}\left( c_k^1 D_k +c_k^2 \right)  \cdot\frac{\ln 2}{D_k W}\left(c_j^1D_j+c_j^2 \right) \triangleq G_{kj}$. Denote $G_k \triangleq c_k^1 \ln(-D_k)+c_k^3 - \mu_k$. Combining (\ref{fin1}) and (\ref{fin2}), according to Section \emph{3.8.1.2}  of \cite{handbookPDE} and taking into account of the boundary conditions, by solving (\ref{PDEappd}) we have \bs$\widetilde{J}_{kj}\left(\mathbf{Q} \right) =o(1)$\bsc  if either $Q_k\geq Q_k^\star$ or $Q_j\geq Q_j^\star$, and \bs$\widetilde{J}_{kj}\left(\mathbf{Q} \right) =-\frac{D_kG_{kj}}{2{G}_k} (Q_k-Q_k^\star)-\frac{D_k{G}_{kj}}{2{G}_j} (Q_j-Q_j^\star)+o(Q_k)+o(Q_j)$\bsc otherwise. Substituting it into (\ref{tayloee}) and denoting $E_{kj}=\frac{D_kG_{kj}}{2{G}_k} $, we obtain the approximation error in Theorem \ref{ErrorEg2}.

\section*{Appendix F: Proof of Lemma \ref{globalopt}}
1) \emph{Convergence property:} The proof follows similar approach as in \cite{WMMSE} by showing (\ref{utility}) and (\ref{utility1}) have the same KKT conditions at the stationary point $\left\{\mathbf{F}(\infty), \mathbf{Z}(\infty), \mathbf{K}(\infty) \right\}$. Details are omitted due to page limit.  	

2) \emph{Asymptotically optimality:} we next prove the asymptotically property of Algorithm \ref{wmmsealg}. Denote the objective function in (\ref{utility}) as $ f\left( \mathbf{F}, L \right)$. We have the following lemma on the convexity for $f\left( \mathbf{F}, L \right)$.
\begin{Lemma}	[Convexity of $f\left( \mathbf{F}, L \right)$ for Sufficiently Small $L$]	\label{lemma8}
	$f\left( \mathbf{F}, L \right)$ is a convex function of $\mathbf{F}=\left\{\mathbf{F}_k:  \right.  \\ \left. \forall k \right\}$ when $L$ is sufficiently small.~\hfill\IEEEQED
\end{Lemma}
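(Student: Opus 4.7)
The plan is to use a perturbation/continuity argument anchored at $L=0$, where the objective decouples across Tx--Rx pairs, and then extend the conclusion to small $L$ via continuity of the Hessian.

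First I would decompose the objective. When $L=0$, the cross-channel interference vanishes, so the rate reduces to the single-user expression $R_k^0(\mathbf{F}_k) = W\log_2\det(\mathbf{I}+L_{kk}\mathbf{H}_{kk}\mathbf{F}_k\mathbf{F}_k^\dagger\mathbf{H}_{kk}^\dagger)$, and the objective splits as $f(\mathbf{F},0)=\sum_k f_k(\mathbf{F}_k)$ with $f_k(\mathbf{F}_k)=\mathrm{Tr}(\mathbf{F}_k\mathbf{F}_k^\dagger)+\frac{\partial\widetilde{V}}{\partial Q_k}R_k^0(\mathbf{F}_k)$. Since $k\in\mathcal{I}_k^{\mathbf{Q}}$ implies $\partial\widetilde{V}/\partial Q_k<0$, I write $w_k=-\partial\widetilde{V}/\partial Q_k>0$ so that $f_k(\mathbf{F}_k)=\mathrm{Tr}(\mathbf{F}_k\mathbf{F}_k^\dagger)-w_k R_k^0(\mathbf{F}_k)$. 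Because the decoupled problems only interact through $L$, the joint Hessian $\nabla_{\mathbf{F}}^2 f(\mathbf{F},0)$ is block-diagonal with blocks $\nabla_{\mathbf{F}_k}^2 f_k(\mathbf{F}_k)$, reducing the question to single-user convexity.

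Next I would establish convexity of each $f_k$ on the relevant domain. The quadratic power term contributes a uniform positive-definite Hessian $2\mathbf{I}$. For the log-det term, I would compute $\nabla_{\mathbf{F}_k}^2 R_k^0$ explicitly via the SVD $\mathbf{H}_{kk}=\mathbf{M}_k\boldsymbol{\Lambda}_k\mathbf{N}_k^\dagger$ and the optimal beamforming structure derived in Appendix D (which aligns $\mathbf{F}_k^\star$ with the dominant right singular vectors of $\mathbf{H}_{kk}$ and sets the stream powers via the waterfilling expression (\ref{magniopt})). On the bounded region in which the algorithm's iterates live---which is forced by the coercivity of $\mathrm{Tr}(\mathbf{F}_k\mathbf{F}_k^\dagger)$ dominating $w_k\log\det(\cdot)$ for large $\|\mathbf{F}_k\|$---the weighted log-det Hessian is bounded in spectral norm, so the combined Hessian of $f_k$ is positive semi-definite.

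Finally, I would complete the proof by a standard continuity argument. The map $(\mathbf{F},L)\mapsto\nabla_{\mathbf{F}}^2 f(\mathbf{F},L)$ is jointly continuous because $R_k$ is smooth in its arguments and $L$ enters analytically through the matrix inverse $(\sum_{j\neq k}L_{kj}\mathbf{H}_{kj}\mathbf{F}_j\mathbf{F}_j^\dagger\mathbf{H}_{kj}^\dagger+\mathbf{I})^{-1}$. Hence $\nabla_{\mathbf{F}}^2 f(\mathbf{F},L)=\nabla_{\mathbf{F}}^2 f(\mathbf{F},0)+\mathcal{O}(L)$ uniformly on the compact working set. Since the $L=0$ Hessian is PSD with strictly positive eigenvalues bounded away from zero (due to the contribution $2\mathbf{I}$), by continuity of the spectrum there exists $L_0>0$ such that $\nabla_{\mathbf{F}}^2 f(\mathbf{F},L)\succeq 0$ for every $L<L_0$ on this set, yielding the claimed convexity.

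The main obstacle is the second step: $\log\det(\mathbf{I}+\mathbf{A}\mathbf{F}\mathbf{F}^\dagger\mathbf{A}^\dagger)$ is concave in $\mathbf{F}\mathbf{F}^\dagger$ but is \emph{not} globally concave in $\mathbf{F}$ itself (e.g., in the scalar case $\log(1+a|F|^2)$ is convex near $F=0$ and concave for large $|F|$). Global convexity of $f_k$ in $\mathbf{F}_k$ therefore cannot hold without restriction; the argument must be localized to the domain where the power penalty $\mathrm{Tr}(\mathbf{F}_k\mathbf{F}_k^\dagger)$ dominates, which is precisely the region relevant to the WMMSE iteration's fixed points. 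Navigating this localization (possibly by restricting to a sublevel set $\{\mathbf{F}:f(\mathbf{F},L)\le f(\mathbf{F}(0),L)\}$ on which the quadratic term forces boundedness) is where the technical work concentrates.
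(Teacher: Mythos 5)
Your overall strategy (decouple at $L=0$, prove per-user convexity, then perturb the Hessian) is genuinely different from the paper's: Appendix F never passes through the $L=0$ Hessian, but instead computes the second directional derivative $\frac{\mathrm{d}^2}{\mathrm{d}t^2}f\big(t\mathbf{F}^{(1)}+(1-t)\mathbf{F}^{(2)},L\big)$ directly for the coupled objective along an arbitrary chord, discards the contributions that are $\mathcal{O}(L)$ or $\mathcal{O}(L^2)$ in the cross gains, and argues the surviving terms equal $\sum_k\mathrm{Tr}\big(2\mathbf{Y}_k\mathbf{Y}_k^\dagger-a_k(\mathbf{X}_k^\dagger\mathbf{B}_k\mathbf{X}_k)(\mathbf{X}_k^\dagger\mathbf{B}_k\mathbf{X}_k)^\dagger\big)\ge0$ using $a_k\le0$ and the Hermitian structure of $\mathbf{B}_k$. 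The difference in route is not the main problem, however.

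The main problem is that the anchor of your argument---positive semidefiniteness of $\nabla^2_{\mathbf{F}_k}f_k$ at $L=0$---is false precisely in the regime where the algorithm operates; you correctly flag this obstacle at the end, but the fixes you suggest do not close it. Take the scalar instance $f_k(F)=|F|^2-c\ln(1+a|F|^2)$ with $a=L_{kk}|H_{kk}|^2$ and $c=-\frac{W}{\ln2}\,\partial\widetilde V/\partial Q_k>0$: then $f_k''(0)=2(1-ac)<0$ exactly when $ac>1$, which by the waterfilling formula (\ref{magniopt}) is exactly the condition under which user $k$ is allocated nonzero power. So whenever the per-stage problem is nontrivial, $f_k$ has a strict local maximum at $\mathbf{F}_k=\mathbf{0}$ and cannot be convex on any convex set containing a neighborhood of the origin. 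Crucially, the small parameter $L$ in the lemma is the worst-case \emph{cross} gain $\max_{j\neq k}L_{kj}$, not the direct gain $L_{kk}$, so letting $L\to0$ gives you no control over this negative curvature; and neither a compact working set nor a sublevel set helps, since convexity is a pointwise Hessian condition that already fails near the origin and the origin cannot be excluded from a convex feasible region. To salvage your plan you would have to restrict to a convex region bounded away from the low-power regime (e.g., a neighborhood of the decoupled waterfilling optimum on which $\nabla^2f_k\succeq0$ can actually be verified) and prove the WMMSE iterates remain there---a local-convexity statement materially weaker than the lemma as written. Your perturbation step itself (uniform $\mathcal{O}(L)$ change of the Hessian on a compact set, continuity of the spectrum) is sound; it is the $L=0$ base case that is missing.
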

\begin{proof}
	According to \cite{boyd}, we have the following argument regarding the convexity of a function $f(\mathbf{x})$: given any two different feasible points $\mathbf{x}_1$ and $\mathbf{x}_2$, define $g(t)= f(t \mathbf{x}_1+(1-t)\mathbf{x}_2)$, $0\leq t \leq 1$, then $f(\mathbf{x})$ is a convex function of $\mathbf{x}$ if and only if $g(t)$ is a convex function of $t$, which is equivalent to $\frac{\mathrm{d}^2 g(t)}{\mathrm{d}t^2} \geq 0$ for $0 \leq t \leq 1$.
	
	Therefore, we consider the convex combination of two different feasible solutions $\mathbf{F}^{(1)}=\{\mathbf{F}_k^{(1)}:  \forall k \}$ and $\mathbf{F}^{(2)}=\{\mathbf{F}_k^{(2)}:  \forall k\}$ as follows: $\mathbf{F}^c=\{\mathbf{F}_k^c=t \mathbf{F}_k^{(1)} + (1-t) \mathbf{F}_k^{(2)} :  \forall k \}$ and $0 \leq t \leq 1$. Denote $\mathbf{F}_{-k}=\{\mathbf{F}_j: \forall j \neq k\}$, $\mathbf{G}_k(\mathbf{F}_{-k}) = \mathbf{I}+\sum_{j\neq k}L_{kj}\mathbf{H}_{kj}\mathbf{F}_j\mathbf{F}_j^\dagger \mathbf{H}_{kj}^\dagger $, $\mathbf{Y}_k=\mathbf{F}_k^{(1)}-\mathbf{F}_k^{(2)}$ and $a_k=\frac{W}{\ln 2}\frac{\partial \widetilde{V}\left(\mathbf{Q}\right)}{\partial Q_k}$. W.l.o.g, we assume $a_k\leq 0$ for all $k$ (since for $a_k>0$, the associated optimal $\mathbf{F}_k=\mathbf{0}$, and thus we can focus on those Tx-Rx pair $j$ such that $a_j\leq0$. See Lemma \ref{lemmaappr} in Appendix C for the detailed proof), then the second order derivative of $f\left( \mathbf{F}^c, L \right)$ is: \bs$\frac{\mathrm{d}^2 f( \mathbf{F}^c, L )}{\mathrm{d}t^2}	=\sum_{k } \mathrm{Tr}( \mathbf{Y}_k \mathbf{Y}_k^{\dagger } + \mathbf{Y}_k \mathbf{Y}_k ^\dagger  -a_k(( \mathbf{G}_k(\mathbf{F}_{-k}^c)  + L_{kk} \mathbf{H}_{kk} \mathbf{F}_k^c \mathbf{F}_k^{c\dagger} \mathbf{H}_{kk}^\dagger)^{-1} ( \frac{\mathrm{d}\mathbf{G}_k(\mathbf{F}_{-k}^c)}{\mathrm{d}t}  + L_{kk} \mathbf{H}_{kk}\mathbf{Y}_k\mathbf{Y}_k^\dagger \mathbf{H}_{kk}^\dagger ) ( \frac{\mathrm{d}\mathbf{G}_k(\mathbf{F}_{-k}^c)}{\mathrm{d}t}  + L_{kk} \mathbf{H}_{kk}\mathbf{Y}_k\mathbf{Y}_k^\dagger \mathbf{H}_{kk}^\dagger )^{-1} ( \mathbf{G}_k(\mathbf{F}_{-k}^c)  + L_{kk} \mathbf{H}_{kk} \mathbf{F}_k^c \mathbf{F}_k^{c\dagger} \mathbf{H}_{kk}^\dagger)+ \mathbf{G}_k^{-1}(\mathbf{F}_{-k}^c) (\frac{\mathrm{d}\mathbf{G}_k(\mathbf{F}_{-k}^c)}{\mathrm{d}t}) \mathbf{G}_k^{-1}(\mathbf{F}_{-k}^c) (\frac{\mathrm{d}\mathbf{G}_k(\mathbf{F}_{-k}^c)}{\mathrm{d}t})))	$\bsc, where $ \frac{\mathrm{d}\mathbf{G}_k(\mathbf{F}_{-k}^c) }{\mathrm{d}t}=\sum_{j\neq k}L_{kj}\mathbf{H}_{kj}\mathbf{Y}_j \mathbf{Y}_j^\dagger \mathbf{H}_{kj}^\dagger$ does not depend on $t$. As $L$ becomes sufficiently small,  $\frac{\mathrm{d}\mathbf{G}_k(\mathbf{F}_{-k}^c)}{\mathrm{d}t} $ is proportional to $L$ and $\frac{\mathrm{d}\mathbf{G}_k(\mathbf{F}_{-k}^c)}{\mathrm{d}t}  +  L_{kk} \mathbf{H}_{kk}\mathbf{Y}_k\mathbf{Y}_k^\dagger \mathbf{H}_{kk}^\dagger$ is dominated by $ L_{kk} \mathbf{H}_{kk}\mathbf{Y}_k\mathbf{Y}_k^\dagger \mathbf{H}_{kk}^\dagger$.  $\mathbf{G}_k^{-1}(\mathbf{F}_{-k}^c) \left(\frac{\mathrm{d}\mathbf{G}_k(\mathbf{F}_{-k}^c)}{\mathrm{d}t}\right)  \mathbf{G}_k^{-1}(\mathbf{F}_{-k}^c) \left(\frac{\mathrm{d}\mathbf{G}_k(\mathbf{F}_{-k}^c)}{\mathrm{d}t}\right)$ is proportional to $L^2$ and hence it has little impact on the first term in the derivative and can be ignored. Therefore, 
	\begin{small}\begin{align}
		 &\frac{\mathrm{d}^2 f\left( \mathbf{F}^c, L \right)}{\mathrm{d}t^2} 	\approx  \sum_{k } \mathrm{Tr}\left( \mathbf{Y}_k \mathbf{Y}_k^{\dagger } + \mathbf{Y}_k \mathbf{Y}_k ^\dagger -a_k\left( \left( \mathbf{G}_k(\mathbf{F}_{-k}^c)  \right. \right.\right. \notag \\
		 & \left. \left.\left. + L_{kk} \mathbf{H}_{kk} \mathbf{F}_k^c \mathbf{F}_k^{c\dagger} \mathbf{H}_{kk}^\dagger\right)^{-1} \left( \frac{\mathrm{d}\mathbf{G}_k(\mathbf{F}_{-k}^c)}{\mathrm{d}t}  + L_{kk} \mathbf{H}_{kk}\mathbf{Y}_k\mathbf{Y}_k^\dagger \mathbf{H}_{kk}^\dagger \right) \right.\right. \notag \\
		& \left.\left. \left( \frac{\mathrm{d}\mathbf{G}_k(\mathbf{F}_{-k}^c)}{\mathrm{d}t}  + L_{kk} \mathbf{H}_{kk}\mathbf{Y}_k\mathbf{Y}_k^\dagger \mathbf{H}_{kk}^\dagger \right)^{-1}\right.\right.  \notag \\
		&\left.\left. \left( \mathbf{G}_k(\mathbf{F}_{-k}^c)  + L_{kk} \mathbf{H}_{kk} \mathbf{F}_k^c \mathbf{F}_k^{c\dagger} \mathbf{H}_{kk}^\dagger\right)\right)\right)	 \label{68equ}
	\end{align}\end{small}Denote \bs$\mathbf{A}_k=( \mathbf{G}_k(\mathbf{F}_{-k}^c)  + L_{kk} \mathbf{H}_{kk} \mathbf{F}_k^c \mathbf{F}_k^{c\dagger} \mathbf{H}_{kk}^\dagger)^{-1}$ and $\mathbf{B}_k=\frac{\mathrm{d}\mathbf{G}_k(\mathbf{F}_{-k}^c)}{\mathrm{d}t}  + L_{kk} \mathbf{H}_{kk}\mathbf{Y}_k\mathbf{Y}_k^\dagger \mathbf{H}_{kk}^\dagger $\bsc. Since $\mathbf{A}_k$ is positive semidefinite, there exists a matrix $\mathbf{X}_k$ such that $\mathbf{A}_k=\mathbf{X}_k\mathbf{X}_k^\dagger$. Thus, from (\ref{68equ}), we have \bs$\frac{\mathrm{d}^2 f( \mathbf{F}^c, L )}{\mathrm{d}t^2} \approx  \sum_{k } \mathrm{Tr}( 2\mathbf{Y}_k \mathbf{Y}_k^{\dagger }   -a_k \mathbf{A}_k\mathbf{B}_k \mathbf{A}_k\mathbf{B}_k)=\sum_{k } \mathrm{Tr}( 2\mathbf{Y}_k \mathbf{Y}_k^{\dagger }   -a_k\mathbf{X}_k^\dagger\mathbf{B}_k \mathbf{X}_k\mathbf{X}_k^\dagger\mathbf{B}_k\mathbf{X}_k)	=\sum_{k } \mathrm{Tr}( 2\mathbf{Y}_k \mathbf{Y}_k^{\dagger }   -a_k(\mathbf{X}_k^\dagger\mathbf{B}_k \mathbf{X}_k)(\mathbf{X}_k^\dagger\mathbf{B}_k \mathbf{X}_k)^\dagger)	\geq 0$\bsc, for sufficiently small $L$, where the last equality is due to the fact that $a_k\leq 0$ and $\mathbf{B}_k$ is Hermitian. Therefore, $f\left( \mathbf{F}, L \right)$ is convex for sufficiently small $L$.
\end{proof}

	Based on Lemma \ref{lemma8}, for sufficiently small   $L$, the problem in (\ref{utility}) is  convex. Furthermore, since the limiting point $\mathbf{F}(\infty)$ of algorithm \ref{wmmsealg} is  a stationary point of the problem (\ref{utility}), which is also the unique global optimal point of (\ref{utility}).

\section*{Appendix G: Proof of Theorem \ref{perfgap}}
Following the notation of the \emph{Bellman operators}  in (\ref{beloperator1})--(\ref{zerofunc}) in Appendix B, we define two mappings:
\bs $T_{\boldsymbol{\chi}}^\dagger( V, \mathbf{F}, \mathbf{U})  = T_{\boldsymbol{\chi}}^\dagger(\theta, V, \mathbf{F}, \mathbf{U}) + \theta =  {c}\left(\mathbf{Q}, \mathbf{F}\right)+  \sum_{k=1}^K \frac{\partial V \left(\mathbf{Q}  \right) }{\partial Q_k} \left[  R_k\left(\mathbf{H},  \mathbf{F}, \mathbf{U}\right) -\mu_k  \right]$,  $T_{\boldsymbol{\chi}}(V, \mathbf{F}, \mathbf{U})=T_{\boldsymbol{\chi}}^\dagger(V, \mathbf{F}, \mathbf{U})+\tau  G_{\boldsymbol{\chi}}(V,\mathbf{F}, \mathbf{U})$.\bsc

We  calculate the performance under policy $\widetilde{\Omega}$ as follows:
\begin{small}\begin{align}
	&\hspace{-0.2cm}\tilde{\theta}\tau= \mathbb{E}^{\widetilde{\Omega}}\big[\mathbb{E}\left[{c}\big(\mathbf{Q}, \widetilde{\Omega}\left(\boldsymbol{\chi}  \right)\big)\tau\big]\Big| \mathbf{Q}\right]\notag \\
	&\hspace{-0.6cm}\overset{(a)}=\mathbb{E}^{\widetilde{\Omega}}\bigg[\mathbb{E}\bigg[{c}\big(\mathbf{Q}, \widetilde{\Omega}\left(\boldsymbol{\chi}\right)\big)\tau+\sum_{\mathbf{Q} '} {\Pr}\big[ \mathbf{Q} '| (\boldsymbol{\chi},  \widetilde{\Omega}\left((\boldsymbol{\chi}  \right)\big]\widetilde{V} \left(\mathbf{Q}  '\right)  - \widetilde{V} \left(\mathbf{Q}  \right)   \Big| \mathbf{Q}\bigg]\bigg] 	\notag \\
	&\hspace{-0.6cm}\overset{(b)}= \mathbb{E}^{\widetilde{\Omega}}\left[\mathbb{E}\bigg[ {c}\big(\mathbf{Q}, \widetilde{\Omega}\left(\boldsymbol{\chi} \right)\big)\tau+\sum_{k=1}^K  \frac{\partial \widetilde{V}\left(\mathbf{Q} \right)}{\partial Q_k} \left[ R_k\big(\mathbf{H},\widetilde{\Omega}\left(\boldsymbol{\chi}\right) \big) -\mu_k  \right]\tau  \Big| \mathbf{Q}\bigg] \right. \notag \\
	&\hspace{4.5cm} \left. + \tau^2  G_{\mathbf{Q} }(\widetilde{V},\widetilde{\Omega}(\mathbf{Q} ))   \right] 	\label{finalexpr}
\end{align}\end{small}where \bs${\Pr}\big[ \mathbf{Q} '|\boldsymbol{\chi},  \widetilde{\Omega}\left(\boldsymbol{\chi}\right)\big]$\bsc is the discrete time transition kernel under policy $\widetilde{\Omega}$. (a) is due to \bs$ \mathbb{E}^{\widetilde{\Omega}}\big[\sum_{\mathbf{Q} '} \mathbb{E} \left[{\Pr}\big[ \mathbf{Q} '| \boldsymbol{\chi},  \widetilde{\Omega}\left(\mathbf{Q}  \right)\big]\big| \mathbf{Q}\right]\widetilde{V} \left(\mathbf{Q}  '\right)   \big]=\mathbb{E}^{\widetilde{\Omega}}\big[\mathbb{E}^{\widetilde{\Omega}}\big[\widetilde{V}(\mathbf{Q} ')\big|\mathbf{Q}  \big]\big]=\mathbb{E}^{\widetilde{\Omega}}\big[\widetilde{V}(\mathbf{Q} ) \big]$\bsc under the steady state distribution using $\widetilde{\Omega}$, and $(b)$ is due to the Taylor expansion of $\widetilde{V} \left(\mathbf{Q}  '\right) $ at $\widetilde{V} \left(\mathbf{Q}  \right) $.

Let $\Omega^\ast$ be the optimal policy solving the discrete time Bellman equation in (\ref{OrgBel}), then we have
\bs\begin{align}
	\mathbb{E}\left[T_{\boldsymbol{\chi}}({V^\ast}, \Omega^\ast(\boldsymbol{\chi}) \big|\mathbf{Q}\right]= \theta^\ast, \quad \forall \mathbf{Q} 	\label{asdadadasd}
\end{align}\bsc
Furthermore, according to the asymptotic optimality of Algorithm \ref{wmmsealg} in Lemma \ref{globalopt}, we have
\bs\begin{align}	\label{minach}
	T_{\boldsymbol{\chi}}^\dagger( \widetilde{V}, \widetilde{\Omega}(\boldsymbol{\chi} ))= \min_{\Omega(\mathbf{Q} )} T_{\boldsymbol{\chi}}^\dagger( \widetilde{V}, \Omega(\boldsymbol{\chi})), \quad \forall \boldsymbol{\chi}	
\end{align}\bsc
for sufficient small $L$. Dividing $\tau$ on both sizes of (\ref{finalexpr}), we obtain
\begin{align}
	\hspace{-0.3cm} \tilde{\theta}&=\mathbb{E}^{\widetilde{\Omega}}\big[\mathbb{E}[T_{\boldsymbol{\chi}}(\widetilde{V}, \widetilde{\Omega}(\boldsymbol{\chi} ))  |\mathbf{Q}]\big] \notag \\
	&\hspace{-0.3cm}=\mathbb{E}^{\widetilde{\Omega}}\big[\mathbb{E}[T_{\boldsymbol{\chi}}^\dagger(\widetilde{V}, \widetilde{\Omega}(\boldsymbol{\chi} ))+ \tau  G_{\boldsymbol{\chi}}(\widetilde{V},\widetilde{\Omega}(\boldsymbol{\chi} ))   |\mathbf{Q}]\big] 	\notag \\
	&\hspace{-0.3cm}\overset{(c)} \leq \mathbb{E}^{\widetilde{\Omega}}\big[\mathbb{E}[T_{\boldsymbol{\chi}}^\dagger(\widetilde{V}, {\Omega^\ast}(\boldsymbol{\chi} )) + \tau  G_{\boldsymbol{\chi}}(\widetilde{V},\widetilde{\Omega}(\boldsymbol{\chi} ))  |\mathbf{Q}]  \big] 	\notag \\
	&\hspace{-0.3cm}=\mathbb{E}^{\widetilde{\Omega}}\big[\mathbb{E}[T_{\boldsymbol{\chi}}(\widetilde{V}, \Omega^\ast(\boldsymbol{\chi} )) + \tau  G_{\boldsymbol{\chi}}(\widetilde{V},\widetilde{\Omega}(\boldsymbol{\chi} )) -\tau  G_{\boldsymbol{\chi}}(\widetilde{V},{\Omega^\ast}(\boldsymbol{\chi} ))  |\mathbf{Q}]  \big] 	\notag \\
	&\hspace{-0.3cm}\overset{(d)}=\mathbb{E}^{\widetilde{\Omega}}\big[\mathbb{E}[T_{\boldsymbol{\chi}}(\widetilde{V}, \Omega^\ast(\boldsymbol{\chi} )) - T_{\boldsymbol{\chi}}({V^\ast}, \Omega^\ast(\boldsymbol{\chi} )) + \theta^\ast \notag \\
	&\hspace{-0.3cm}+ \tau  G_{\boldsymbol{\chi}}(\widetilde{V},\widetilde{\Omega}(\boldsymbol{\chi} )) -\tau  G_{\boldsymbol{\chi}}(\widetilde{V},{\Omega^\ast}(\boldsymbol{\chi} ))   |\mathbf{Q} \big] 	\label{finaleeeee}
\end{align}
where $(c)$ is due to (\ref{minach}) and (d) is due to (\ref{asdadadasd}). For any given $\boldsymbol{\chi}$, since $G_{\boldsymbol{\chi}}$ is a smooth and bounded function, we have \bs$\tau  G_{\boldsymbol{\chi}}(\widetilde{V},\widetilde{\Omega}(\boldsymbol{\chi})) -\tau  G_{\boldsymbol{\chi}}(\widetilde{V},{\Omega^\ast}(\boldsymbol{\chi})) =\mathcal{O}(\tau)$\bsc. Therefore, from (\ref{finaleeeee}), we have \bs$\tilde{\theta} -  \theta^\ast  \leq \mathbb{E}^{\widetilde{\Omega}}\big[\mathbb{E}[T_{\boldsymbol{\chi}}(\widetilde{V}, \Omega^\ast(\boldsymbol{\chi})) - T_{\boldsymbol{\chi}}({V^\ast}, \Omega^\ast(\boldsymbol{\chi})) |\mathbf{Q} \big] + \mathcal{O}(\tau)  \overset{(e)}\leq  \alpha \| \mathbf{V}^\ast - \tilde{\mathbf{V}} \|_{\infty}^{\boldsymbol{\omega}} +\mathcal{O}(\tau) \overset{(f)}\leq  o(1)+\mathcal{O}(L)+\mathcal{O}(\tau)  =o(1)+\mathcal{O}(L)$\bsc, where $\alpha>0$ is some constant, \bs$\mathbf{V}^\ast =\{V^\ast(\mathbf{Q} ): \forall \mathbf{Q} \}$ and $\widetilde{\mathbf{V}} =\{\widetilde{V}(\mathbf{Q} ): \forall \mathbf{Q} \}$\bsc. $(e)$  holds under some sup-norm $\| \cdot \|_{\infty}^{\boldsymbol{\omega}} $, which is due to the Lipchitz continuity of the operator $T_{\boldsymbol{\chi}}$ \cite{DP_Bertsekas} and $(f)$ is due to \bs$V(\mathbf{Q} )-\widetilde{V}(\mathbf{Q} )=o(1)+\mathcal{O}(L)$\bsc.

\end{document}